\setlist{nolistsep}
\newcommand{\argmin}{\operatornamewithlimits{argmin}}
\newcommand{\circR}{\operatornamewithlimits{circ}}
\newcommand{\sign}{\operatornamewithlimits{sign}}
\newcommand{\Tr}{\operatornamewithlimits{Tr}}
\newcommand{\bR}{\mathbf{R}}
\newcommand{\br}{\mathbf{r}}
\newcommand{\bD}{\mathbf{D}}
\newcommand{\bA}{\mathbf{A}}
\newcommand{\bx}{\mathbf{x}}
\newcommand{\by}{\mathbf{y}}
\newcommand{\bu}{\mathbf{u}}
\newcommand{\bv}{\mathbf{v}}
\newcommand{\bz}{\mathbf{z}}
\newcommand{\bb}{\mathbf{b}}
\newcommand{\ba}{\mathbf{a}}
\newcommand{\be}{\mathbf{e}}
\newcommand{\Bf}{\mathbf{f}}
\newcommand{\bp}{\mathbf{p}}
\newcommand{\bq}{\mathbf{q}}
\newcommand{\calN}{\mathcal{N}}
\newcommand{\calS}{\mathcal{S}}
\newcommand{\E}{\mathbb{E}}
\newcommand{\real}{\mathbb{R}}
\newcommand{\norm}[1]{\lVert #1 \rVert}
\newcommand{\spn}[1]{\text{span}\{ #1 \}}
\newcommand{\iprod}[1]{\langle #1 \rangle}
\renewcommand{\[}{\begin{equation}}
\renewcommand{\]}{\end{equation}}
\newcommand{\circu}[1]{\mathcal{C}_{#1}}
\begin{document}

\title{On Binary Embedding using Circulant Matrices}

\author{\name Felix X. Yu$^{1,2}$ \email felixyu@google.com \\
        \name Aditya Bhaskara$^{1}$ \email bhaskaraaditya@gmail.com \\
        \name Sanjiv Kumar$^{1}$ \email sanjivk@google.com \\
        \name Yunchao Gong$^{3}$ \email yunchao@cs.unc.edu  \\        
        \name Shih-Fu Chang$^{2}$ \email sfchang@ee.columbia.edu \\
        Google Research, New York, NY 10011\\
        Columbia University, New York, NY 10027\\
		Snapchat, Inc., Venice, CA 90291
        }
\editor{}

\maketitle

\begin{abstract}
Binary embeddings provide efficient  and powerful ways to perform operations on large scale data.  However binary embedding typically requires long codes in order to preserve the discriminative power of the input space. Thus binary coding methods traditionally suffer from high computation and storage costs in such a scenario. To address this problem, we propose Circulant Binary Embedding (CBE) which generates binary codes by projecting the data with a circulant matrix. The circulant structure allows us to use Fast Fourier Transform algorithms to speed up the computation. For obtaining $k$-bit binary codes from $d$-dimensional data, our method improves the 
time complexity from $\mathcal{O}(dk)$ to $\mathcal{O}(d\log{d})$, and the space complexity from $\mathcal{O}(dk)$ to $\mathcal{O}(d)$.

We study two settings, which differ in the way we choose the parameters of the circulant matrix.  In the first, the parameters are chosen randomly and in the second, the parameters are learned using the data. For randomized CBE, we give a theoretical analysis comparing it with binary embedding using an unstructured random projection matrix.  The challenge here is to show that the dependencies in the entries of the circulant matrix do not lead to a loss in performance.  In the second setting, we design a novel time-frequency alternating optimization to learn data-dependent circulant projections, which alternatively minimizes the objective in original and Fourier domains. In both the settings, we show by extensive experiments that the CBE approach gives much better performance than the state-of-the-art approaches if we fix a running time, and provides much faster computation with negligible performance degradation if we fix the number of bits in the embedding.

\end{abstract}
\begin{keywords}
Circulant Matrix, Dimension Reduction, Binary Embedding \\ ~\\
{\bf Note.} A preliminary version of this article with the first, third, fourth and fifth authors appeared in the Proceedings of ICML 2014.
\end{keywords}

\section{Introduction}\label{sec:intro}
Sketching and dimensionality reduction have become powerful and ubiquitous
tools in the analysis of large high-dimensional datasets, with
applications ranging from computer vision, to biology, to finance.
The celebrated Johnson-Lindenstrauss lemma says that projecting high
dimensional points to a random $\mathcal{O}(\log N)$-dimensional space
approximately preserves all the pairwise distances between a set of
$N$ points, making it a powerful tool for nearest neighbor search,
clustering, etc.  This started the paradigm of designing low
dimensional {\em sketches} (or embeddings) of high dimensional data
that can be used for efficiently solving various information retrieval
problems.

More recently, {\em binary} embeddings (or embeddings into $\{0,1\}^k$
or $\{-1,1\}^k$) have been developed for problems in which we care
about preserving the {\em angles} between high dimensional
vectors~\cite{li2011hashing, gonglearning, raginsky2009locality,
gong2012angular, liu2011hashing}.  The main appeal of binary
embeddings stems from the fact that storing them is often much more
efficient than storing real valued embeddings. Furthermore, operations
such as computing the Hamming distance in binary space can be
performed very efficiently either using table lookup, or
hardware-implemented instructions on modern computer architectures.

In this paper, we study binary embeddings of high-dimensional data.
Our goal is to address one of its main challenges:
even though binary embeddings are easy to manipulate, it
has been observed that obtaining high accuracy results {\em
requires} the embeddings to be rather long when the data is high
dimensional~\cite{li2011hashing, gonglearning, sanchez2011high}. Thus in applications like computer
vision, biology and finance (where high dimensional data is common),
the task of computing the embedding is a bottleneck.  The natural algorithms have time and space
complexity $\mathcal{O}(dk)$ per input point in order to produce a $k$-bit
embedding from a $d$-dimensional input.  Our main contribution in this work is to
improve these complexities to $\mathcal{O}(d \log d)$ for time and $\mathcal{O}(d)$ for space
complexity.

Our results can be viewed as binary analogs of the recent work on {\em
fast} Johnson-Lindenstrauss transform.  Starting with the work of
Ailon and Chazelle~\cite{ailon2006approximate}, there has been a lot
of beautiful work on fast algorithms for dimension reduction with the
goal of preserving pairwise distances between points.  Various
aspects, such as exploiting sparsity, and using structured matrices to
reduce the space and time complexity of dimension reduction, have been
explored~\cite{ailon2006approximate, matouvsek2008variants, liberty2008dense}.  
But the key difference in our setting is
that binary embeddings are {\em non-linear}.  This makes the analysis
tricky when the projection matrices do not have independent entries.
Binary embeddings are also better suited to approximate the angles
between vectors (as opposed to distances). Let us see why.

The general way to compute a binary embedding of a data point
$\bx \in \real^d$ is to first apply a linear transformation $\bA \bx$
(for a $k \times d$ matrix $\bA$), and then apply a {\em binarization}
step.  We consider the natural binarization of taking the sign.  Thus,
for a point $\bx$, the binary embedding into $\{-1,1\}^d$ we consider
is
\begin{equation}
h(\mathbf{x}) = \text{sign} (\bA \mathbf{x}),
\label{eq:lsh_intro}
\end{equation}
where $\bA \in \real^{k \times d}$ as above, and $\text{sign}(\cdot)$ is
a binary map which returns element-wise sign\footnote{A few methods
transform the linear projection via a nonlinear map before taking the
sign \cite{weiss2008spectral, raginsky2009locality}.}. How should one
pick the matrix $\bA$? One natural choice, in light of the
Johnson-Lindenstrauss lemma, is to pick it randomly, i.e., each entry
is sampled from an independent Gaussian. This {\em data oblivious} choice is
well studied~\cite{charikar2002similarity, raginsky2009locality}, and
has the nice property that for two data vectors $\bx, \by$, the
$\ell_1$ distance between their embeddings is proportional to the
angle between $\bx$ and $\by$, in expectation (over the random entries
in $\bA$). This is a consequence of the fact that for any
$\bx, \by \in \real^d$, if $\br$ is drawn from $\calN(0,1)^d$,
\[\Pr[ \sign{\iprod{\bx, \br}} = \sign{\iprod{\by, \br}}] = \frac{\angle(\bx, \by)}{\pi}.     \]

Other data oblivious methods have also been studied in the literature,
by choosing different distributions for the entries of $\bA$. While
these methods do reasonably well in practice, the natural question is
if adapting the matrix to the data allows us to use shorter codes
(i.e., have a smaller $k$) while achieving a similar error. A number
of such data-dependent techniques have been proposed with different
optimization criteria such as reconstruction
error \cite{Kulislearningto}, data dissimilarity \cite{Norouzi11,
weiss2008spectral}, ranking loss \cite{norouzi2012Hamming},
quantization error after PCA \cite{gongiterative}, and pairwise
misclassification \cite{wang2010sequential}. As long as data is
relatively low dimensional, these methods have been shown to be quite
effective for learning compact codes.
 
However, the $\mathcal{O}(kd)$ barrier on the space and time complexity barrier
prevents them from being applied with very high-dimensional data. For
instance, to generate 10K-bit binary codes for data with 1M
dimensions, a huge projection matrix will be required needing tens of
GB of memory.\footnote{In the oblivious case,
one can generate the random entries of the matrix on-the-fly (with
fixed seeds) without needing to store the matrix, but this increases
the computational time even further.}

In order to overcome these computational
challenges, \cite{gonglearning} proposed a {\em bilinear projection}
based coding method. The main idea here is to reshape the input vector
$\mathbf{x}$ into a matrix $\mathbf{Z}$, and apply a bilinear
projection to get the binary code:
\begin{eqnarray}
h(\mathbf{x}) = \text{sign} (\mathbf{R}_1^T \mathbf{Z} \mathbf{R}_2).
\end{eqnarray}
When the shapes of $\mathbf{Z}, \mathbf{R}_1, \mathbf{R}_2$ are chosen
appropriately\footnote{Specifically,
$\mathbf{Z} \in \mathbb{R}^{\sqrt{d} \times \sqrt{d}}$,
$\mathbf{R}_1, \mathbf{R}_2 \in \mathbb{R}^{\sqrt{k} \times \sqrt{d}}$.},
the method has time and space complexities $\mathcal{O}(d\sqrt{k})$
and $\mathcal{O}(\sqrt{dk})$ respectively.  Bilinear codes make it
feasible to work with datasets of very high dimensionality and have
shown good results for a variety of tasks.

\subsection{Our results}
In this work, we propose a novel technique, called Circulant Binary
Embedding (CBE), which is even faster than the bilinear coding. The
main idea is to impose a {\em circulant} (described in detail in
Section~\ref{sec:cbe}) structure on the projection matrix $\bA$ in
(\ref{eq:lsh_intro}).  This special structure allows us to compute the
product $\bA\bx$ in time $\mathcal{O}(d \log d)$ using the Fast Fourier Transform
(FFT), a tool of great significance in signal processing. The space
complexity is also just $\mathcal{O}(d)$, making it efficient even for very high
dimensional data.
Table \ref{table:methods} compares the time and space complexity for
the various methods outlined above.

Given the efficiency of computing the CBE, two natural questions
arise: how good is the obtained embedding for various information
retrieval tasks? and how should we pick the parameters of the
circulant $\bA$?
 
In Section~\ref{sec:rand}, we study the first question for {\em
random} CBE, i.e., when the parameters of the circulant are picked
randomly (independent Gaussian, followed by its shifts). Specifically,
we analyze the {\em angle estimating} property of binary embeddings
(Eq.\eqref{eq:lsh_intro}), which is the basis for its use in
applications. Under mild assumptions, we show that using a random
circulant $\bA$ has the same qualitative guarantees as using fully
random $\bA$. These results provide some of the few theoretical guarantees we
are aware of, for non-linear circulant-based embeddings. We defer the
formal statements of our results to Section~\ref{sec:rand},
Theorems~\ref{thm:variance-cbe} and~\ref{thm:jl-large-angle}. We note that in independent and very recent work, Choromanska et al.~\cite{Choromanska15} obtain a qualitatively similar analysis of CBE, however the bounds are incomparable to ours.

In Section~\ref{sec:opt}, we study the second question, i.e., learning
data-dependent circulant matrices. We propose a novel and efficient
algorithm, which alternatively optimizes the objective in the original
and frequency domains.

Finally in Section~\ref{sec:exp}, we study the empirical performance
of circulant embeddings via extensive experimentation. Compared to the
state-of-the-art, our methods improve the performance dramatically for
a fixed computation time. If we instead fix the number of bits in the
embedding, we observe that the performance degradation is negligible,
while speeding up the computation many-fold (see
Section~\ref{sec:exp}).

\begin{table}
\begin{center}
\begin{tabular}{l|l|l|l}
\hline Method & Time   & Space  & Time (optimization)\\ 
\hline   Unstructured      & $\mathcal{O}(dk)$ & $\mathcal{O}(dk)$ & $\mathcal{O}(Nd^2k)$\\ 
\hline   Bilinear  & $\mathcal{O}(d\sqrt{k})$ &  $\mathcal{O}(\sqrt{dk})$ &    $\mathcal{O}(Nd\sqrt{k})$\\ 
\hline   Circulant ($k \leq d$) & $\mathcal{O}(d\log{d})$ & $\mathcal{O}(d)$  & $\mathcal{O}(Nd\log{d})$\\ 
\hline   Circulant ($k > d$) & $\mathcal{O}(k\log{d})$ & $\mathcal{O}(k)$  & $\mathcal{O}(Nk\log{d})$\\ 
\hline 
\end{tabular}
\end{center}
\caption{Comparison of the time and space complexities. 
$d$ is the input dimensionality, and $k$ is the output dimensionality
(number of bits). $N$ is the number of instances used for learning
data-dependent projection matrices.  See Section \ref{subsec:kned} for
discussions on $k < d$ and $k > d$.  }
\label{table:methods}
\end{table}

\section{Background and related work}\label{sec:related}
The lemma of Johnson and Lindenstrauss~\cite{johnson1984extensions} is
a fundamental tool in the area of sketching and dimension
reduction. The lemma states that if we have $N$ points in
$d$-dimensional space, projecting them to an $\mathcal{O}(\log N)$ dimensional
space (independent of $d$!) preserves all pairwise distances.
Formally,

\begin{lemma}[Johnson Lindenstrass lemma]
\label{lemma:jl_construct}
Let $S$ be a set of $N$ points in $\mathbb{R}^d$. Let
$\mathbf{A} \in \mathbb{R}^{k \times d}$ be a matrix whose entries are
drawn i.i.d from $\mathcal{N}(0,1)$.  Then with probability at least
$1 - 2N^2 e^{-(\epsilon^2-\epsilon^3 ) k / 4}$
\[
	(1 - \epsilon) \norm{\bx - \by}_2 \leq
	\frac{1}{\sqrt{k}} \norm{\mathbf{A}(\bx - \by)}_2 
	\leq (1 + \epsilon) \norm{\bx - \by}_2
\]
for any $\bx, \by \in S$.
\end{lemma}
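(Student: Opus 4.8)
The plan is to follow the classical argument of Dasgupta and Gupta: reduce the statement about all pairs to a concentration statement about a single vector, identify the distribution of the projected squared length as chi-squared, control its tails by an exponential moment bound, and finally take a union bound over all pairs.

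First I would fix a pair $\bx, \by \in S$ and set $\bu = \bx - \by$. Since $\mathbf{A}$ acts linearly, $\mathbf{A}(\bx-\by) = \mathbf{A}\bu$, so by homogeneity it suffices to understand the distribution of $\norm{\mathbf{A}\hat\bu}_2$ for the unit vector $\hat\bu = \bu/\norm{\bu}_2$. Writing $\mathbf{a}_1,\dots,\mathbf{a}_k$ for the rows of $\mathbf{A}$, the $i$-th coordinate of $\mathbf{A}\hat\bu$ equals $\iprod{\mathbf{a}_i, \hat\bu}$. Because the entries of $\mathbf{a}_i$ are i.i.d.\ $\calN(0,1)$ and $\hat\bu$ has unit norm, a linear combination of independent standard Gaussians with unit-norm coefficients is again $\calN(0,1)$; hence the $k$ coordinates are independent $\calN(0,1)$ variables. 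Consequently $Q := \norm{\mathbf{A}\hat\bu}_2^2 = \sum_{i=1}^k \iprod{\mathbf{a}_i,\hat\bu}^2$ is a chi-squared random variable with $k$ degrees of freedom, so $\E[Q]=k$ and the desired conclusion becomes the two-sided concentration statement $Q \in [(1-\epsilon)^2 k,\, (1+\epsilon)^2 k]$.

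Next I would prove the tail bounds for $Q$ by the Chernoff (moment-generating-function) method. Using $\E[e^{tZ^2}] = (1-2t)^{-1/2}$ for a standard Gaussian $Z$ and independence across coordinates, I would bound $\Pr[Q \ge (1+\epsilon)k] \le \min_{t>0} (1-2t)^{-k/2} e^{-t(1+\epsilon)k}$, and symmetrically bound $\Pr[Q \le (1-\epsilon)k]$ by the analogous expression optimized over $t<0$. Carrying out the optimization gives $\Pr[Q \ge (1+\epsilon)k] \le \bigl((1+\epsilon)e^{-\epsilon}\bigr)^{k/2}$ and a matching expression for the lower tail; invoking the elementary inequality $\ln(1+\epsilon) \le \epsilon - (\epsilon^2-\epsilon^3)/2$ (and its counterpart for $1-\epsilon$) shows each tail is at most $e^{-(\epsilon^2-\epsilon^3)k/4}$ for $\epsilon \in (0,1)$. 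Since $(1-\epsilon)^2 \le 1-\epsilon$ and $(1+\epsilon)^2 \ge 1+\epsilon$ on this range, the event that $Q$ escapes $[(1-\epsilon)^2 k,(1+\epsilon)^2 k]$ is contained in the union of these two tails, so a single pair fails with probability at most $2e^{-(\epsilon^2-\epsilon^3)k/4}$.

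Finally I would union bound over the at most $\binom{N}{2} < N^2/2$ distinct pairs in $S$: the probability that the guarantee fails for some pair is at most $2\binom{N}{2} e^{-(\epsilon^2-\epsilon^3)k/4} \le N^2 e^{-(\epsilon^2-\epsilon^3)k/4} \le 2N^2 e^{-(\epsilon^2-\epsilon^3)k/4}$, which upon complementation yields the stated success probability. The step I expect to be the main obstacle is the concentration estimate for $Q$: extracting precisely the exponent $(\epsilon^2-\epsilon^3)/4$ requires a careful optimization of the chi-squared moment generating function together with the right logarithmic inequalities, and one must also handle the passage from the norm-form bound $(1\pm\epsilon)$ to the squared-form bound $(1\pm\epsilon)^2$ via the monotonicity observation above. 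The remaining ingredients, namely Gaussian rotation invariance and the union bound, are routine.
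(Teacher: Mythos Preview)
Your argument is the standard Dasgupta--Gupta proof and is correct. However, the paper does not actually prove Lemma~\ref{lemma:jl_construct}: it is stated in Section~\ref{sec:related} as a classical background result, with a citation to~\cite{johnson1984extensions}, and no proof is given. So there is no ``paper's own proof'' to compare against; your proposal simply supplies the omitted argument, and it does so along the expected lines (rotation invariance reduces to a chi-squared concentration, Chernoff on the MGF yields the $(\epsilon^2-\epsilon^3)/4$ exponent, then a union bound over the $\binom{N}{2}$ pairs).
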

When $k = \mathcal{O}(\log N/\epsilon^2)$, the probability above can be made
arbitrarily close to $1$. Due to the simplicity and theoretical
support, random projection based dimensionality reduction has been
applied in broad applications including approximate nearest neighbor
research \cite{indyk1998approximate}, dimensionality reduction in
databases \cite{achlioptas2003database}, and bi-Lipschitz embeddings
of graphs into normed spaces \cite{frankl1988johnson}.

However a serious concern in a few applications is the dependence of
$k$ on the accuracy ($\mathcal{O}(1/\epsilon^2)$). The space and time complexity
of dimension reduction are $\mathcal{O}(kd)$, if the computation is done in the
natural way. Are there faster methods when $k$ is reasonably large?
As mentioned earlier, the line of work starting
with~\cite{ailon2006approximate} aims to improve the time and space
complexity of dimension reduction.  This led to work showing
Johnson-Lindenstruss-type guarantees with {\em structured} matrices
(with some randomness), including Hadamard matrices along with a sparse
random Gaussian matrix \cite{ailon2006approximate}, sparse
matrices \cite{matouvsek2008variants}, and Lean Walsh
Transformations \cite{liberty2008dense}. The advantage of using
structured matrices is that the space and computation cost can be
dramatically reduced, yet the distance preserving property remains to
be competitive.

In this context, randomized circulant matrices (which are also the
main tool in our work) have been studied, starting with the
works~\cite{hinrichs2011johnson, vybiral2011variant}.  The dimension
reduction comprises of random sign flips followed by multiplication by
a randomized circulant matrix. For $d$-dimensional input, reducing the
dimension to $k$ for $k<d$ has time complexity $\mathcal{O}(d \log d)$ and space
complexity $\mathcal{O}(d)$, independent of $k$.
Proving bounds similar to Lemma~\ref{lemma:jl_construct} turns out to
be much more challenging because the entries of the projection matrix
are now highly dependent, and thus concentration bounds are hard to
prove.
The first analysis~\cite{hinrichs2011johnson} showed that reducing to
$\mathcal{O}(\log^3 N/\epsilon^2)$ dimensions (compared to $\mathcal{O}(\log
N/\epsilon^2)$ in Lemma~\ref{lemma:jl_construct}) preserves all
pairwise distances with high probability. This was improved to
$\mathcal{O}(\log^2 N/\epsilon^2)$ in \cite{vybiral2011variant}, and furthermore
to $\mathcal{O}(\log^{(1 + \delta)} N/\epsilon^2)$ in \cite{zhang2013new}, using
matrix-valued Bernstein inequalities. These works provide the
motivation for our theoretical results, however the key difference for
us is the {\em binarization} step, which is highly non-linear. Thus we
need to develop new machinery for our analysis.

\paragraph{Binary embeddings.}
Recently, structured matrices used in the context of the fast JL
transform (a combination of Hadamard and sparse random Gaussian
matrices) have also been studied for binary
embedding~\cite{dasgupta2011fast}, and more
recently~\cite{yi2015binary}. In particular, \cite{yi2015binary}
showed that the method can achieve $\epsilon$ distance preserving
error with $\mathcal{O}(\log N/\epsilon^2)$ bits and $\mathcal{O}(d \log d)$ computational
complexity, for $N$ points ($N \ll \epsilon \sqrt{d}$).
In this work, we study the application of using the circulant matrix
for binary embedding. The work extends and provides theoretical
justification for our previous conference paper on this
topic \cite{icml14_cbe}.

The idea of using structured matrices to speed up linear projection
has also be exploited under the settings of deep neural
networks \cite{arxiv_circulant_nn, yang2014deep}, and kernel
approximation \cite{arxiv_cnm, le2013fastfood} .

\section{Circulant Binary Embedding}\label{sec:cbe}
Let us start by describing our framework and setting up the notation
that we use in the rest of the paper.

\subsection{The Framework}
\label{subsec:framework}
We will now describe our algorithm for generating $k$-bit binary codes
from $d$-dimensional real vectors. We start by discussing the case
$k=d$ and move to the general case in Section~\ref{subsec:kned}.  The
key player is the circulant matrix, which is defined by a real vector
%
%
$\br = (r_0, r_1, \cdots, r_{d-1})^T$ \cite{gray2006toeplitz}.
\begin{align}
\circu{\br} :=
\begin{bmatrix}
r_0     & r_{d-1} & \dots  & r_{2} & r_{1}  \\
r_{1} & r_0    & r_{d-1} &         & r_{2}  \\
\vdots  & r_{1}& r_0    & \ddots  & \vdots   \\
r_{d-2}  &        & \ddots & \ddots  & r_{d-1}   \\
r_{d-1}  & r_{d-2} & \dots  & r_{1} & r_{0}
\end{bmatrix}.
\label{eq:cir}
\end{align}

Let $\mathbf{D}$ be a diagonal matrix with each diagonal entry
$\sigma_i$, $i = 0, \cdots, d-1$, being a Rademacher variable ($\pm 1$
with probability 1/2):
\begin{align}
\mathbf{D} =
\begin{bmatrix}
\sigma_0     &  &   &  &   \\
 & \sigma_1    &  &    &   \\
  & & \sigma_2    &   &    \\
  &  &  &  \ddots &    \\
  & &   &  & \sigma_{d-1}
\end{bmatrix}.
\end{align}
For $\mathbf{x} \in \mathbb{R}^d$, its $d$-bit Circulant Binary
Embedding (CBE) with $\mathbf{r} \in \mathbb{R}^{d}$ is defined as:
\begin{align}
h(\mathbf{x}) = \text{sign} (\circu{\br} \mathbf{D}  \mathbf{x}),
\label{eq:def}
\end{align}
where $\circu{\br}$ is defined as above. Note that applying
$\mathbf{D}$ to $\mathbf{x}$ is equivalent to applying a random sign
flip to each coordinate of $\mathbf{x}$. The necessity of such an
operation is discussed in the introduction of Section \ref{sec:rand}.
Since sign flipping can be carried out as a preprocessing step for
each input $\mathbf{x}$, here onwards for simplicity we will drop
explicit mention of $\mathbf{D}$. Hence the binary code is given as
$h(\mathbf{x}) = \text{sign} (\circu{\br} \mathbf{x})$.

\subsection{Computational Complexity}
\label{subsec:space_time}
The main advantage of a circulant based embedding is that it can be computed
quickly using the Fast Fourier Transform (FFT). The following is a
folklore result, whose proof we include for completeness.

\begin{proposition}
For a $d$-dimensional vector $\bx$ and any $\br \in \Re^d$, the $d$-bit CBE
$\sign (\mathcal{C}_r (\bD \bx))$ can be computed using $\mathcal{O}(d)$ space and $\mathcal{O}(d \log d)$
time.\label{prop:time}
\end{proposition}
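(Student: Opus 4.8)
The plan is to reduce the entire computation to a single circular convolution and then invoke the standard FFT-based convolution algorithm. First I would recall the key structural fact about circulant matrices: for the circulant matrix $\circu{\br}$ defined in \eqref{eq:cir}, the product $\circu{\br}\bx$ is exactly the circular (cyclic) convolution of the vectors $\br$ and $\bx$. Concretely, the $i$-th entry of $\circu{\br}\bx$ equals $\sum_{j=0}^{d-1} r_{(i-j)\bmod d}\, x_j$, which is the definition of circular convolution. This is the observation that makes the whole proposition go through, so I would state it carefully, perhaps verifying it directly against the matrix layout in \eqref{eq:cir} for one or two rows.

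Next I would apply the convolution theorem, which says that circular convolution becomes pointwise (Hadamard) multiplication under the discrete Fourier transform: $\mathcal{F}(\circu{\br}\bx) = \mathcal{F}(\br)\odot\mathcal{F}(\bx)$, where $\mathcal{F}$ denotes the DFT and $\odot$ is entrywise product. Hence $\circu{\br}\bx = \mathcal{F}^{-1}\bigl(\mathcal{F}(\br)\odot\mathcal{F}(\bx)\bigr)$. The algorithm is then immediate: compute $\mathcal{F}(\bx)$ and $\mathcal{F}(\br)$ via the FFT, multiply the two length-$d$ vectors coordinatewise, apply the inverse FFT, and finally take the element-wise sign. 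I would account for the diagonal sign-flip $\bD$ as a trivial $\mathcal{O}(d)$ preprocessing pass over the coordinates of $\bx$ (as the framework already notes, this is a per-coordinate $\pm 1$ multiplication), so it does not affect the stated bounds.

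For the complexity accounting, I would tally each step. The FFT and inverse FFT each run in $\mathcal{O}(d\log d)$ time and use $\mathcal{O}(d)$ working space; the pointwise multiplication and the final $\sign(\cdot)$ are $\mathcal{O}(d)$ time and $\mathcal{O}(d)$ space; and the sign flip by $\bD$ is $\mathcal{O}(d)$ as well. Summing these, the dominant term is the FFT cost, yielding overall time $\mathcal{O}(d\log d)$ and space $\mathcal{O}(d)$, matching the claim. One should also note that $\mathcal{F}(\br)$ depends only on $\br$ and can be precomputed once, though for a single embedding it is already within budget.

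Since this is a folklore result, I do not anticipate a genuine mathematical obstacle; the only points requiring care are bookkeeping ones. The main thing to get right is the precise index convention in identifying $\circu{\br}\bx$ with a convolution, because the matrix in \eqref{eq:cir} uses descending indices across each row, so one must confirm the correct sign of the index offset in $r_{(i-j)\bmod d}$ rather than $r_{(j-i)\bmod d}$. A secondary subtlety is ensuring the FFT is taken over the correct field and of the correct length $d$ (a same-length cyclic convolution, not a zero-padded linear one), so that the convolution theorem applies exactly and no extra factor enters the size. Once these conventions are pinned down, the complexity bounds follow directly from the standard FFT guarantees.
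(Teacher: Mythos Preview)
Your proposal is correct and is essentially the same argument as the paper's: both reduce the matrix--vector product to DFT, pointwise multiplication, and inverse DFT, then invoke the standard $\mathcal{O}(d\log d)$ FFT bound. The only cosmetic difference is that the paper states the diagonalization identity $\circu{\br} = \mathcal{F}_d^{-1}\,\text{diag}(\mathcal{F}_d\br)\,\mathcal{F}_d$ directly rather than first interpreting $\circu{\br}\bx$ as a circular convolution and then invoking the convolution theorem, but these are two phrasings of the same fact.
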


\begin{proof}
The space complexity comes only from the storage of the vector $\br$
and the signs $\bD$ (which amount to $\mathcal{O}(d)$). We never need to store
the full matrix $\circu{\br}$ explicitly.

The main property of a circulant matrix is that for any vector $\by \in \bR^d$,
we can compute $\circu{\br}\by$ in time $\mathcal{O}(d \log d)$.  This is because
\begin{equation} \circu{\br} = \mathcal{F}_d^{-1} ~\text{diag}(\mathcal{F}_d \br)~ \mathcal{F}_d, \label{eq:circ-diagonalization}
\end{equation}
where $\mathcal{F}_d$ is the matrix corresponding to the Discrete Fourier Transform (DFT)
of periodicity $N$, i.e., whose $(i,j)$th entry is given by
\[ \mathcal{F}_d (i,j) = \omega^{ij}, \]
where $\omega$ is the $N$th root of unity $e^{-2\pi \iota/N}$.  The celebrated Fast Fourier Transform algorithm~\cite{oppenheim1999discrete} says that for any $\bz \in \bR^d$, we can compute $\mathcal{F}_d \bz$ and $\mathcal{F}_d^{-1} \bz$ in time $\mathcal{O}(d \log d)$, using $\mathcal{O}(d)$ space. This immediately implies that we can compute $\mathcal{C}_\br \by$ within the same space and time complexity bounds.
\end{proof}

\subsection{Generalizing to $k \ne d$}
\label{subsec:kned}
The computation above assumed that number of bits we produce ($k$) is
equal to the input dimension. Let us now consider the general case.

When $k<d$, we still use the circulant matrix $\bR \in \mathbb{R}^{d
  \times d}$ with $d$ parameters, but the output is set to be the
first $k$ elements in (\ref{eq:def}). This is
equivalent to the operation
\[ \Phi(\bx) := \sign( \circu{\br, k} \bD \bx ),\]
where $\circu{\br, k}$ the so-called {\em partial circulant matrix},
which is $\circu{\br}$ truncated to $k$ columns.  We note that CBE with $k < d$ is not computationally more efficient than that with $k=d$.

When $k>d$, using a single $\br$ causes repetition of bits, so we
propose using $\circu{\br}$ for multiple $\br$, and concatenating
their output. This gives the computational complexity $\mathcal{O}(k \log d)$,
and space complexity $\mathcal{O}(k)$.  Note that as the focus of this paper is
on binary embedding on high-dimensional data, from here onwards, we
assume $k \leq d$. The $k > d$ case is useful in other applications
such as neural network \cite{iccv15_circulant} and kernel
approximation \cite{arxiv_cnm}.

\subsection{Choosing the Parameters $\br$}
We have presented the general framework as well as its space and
computation efficiency in this section. One critical question left
unanswered is how to decide the parameter $\br$.  As mentioned in the
introduction, we consider two solutions. In Section \ref{sec:rand}, we
study the randomized version, where each element of $\br$ is
independently sampled from a unit Gaussian distribution. This is
inspired by the popular Locality Sensitive Hashing (simhash) approach.
 Section \ref{sec:opt} introduces an optimized version, where the
parameters are optimized based on training data and an distance
preserving objective function.

\section{Randomized CBE -- A Theoretical Analysis}\label{sec:rand}
We now analyze the angle preserving properties of CBE when the
circulant matrix used is generated from a random $d$-dimensional
vector.  Formally, we consider the partial circulant matrix
$\circu{\br, k}$, for $\br \sim \calN(0,1)^d$. The embedding we
consider for an $\bx \in \real^d$ is given by
\[ \Phi(\bx) := \sign(\circu{\br, k} \bD \bx). \]
As before, $\bD$ is a diagonal matrix of signs. Hence the embedding
uses $2d$ independent `units' of randomness.

Now, for any two vectors $\bx, \by \in \real^d$, we have that
\begin{equation}
\E \left[ \frac{1}{2k} \norm{\Phi(\bx)- \Phi(\by)}_1 \right] = \frac{\angle(\bx, \by)}{\pi},\label{eq:expectation-angle}
\end{equation}
implying that the random variable $(1/2k) \norm{\Phi(\bx)
- \Phi(\by)}_1 $ provides an estimate for $\theta/\pi$, where $\theta
:= \angle(\bx, \by)$.

We present two main results. In the first, we bound the variance of
the above angle estimate for given $\bx, \by$. We compare with the variance in the {\em
fully independent} case, i.e., when we consider the embedding
$\sign(\bA \bx)$, where $\bA$ is a $k \times d$ matrix with all entries
being independent (and unit normal).  In this case, the variance of
the estimator in Eq.~\eqref{eq:expectation-angle} is equal to
$\frac{1}{k} \frac{\theta}{\pi}\left( 1- \frac{\theta}{\pi}\right)
$.\footnote{We are computing the variance of an average of
i.i.d. Bernoulli random variables which take value $1$ with
probability $p = \theta/\pi$.}

We show that using a circulant matrix instead of $A$ above has a
similar dependence on $k$, as long as the vectors are {\em well
spread}.  Formally,

\begin{theorem}
\label{thm:variance-cbe}
Let $\bx, \by \in \mathbb{R}^d$, such that
$\max\{ \norm{\bx}_\infty/\norm{\bx}_2, \norm{\by}_\infty/{\norm{\by}_2} \} \leq \rho$,
for some parameter $\rho < 1$, and set $\theta = \angle(\bx, \by)$.  The
variance of the averaged hamming distance of $k$-bit code generated by
randomized CBE is
\begin{equation}
	var\left[\frac{1}{2k} \norm{\Phi_x - \Phi_y}_1 \right] \leq
\frac{1}{k} \frac{\theta}{\pi} \left(1 - \frac{\theta}{\pi}\right) + 32 \rho.
\end{equation}
The variance above is over the choice of $\br$ and the random signs
$\bD$.
\end{theorem}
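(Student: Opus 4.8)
The plan is to expand the variance into a sum of pairwise bit covariances and show that the circulant structure, together with the random signs $\bD$, keeps the off-diagonal (covariance) contributions down to $O(\rho)$. Write $Z_i := \mathbf{1}[\sign(\circu{\br,k}\bD\bx)_i \ne \sign(\circu{\br,k}\bD\by)_i]$, so that $\frac{1}{2k}\norm{\Phi(\bx)-\Phi(\by)}_1 = \frac1k\sum_{i=0}^{k-1}Z_i$. Then
\[
\mathrm{var}\Big(\tfrac1k\textstyle\sum_i Z_i\Big) = \frac1{k^2}\sum_i \mathrm{var}(Z_i) + \frac1{k^2}\sum_{i\ne j}\mathrm{cov}(Z_i,Z_j).
\]
By \eqref{eq:expectation-angle} each $Z_i$ is Bernoulli with mean $\theta/\pi$, so the first sum equals exactly $\frac1k\frac{\theta}{\pi}(1-\frac{\theta}{\pi})$, the target ``independent'' term. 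Everything thus reduces to showing $\frac1{k^2}\sum_{i\ne j}\mathrm{cov}(Z_i,Z_j)\le 32\rho$.

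The first move is to condition on $\bD$. Given $\bD$, the vectors $\bx':=\bD\bx$ and $\by':=\bD\by$ are fixed, with the same norms and the same angle $\theta$ as $\bx,\by$ (since $\sigma_a^2=1$). Writing $p_i,q_i$ for the $i$-th coordinates of $\circu{\br}\bx'$ and $\circu{\br}\by'$, the vector $(p_i,q_i,p_j,q_j)$ is jointly Gaussian in $\br$, and a direct computation using $\E[r_ar_b]=\delta_{ab}$ shows its covariances are the circular auto- and cross-correlations of $\bx',\by'$: the lag-$0$ block is the fixed matrix with entries $\norm{\bx}_2^2,\iprod{\bx,\by},\norm{\by}_2^2$, while the coupling between bit $i$ and bit $j$ is governed entirely by the lag-$s$ correlations ($s=j-i \bmod d$), namely $c^{xx}_s=\sum_a x'_a x'_{a+s}$, $c^{xy}_s=\sum_a x'_a y'_{a+s}$, $c^{yx}_s$, $c^{yy}_s$ (indices mod $d$). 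In particular $\mathrm{cov}(Z_i,Z_j)$ depends on $i,j$ only through $s$; calling it $g(s)$, there are $k-|s|$ ordered pairs at lag $s$, so it suffices to bound $\sum_{s\ne0}|g(s)|$. When all four lag-$s$ correlations vanish, the four Gaussians split into two independent pairs and $g(s)=0$; the point is to control how far $g(s)$ can stray from zero.

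The crux is a bound on $g(s)$ that is \emph{linear} in these correlations. Since $\E_\br[Z_i\mid\bD]=\theta/\pi$ is constant in $\bD$, the law of total covariance gives $g(s)=\E_\bD[\mathrm{cov}_\br(Z_i,Z_j\mid\bD)]$, so only the conditional Gaussian covariance is needed. There $Z_iZ_j$ is a fixed combination of orthant indicators of $(p_i,q_i,p_j,q_j)$. Interpolating the covariance matrix from its block-diagonal (cross block $=0$) value to the true one and differentiating --- Price's/Plackett's formula, under which $\partial_{\Sigma_{pq}}\Pr[\text{sign pattern}]$ equals a bivariate Gaussian density at the origin --- yields $|\mathrm{cov}_\br(Z_i,Z_j\mid\bD)| \le C(|c^{xx}_s|+|c^{xy}_s|+|c^{yx}_s|+|c^{yy}_s|)$ for an absolute constant $C$. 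The main obstacle is precisely this density bound: the derivative is uniformly controlled only while the interpolated correlations stay bounded away from $\pm1$, so some care (or a restriction to the typical event, guaranteed by the spread assumption, that the cross-correlations are small) is required to keep $C$ absolute.

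It remains to average over $\bD$, which is where the spread hypothesis enters. For $s\ne0$ the sign products decorrelate the shifts: $c^{xx}_s=\sum_a\sigma_a\sigma_{a+s}x_ax_{a+s}$ has $\E_\bD[c^{xx}_s]=0$, and pairing the Rademacher products (the case $s\equiv d/2$ costing only a constant) gives $\E_\bD[(c^{xx}_s)^2]=\sum_a x_a^2x_{a+s}^2\le \norm{\bx}_\infty^2\norm{\bx}_2^2$. Normalizing $\norm{\bx}_2=\norm{\by}_2=1$ (the sign is scale-invariant) and using $\norm{\bx}_\infty,\norm{\by}_\infty\le\rho$, Jensen's inequality yields $\E_\bD|c^{xx}_s|\le\rho$, and identically for $c^{xy}_s,c^{yx}_s,c^{yy}_s$. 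Hence $|g(s)|\le 4C\rho$ for every $s\ne0$, and summing over lags with their multiplicities,
\[
\frac1{k^2}\sum_{i\ne j}\mathrm{cov}(Z_i,Z_j)\le\frac{4C\rho}{k^2}\sum_{s\ne0}(k-|s|)=\frac{4C\rho}{k^2}\,k(k-1)\le 4C\rho,
\]
the factor $k^2$ of pairs cancelling the $1/k^2$ prefactor. Choosing the constants in the interpolation step so that $4C\le32$ gives the claimed bound; this numerical tuning, together with the uniform density bound flagged above, is where the real work sits.
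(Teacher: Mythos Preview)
Your decomposition into the diagonal variance term plus off-diagonal covariances is exactly the paper's, and your endgame --- bounding $\E_{\bD}|c^{xx}_s|,\E_{\bD}|c^{xy}_s|,\dots$ by $\rho$ via the second moment and Jensen --- is essentially the same as well (the paper phrases this through an Azuma tail bound on $\iprod{\bD\bp,s_{\rightarrow t}(\bD\bq)}$ and then integrates, but the output is the same $O(\rho)$ bound on the expected cross-correlations). The genuine difference is the middle step, i.e., how you turn small cross-correlations into a small covariance $g(s)$. The paper does this \emph{geometrically}: it writes bit $j$ in terms of the shifted data vectors $\bu=s_{\rightarrow t}(\bD\bx),\bv=s_{\rightarrow t}(\bD\by)$, splits $\bu=\bu^\perp+\Pi\bu$ with $\Pi$ the projection onto $\spn{\bD\bx,\bD\by}$, observes that the sign of $\br^T\bu$ and $\br^T\bu^\perp$ differ with probability at most $\norm{\Pi\bu}$ (a pure angle argument), and uses the exact independence of $\sign(\br^T\bu^\perp)\sign(\br^T\bv^\perp)$ from $\sign(\br^T\bD\bx)\sign(\br^T\bD\by)$. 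This yields $|g(s)|\le 2\,\E_{\bD}[\max\{\norm{\Pi\bu},\norm{\Pi\bv}\}]\le 32\rho$ with no caveats. Your Plackett/Price interpolation is a legitimate alternative and arguably more systematic, but it carries exactly the hazard you flagged: the orthant-derivative density $\tfrac{1}{2\pi\sqrt{1-\rho_{ij}^2}}$ must stay uniformly bounded along the interpolation, which forces you to split off a bad event in $\bD$ (where some $|c^{\cdot\cdot}_s|$ is large) and control its contribution separately --- doable, but extra work, and it makes the exact constant $32$ unlikely to fall out cleanly. The paper's geometric route sidesteps this entirely and delivers the stated constant directly; your route gives the same $O(\rho)$ qualitative bound with more machinery.
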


\paragraph{Remark.}
For typical vectors in $\mathbb{R}^d$, we have
${\norm{\bx}_\infty}/{\norm{\bx}_2} $ to be $O(\log
d/\sqrt{d})$. Further, by using the idea from Ailon and Chazelle~\cite{ailon2006approximate}, we can pre-process the data by multiplying it with a randomly signed Hadamard matrix, and guarantee such an $\ell_\infty$ bound with high probability.\footnote{However, applying this pre-processing leads to {\em dense} vectors, which may be memory intensive for some applications. In this case, dividing the co-ordinates into blocks of size $\sim k^2$ and performing the pre-processing on the blocks separately is better for small $k$.} Therefore the second term becomes negligible for large
$d$. The above result suggests that the angle preservation performance
of CBE (in term of the variance) is as good as LSH for
high-dimensional data. 

Our second theorem gives a large-deviation bound for the angle
estimate, also assuming that the vectors are well-spread. This will
then enable us to obtain a dimension reduction theorem which preserves
all angles up to an additive error. 

\begin{theorem}\label{thm:jl-large-angle}
Let $\bx, \by \in \bR^d$ with $\angle( \bx, \by ) = \theta$, and
suppose $\max\{ \norm{\bx}_\infty / \norm{\bx}_2, \norm{\by}_\infty
/ \norm{\bx}_2\} \leq \rho$, for some parameter $\rho$.  Now consider
the $k$-dimensional CBE $\Phi_x, \Phi_y$ of $\bx, \by$ respectively,
for some $k < d$.  Suppose $\rho \le \frac{\theta^2}{16
k \log(k/\delta)}$. For any $\epsilon>0$, we have:
\[   \Pr \left[ \left| \frac{1}{2k}\norm{\Phi_x - \Phi_y}_1 - \frac{\theta}{\pi} \right| > \frac{4 \log(k/\delta)}{\sqrt{k}} \right] <  \delta .\]
\end{theorem}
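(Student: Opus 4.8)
The plan is to write the normalized Hamming distance as an average of per-bit disagreement indicators and prove concentration of this average despite the dependence among the bits. Set $\bu = \bD\bx$ and $\bv = \bD\by$, and for $0 \le i < k$ let $a_i = (\circu{\br}\bu)_i$ and $b_i = (\circu{\br}\bv)_i$, so that
\[ \frac{1}{2k}\norm{\Phi_x - \Phi_y}_1 = \frac1k \sum_{i=0}^{k-1} X_i =: Z, \qquad X_i := \mathbf{1}[\sign(a_i) \ne \sign(b_i)]. \]
Since the $i$-th row of $\circu{\br}$ is a cyclic shift of $\br$, we have $a_i = \iprod{\br, \bu^{(i)}}$ and $b_i = \iprod{\br, \bv^{(i)}}$ where $\bu^{(i)}, \bv^{(i)}$ are cyclic shifts of $\bu, \bv$; in particular $\norm{\bu^{(i)}}_2 = \norm{\bx}_2$, $\norm{\bv^{(i)}}_2 = \norm{\by}_2$, and $\iprod{\bu^{(i)}, \bv^{(i)}} = \iprod{\bx,\by}$ for every $i$. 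Thus, conditioned on $\bD$, each pair $(a_i,b_i)$ is a mean-zero bivariate Gaussian with correlation exactly $\cos\theta$, whence $\E[X_i \mid \bD] = \theta/\pi$ and $\E[Z] = \theta/\pi$, recovering \eqref{eq:expectation-angle}. The entire difficulty is that $X_0,\dots,X_{k-1}$ share the single Gaussian vector $\br$ and are therefore dependent; a plain Chebyshev bound from Theorem~\ref{thm:variance-cbe} only gives polynomially small failure probability, which is too weak.

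The guiding intuition is that the random signs $\bD$ and the spread parameter $\rho$ exist precisely to make the bits behave as if independent. If the shifted vectors $\{\bu^{(i)}\}$ were mutually orthogonal (and likewise $\{\bv^{(i)}\}$, with $\bu^{(i)}\perp\bv^{(j)}$ for $i\ne j$), then the pairs $(a_i,b_i)$ would be independent across $i$, $Z$ would be an average of $k$ i.i.d.\ $\mathrm{Bernoulli}(\theta/\pi)$ variables, and Hoeffding would give
\[ \Pr\big[|Z - \theta/\pi| > t\big] \le 2e^{-2kt^2}, \]
far stronger than needed at $t = 4\log(k/\delta)/\sqrt k$. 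So the real task is to control the deviation from this orthogonal idealization, and I would do so in two stages. \emph{Stage 1 (a good event for $\bD$).} I would show that with probability $\ge 1 - \delta/2$ over $\bD$, every nonzero-shift cross-correlation is small:
\[ \max_{i\ne j}\left\{ \frac{|\iprod{\bu^{(i)}, \bu^{(j)}}|}{\norm{\bx}_2^2},\ \frac{|\iprod{\bv^{(i)}, \bv^{(j)}}|}{\norm{\by}_2^2},\ \frac{|\iprod{\bu^{(i)}, \bv^{(j)}}|}{\norm{\bx}_2\norm{\by}_2} \right\} = O\!\left(\rho\sqrt{\log(k/\delta)}\right). \]
Each such correlation is a zero-mean Rademacher chaos $\sum_m \sigma_m\sigma_{m+s}\,x_m x_{m+s}$ (shift $s = i-j \ne 0$) whose fluctuations are governed by $\sum_m x_m^2 x_{m+s}^2 \le \norm{\bx}_\infty^2\norm{\bx}_2^2 \le \rho^2\norm{\bx}_2^4$, so a Hanson--Wright--type tail bound together with a union bound over the at most $k^2$ relevant shifts controls all of them simultaneously.

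\emph{Stage 2 (conditional concentration).} Fixing a good $\bD$, I would bound the central moments $\E_\br[(Z - \theta/\pi)^{2p}]$ and apply Markov. Expanding
\[ \E_\br\big[(Z - \tfrac{\theta}{\pi})^{2p}\big] = \frac{1}{k^{2p}}\sum_{i_1,\dots,i_{2p}} \E_\br\!\left[\prod_{\ell=1}^{2p}\big(X_{i_\ell} - \tfrac{\theta}{\pi}\big)\right] \]
reduces everything to joint moments of the indicators. The key quantitative lemma is that the covariance of two sign-disagreement indicators is controlled by the corresponding Gaussian correlations through the arcsine law / Hermite expansion of $\sign$, so on the good event each pairwise (and higher) correction is $O(\rho\,\mathrm{polylog})$. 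This lets me compare the moment to its fully independent ($\rho=0$) value, which is a central moment of a Binomial and is $\le (Cp/k)^p$. The hypothesis $\rho \le \theta^2/(16k\log(k/\delta))$ is exactly the budget guaranteeing that the accumulated corrections (there are $O(k)$ nonzero shifts per bit) stay dominated by this leading term. Optimizing $p \approx \log(k/\delta)$ in Markov's inequality then yields a tail of $\delta/2$, and combining with Stage 1 completes the proof.

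The main obstacle is Stage 2: converting "small pairwise Gaussian correlations" into "concentration of the indicator average." The discontinuity of $\sign$ rules out Lipschitz or Gaussian-concentration shortcuts, forcing one to pass through the correlation structure explicitly, and the bookkeeping of the higher joint moments --- ensuring the $O(\rho)$-per-pair corrections, accumulated over all pairings within the $2p$-fold product and summed against the $\le k$ nonzero shifts, never overwhelm the leading Binomial moment --- is where the stringent bound on $\rho$ is consumed. A cleaner alternative worth attempting is a Gaussian interpolation (Slepian-type) comparison to the orthogonal system after smoothing $\sign$, but controlling the smoothing error near the origin reintroduces essentially the same difficulty.
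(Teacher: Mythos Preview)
Your Stage~1 is essentially the paper's: both invoke the Rademacher--chaos tail bound (the paper's Lemma~\ref{lem:small-dproduct}) to show that, with probability $1-\delta$ over $\bD$, all cross-shift inner products among the $\bu^{(i)},\bv^{(j)}$ with $i\ne j$ are $O\!\big(\rho\sqrt{\log(k/\delta)}\big)$.

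The divergence is Stage~2. You propose a moment method: expand $\E_\br\big[(Z-\theta/\pi)^{2p}\big]$, control each mixed term via the Gaussian correlation structure (arcsine law / Hermite expansion of $\sign$), compare to the Binomial central moment, and optimize $p$. This is plausible in outline, but --- as you yourself flag --- the bookkeeping for the higher joint moments of sign-disagreement indicators is genuinely delicate: orthant probabilities of $2p$-variate Gaussians do not factor, and the Hermite series of $\sign$ is not absolutely summable, so turning ``each pairwise correction is $O(\rho\,\mathrm{polylog})$'' into a clean bound on the full $2p$-fold product needs real work that your sketch does not supply.

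The paper bypasses this entirely with a decoupling construction. After Stage~1 it upgrades ``small cross-correlations'' to a structural property it calls $(\gamma,k)$-\emph{orthogonality}: via a Gram--Schmidt--type procedure (using a smallest-singular-value bound on the $d\times 2k$ matrix of shifts, which is where the hypothesis $\rho\le\theta^2/(16k\log(k/\delta))$ is spent), it writes each shift as $\bu^{(i)}=\bu_i+\be_i$, $\bv^{(i)}=\bv_i+\Bf_i$ with $\norm{\be_i},\norm{\Bf_i}<\gamma=4\sqrt{\rho}$ and the pairs $(\bu_i,\bv_i)$ lying in \emph{mutually orthogonal} $2$-planes. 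On these planes the events $\{\sign\iprod{\br,\bu_i}\ne\sign\iprod{\br,\bv_i}\}$ are exactly independent across $i$, so a straight Chernoff bound applies. The transfer back to the original shifts is a margin argument: if $|\iprod{\br,\bu_i}|>\eta$ and $|\iprod{\br,\be_i}|<\eta$ then $\sign\iprod{\br,\bu^{(i)}}=\sign\iprod{\br,\bu_i}$, and both conditions are handled by one-dimensional Gaussian tail and anti-concentration bounds plus a union bound over $i$.

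In short, the paper trades your analytic moment computation for a geometric construction that manufactures exact independence, reducing everything to Chernoff plus elementary ``bad margin'' events. Your route might be pushed through, but the paper's is substantially cleaner and makes transparent where the $\sqrt{\rho}$ error scale and the final $\log(k/\delta)/\sqrt{k}$ deviation come from.
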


Qualitatively, the condition on $\rho$ is similar to the one we implicitly have in Theorem~\ref{thm:variance-cbe}.  Unless $\rho = o \big( \frac{1}{k} \frac{\theta}{\pi} (1-\frac{\theta}{\pi}) \big)$, the additive term dominates, so for the bound to be interesting, we need this condition on $\rho$. 

We observe that Theorem~\ref{thm:jl-large-angle} implies a Johnson-Lindenstrauss type theorem.  

\begin{corollary}
Suppose we have $N$ vectors $\bu_0, \bu_1, \dots, \bu_{N-1}$ in $\bR^d$, and define
\[ \rho_{ij} = \max\{ {\norm{\bu_i}_\infty}/{\norm{\bu_i}_2}, {\norm{\bu_j}_\infty}/{\norm{\bu_j}_2} \}, \quad \theta_{ij} = \angle(\bu_i, \bu_j). \]
Let $\epsilon >0$ be a given accuracy parameter and let $k = C\log^2 n/ \epsilon^2$.  Then for all $i, j$ such that $\rho_{ij} < \frac{\theta_{ij}^2}{16 k \log(2kN^2)}$, we have
\[ \left| \frac{1}{2k} \norm{\Phi_i - \Phi_j}_1 - \frac{\theta_{ij}}{\pi} \right| < \epsilon,\]
with probability at least $3/4$.
\end{corollary}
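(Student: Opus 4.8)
The plan is to apply Theorem~\ref{thm:jl-large-angle} to each of the $\binom{N}{2}$ pairs $(\bu_i, \bu_j)$ and combine the resulting tail bounds by a union bound. The crucial observation is that the hypothesis on $\rho_{ij}$ appearing in the corollary is \emph{exactly} the hypothesis of the theorem once the free parameter $\delta$ is fixed: setting $\delta = \frac{1}{2N^2}$ gives $k/\delta = 2kN^2$, so the theorem's condition $\rho \le \frac{\theta^2}{16k\log(k/\delta)}$ becomes precisely $\rho_{ij} < \frac{\theta_{ij}^2}{16k\log(2kN^2)}$. This value of $\delta$ is not arbitrary; it is dictated by the requirement that the failure probabilities, summed over all pairs, stay below $1/4$.

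Concretely, I would first fix $\delta = \frac{1}{2N^2}$. For each admissible pair (i.e.\ each $(i,j)$ satisfying the stated bound on $\rho_{ij}$), Theorem~\ref{thm:jl-large-angle} applied with $\bx = \bu_i$, $\by = \bu_j$, $\theta = \theta_{ij}$, $\rho = \rho_{ij}$ yields
\begin{equation}
\Pr\left[ \left| \frac{1}{2k}\norm{\Phi_i - \Phi_j}_1 - \frac{\theta_{ij}}{\pi} \right| > \frac{4\log(2kN^2)}{\sqrt{k}} \right] < \frac{1}{2N^2}.
\end{equation}
Since there are fewer than $N^2/2$ pairs, a union bound shows that with probability at least $1 - \frac{N^2}{2}\cdot\frac{1}{2N^2} = \frac{3}{4}$ the uniform deviation $\frac{4\log(2kN^2)}{\sqrt{k}}$ holds simultaneously for every admissible pair. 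I would stress here that a single draw of $\br$ and $\bD$ is shared across all pairs, so the union bound is over the bad events of one random embedding (not over independent trials); this is legitimate, since each bound supplied by the theorem is a statement about that one embedding.

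It then remains to verify that the uniform deviation is at most $\epsilon$ under the stated choice $k = C\log^2 N/\epsilon^2$. Substituting $\sqrt{k} = \sqrt{C}\,\log N/\epsilon$, the inequality $\frac{4\log(2kN^2)}{\sqrt{k}} \le \epsilon$ reduces to $4\log(2kN^2) \le \sqrt{C}\,\log N$. Expanding $\log(2kN^2) = 2\log N + \log k + \log 2$ with $\log k = 2\log\log N + 2\log(1/\epsilon) + \log C$, the dominant contribution is $2\log N$; in the interesting regime $\epsilon \ge 1/N$ the remaining terms are $\bigoh{\log N}$, so $\log(2kN^2) \le c_0 \log N$ for an absolute constant $c_0$. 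Choosing $C \ge 16 c_0^2$ then gives $\frac{4\log(2kN^2)}{\sqrt{k}} \le \frac{4c_0}{\sqrt{C}}\,\epsilon \le \epsilon$, which completes the argument.

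The step I expect to require the most care—though it is bookkeeping rather than a genuine difficulty—is this final absorption: confirming that the $\log k$ term (which itself hides a $\log(1/\epsilon)$) is dominated by $2\log N$ and can be folded into the constant $C$ across the full intended range of $\epsilon$. Everything else is a direct instantiation of Theorem~\ref{thm:jl-large-angle} together with the union bound. One should also keep in mind the implicit restriction $k < d$ demanded by the theorem, which simply confines the statement to the genuine dimension-reduction regime.
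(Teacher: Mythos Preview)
Your proposal is correct and follows essentially the same approach as the paper: set $\delta = 1/(2N^2)$, apply Theorem~\ref{thm:jl-large-angle} to each pair, and take a union bound over $\binom{N}{2}$ pairs. Your final absorption step is in fact more carefully argued than the paper's; the paper simply assumes $k < N$ (so that $\log(2kN^2) \le 3\log N$) and takes $C = 144$, whereas you track the $\log(1/\epsilon)$ contribution explicitly and impose the comparable restriction $\epsilon \ge 1/N$.
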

\begin{proof}
We can set $\delta = 1/2N^2$ in Theorem~\ref{thm:jl-large-angle} and then take a union bound over all $\binom{N}{2}$ choices of pairs $i,j$ to obtain a failure probability $\le 1/4$.  Further, for our choice of $k$, setting $C = 144$ and assuming $N$ is large enough that $k < N$, we have
\[ 
\frac{ 4 \log(k/\epsilon)}{\sqrt{k}}  < \frac{ 12\delta  \log N}{ \sqrt{C} \cdot \log N} < \epsilon.
\]
\end{proof}

In the remainder of the section, we will prove the above theorems.  We start with Theorem \ref{thm:variance-cbe}, whose proof will give a basic framework for that of Theorem~\ref{thm:jl-large-angle}.

\subsection{Variance of the angle estimator}
For a vector $\bx$ and an index $i$, we denote by $s_{\rightarrow i} (\bx)$ the vector {\em shifted} by $i$ positions. I.e., the $j$th entry of $s_{\rightarrow i}$ is the $((j-i)\text{mod } d)$'th entry of $\bx$. Further, let us define
\[
F_i = \frac{1 - \sign(s_{\rightarrow i}(\br)^T \mathbf{D} \bx
) \sign(s_{\rightarrow i}(\br)^T \mathbf{D} \by ) }{2}
- \frac{\theta}{\pi}.
\]
where $s_{\rightarrow i}(\cdot)$ is defined as the operator circularly
shifting a vector by $i$ elements\footnote{The above comes with a
slight abuse of notation, where the first column (instead of row) of
the projection matrix $\bR$ is defined as $\br$.}.  We have
\[
var\left[\frac{1}{2k} \norm{\Phi_x - \Phi_y}_1 \right] =
var\left[ \frac{1}{k} \sum_{i=1}^k F_i \right].
\]
Without loss of generality, we assume $\norm{\bx}_2, \norm{\by}_2 = 1$
(since we only care about the angle). The mean of each $F_i$ is zero, and thus $\E [ \frac{1}{k} \sum_{i=1}^k F_i] = 0$.  Thus the variance is equal to
\begin{align}
var\left[\frac{1}{k} \sum_{i=0}^{k-1} F_i \right]
    &= \mathbb{E}\left[ \frac{1}{k^2} \left( \sum_{i=0}^{k-1}
    F_i \right)^2 \right] \\
    &= \mathbb{E} \left[ \frac{\sum_{i=0}^{k-1}
    F_i^2 + \sum_{i \ne j} F_i F_j } {k^2} \right] \notag \\
    &= \frac{1}{k^2} \left(k \cdot \mathbb{E}F_1^2 + \sum_{i \ne
    j} \mathbb{E}(F_i F_j) \right) \notag \\
    &= \frac{1}{k} \frac{\theta}{\pi} \left(1
    - \frac{\theta}{\pi}\right) + \frac{1}{k^2} \sum_{i \ne
    j} \mathbb{E}(F_i F_j) \label{eq:to-show-simplified}
\end{align}

To prove the theorem, it suffices to show that $\mathbb{E}(F_i F_j)
\leq 32\rho$ for all $i \ne j$.
Without loss of generality, we can assume that $i=0$, and consider
$\E(F_0 F_j)$.  By definition, it is equal to
\[
\mathbb{E} \left[\left(\frac{1 - \sign(\br^T \mathbf{D} \bx ) \sign(\br^T \mathbf{D} \by ) }{2} - \frac{\theta}{\pi}  \right)
    \left(\frac{1 - \sign(s_{\rightarrow j}(\br)^T \mathbf{D} \bx
    ) \sign(s_{\rightarrow j}(\br)^T \mathbf{D} \by ) }{2}
    - \frac{\theta}{\pi} \right) \right]. \nonumber
\]
The trick now is to observe that
\begin{equation}
s_{\rightarrow j} (\br)^T \bx = \br^T s_{\rightarrow (d-j)} (\bx).
\end{equation}
Thus setting $t = d - j$, we can write the above as
\[
\mathbb{E} \left[\left(\frac{1 - \sign(\br^T \mathbf{D} \bx ) \sign(\br^T \mathbf{D} \by ) }{2} - \frac{\theta}{\pi}  \right)
    \left(\frac{1 - \sign(\br^T s_{\rightarrow t}(\mathbf{D} \bx)
    ) \sign(\br^T s_{\rightarrow t}(\mathbf{D} \by) ) }{2}
    - \frac{\theta}{\pi} \right) \right] \nonumber
\]

The key idea is that we expect the vector $s_{\rightarrow t}
(\mathbf{D}\bx)$ to be nearly orthogonal to the space containing
$\mathbf{D}\bx, \mathbf{D}\by$.  This is because $\mathbf{D}$ is a
diagonal matrix of random signs, and $\bx$ and $\by$ are vectors with
small $\ell_\infty$ norm.  We show this formally in
Lemma~\ref{lem:small-dproduct}.

Why does this help?  Suppose for a moment that $\bu := s_{\rightarrow
t} (\mathbf{D}\bx)$ and $\bv := s_{\rightarrow t} (\mathbf{D}\by)$ are
both orthogonal to $\spn{\mathbf{D}\bx, \mathbf{D}\by}$.  Then for a
random Gaussian $\br$, the random variables
$\sign(\br^T \bu) \sign(\br^T \bv)$ and
$\sign(\br^T \mathbf{D}\bx) \sign(\br^T \mathbf{D}\by)$ are
independent, because the former depends only on the projection of
$\br$ onto $\spn{ \bu, \bv}$, while the latter depends only on the
projection of $\br$ onto $\spn{ \mathbf{D}\bx, \mathbf{D}\by}$. Now if
these two spaces are orthogonal, the projections of a Gaussian vector
onto these spaces are independent (this is a fundamental property of
multidimensional Gaussians).  This implies that the expectation of the
product above is equal to the product of the expectations, which is
zero (each expectation is zero).

The key lemmma (see below) now says that even if $\bu$ and $\bv$ as
defined above are {\em nearly} orthogonal to
$\spn{\mathbf{D}\bx, \mathbf{D}\by}$, we still get a good bound on the
expectation above.

\begin{lemma}\label{lem:error-bound}
Let $\ba, \bb, \bu, \bv$ be unit vectors in $\bR^d$ such that
$\angle(\ba, \bb) = \angle(\bu, \bv) = \theta$, and let $\Pi$ be the
projector onto $\spn{\ba, \bb}$.  Suppose $\max \{ \norm{ \Pi \bu
}, \norm{\Pi \bv} \} = \delta < 1$.  Then we have
\[ \mathbb{E} \left[ \left( \frac{1 - \sign(\br^T \ba ) \sign(\br^T \bb ) }{2} - \frac{\theta}{\pi}  \right)
 \left( \frac{1 - \sign(\br^T \bu ) \sign(\br^T \bv ) }{2}
 - \frac{\theta}{\pi} \right) \right] \le 2 \delta.
\]
Here, the expectation is over the choice of $\br$.
\end{lemma}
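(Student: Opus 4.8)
The plan is to recognize the target expression as a covariance of two sign-collision indicators, each of which depends on $\br$ only through its projection onto a two-dimensional subspace. Write $X = \tfrac12\big(1 - \sign(\br^T\ba)\sign(\br^T\bb)\big)$ and $Y = \tfrac12\big(1 - \sign(\br^T\bu)\sign(\br^T\bv)\big)$, so $X,Y\in\{0,1\}$ indicate sign disagreements. By the random-hyperplane identity quoted in the introduction, $\E[X]=\angle(\ba,\bb)/\pi=\theta/\pi$ and $\E[Y]=\angle(\bu,\bv)/\pi=\theta/\pi$, so the quantity to bound is exactly $\mathrm{Cov}(X,Y)=\E[(X-\tfrac{\theta}{\pi})(Y-\tfrac{\theta}{\pi})]$. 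The crucial structural observation is that $X$ is a function only of $\Pi\br$ (replacing $\br$ by $\Pi\br$ leaves both signs unchanged, since $\ba,\bb\in\spn{\ba,\bb}$), while $Y$ depends only on the projection of $\br$ onto $\spn{\bu,\bv}$.

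Next I would introduce surrogate vectors by stripping off the components of $\bu,\bv$ that lie in $P:=\spn{\ba,\bb}$. Decompose $\bu=\bu_\parallel+\bu_\perp$ with $\bu_\parallel=\Pi\bu$ and $\bu_\perp=(I-\Pi)\bu$, set $\tilde\bu=\bu_\perp/\norm{\bu_\perp}$, and define $\tilde\bv$ analogously; since $\norm{\Pi\bu}\le\delta<1$ we have $\bu_\perp\ne0$, so $\tilde\bu$ is a well-defined unit vector orthogonal to $P$. With $\tilde Y=\tfrac12\big(1-\sign(\br^T\tilde\bu)\sign(\br^T\tilde\bv)\big)$, I would split
\begin{equation*}
\E\left[\left(X-\frac{\theta}{\pi}\right)\left(Y-\frac{\theta}{\pi}\right)\right]
= \E\left[\left(X-\frac{\theta}{\pi}\right)\left(\tilde Y-\frac{\theta}{\pi}\right)\right]
+ \E\left[\left(X-\frac{\theta}{\pi}\right)(Y-\tilde Y)\right].
\end{equation*}

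For the first term, $\tilde\bu,\tilde\bv\perp P$ imply $\spn{\tilde\bu,\tilde\bv}\perp P$, and since projections of a standard Gaussian onto orthogonal subspaces are independent, $X$ and $\tilde Y$ are independent; hence this term equals $\E[X-\tfrac{\theta}{\pi}]\cdot\E[\tilde Y-\tfrac{\theta}{\pi}]=0$, using only $\E[X]=\theta/\pi$ (so I never need $\angle(\tilde\bu,\tilde\bv)=\theta$). For the second term I would use $|X-\tfrac{\theta}{\pi}|\le1$ to get the bound $\E|Y-\tilde Y|=\Pr[Y\ne\tilde Y]$, then a union bound $\Pr[Y\ne\tilde Y]\le\Pr[\sign(\br^T\bu)\ne\sign(\br^T\tilde\bu)]+\Pr[\sign(\br^T\bv)\ne\sign(\br^T\tilde\bv)]$. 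Each flip probability equals $\angle(\bu,\tilde\bu)/\pi$, and the geometry gives $\sin\angle(\bu,\tilde\bu)=\norm{\bu_\parallel}=\norm{\Pi\bu}\le\delta$, so $\angle(\bu,\tilde\bu)\le\arcsin\delta\le\tfrac{\pi}{2}\delta$ and the flip probability is at most $\delta/2$. Summing the two contributions yields $\Pr[Y\ne\tilde Y]\le\delta$, hence the whole expression is at most $\delta$, which establishes the claim (in fact with constant $1$ in place of $2$).

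The conceptual crux — and essentially the only delicate step — is the independence argument: identifying exactly which low-dimensional projection of $\br$ governs each indicator, and using $\spn{\tilde\bu,\tilde\bv}\perp P$ to decouple $X$ from $\tilde Y$. Everything else reduces to the elementary identity $\sin\angle(\bu,\tilde\bu)=\norm{\Pi\bu}$ and the collision-probability formula, both routine. I would check the degenerate cases separately, but $\delta<1$ rules out $\bu\in P$ (so $\tilde\bu$ always exists), and the collinear cases $\theta\in\{0,\pi\}$ make $X$ or the covariance trivially zero.
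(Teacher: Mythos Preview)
Your proposal is correct and follows essentially the same approach as the paper: decompose $\bu,\bv$ into components in and orthogonal to $\spn{\ba,\bb}$, use independence of Gaussian projections onto orthogonal subspaces to kill the ``orthogonalized'' covariance term, and then control the remainder via the sign-flip probability $\angle(\bu,\tilde\bu)/\pi\le \arcsin(\delta)/\pi\le\delta/2$ together with a union bound. The only difference is bookkeeping: the paper uses the unnormalized $\bu^\perp=\bu-\Pi\bu$ (harmless, since $\sign$ is scale-invariant) and the cruder bound $|X-\theta/\pi|\le 2$, which is why it ends at $2\delta$ whereas your tighter $|X-\theta/\pi|\le 1$ yields $\delta$.
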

The proof of the above lemma is moved to Appendix~\ref{app:lem:error-bound}.

We use the lemma with $\ba = \bD \bx$ and $\bb = \bD \by$.  To show
Theorem~\ref{thm:variance-cbe}, we have to prove that
\begin{equation}\label{eq:to-bound-variance}
\E \left[ \max \{\norm{\Pi \bu}, \norm{\Pi \bv} \}  \right] \le 16\rho, 
\end{equation}
where $\Pi, \bu, \bv$ are defined as in the statement of
Lemma~\ref{lem:error-bound}.  The expectation now is over the choice
of $\bD$.  This leads us to our next lemma.

\begin{lemma}\label{lem:small-dproduct} 
Let $\bp, \bq \in \mathbb{R}^d$ be vectors that satisfy $\norm{\bp}_2
= 1$ and $\norm{\bq}_\infty < \rho$ for some parameter $\rho$, and
suppose $\bD := \text{diag}(\sigma_0, \sigma_1, \dots, \sigma_{d-1})$,
where $\sigma_i$ are random $\pm 1$ signs. Then for any $0<t<d$, we
have
\[ \Pr [\iprod{ \bD\bp, s_{\rightarrow t}(\bD\bq) } > \gamma] \le e^{-\gamma^2/8\rho^2}. \] 
Note that the probability is over the choice of $\bD$.
\end{lemma}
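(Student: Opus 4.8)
The plan is to expand the inner product into an explicit quadratic form in the independent Rademacher signs and then apply a bounded-differences (McDiarmid) concentration inequality, keeping careful track of which signs influence which terms.

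First I would write the random variable explicitly. Since $(\bD\bp)_i = \sigma_i p_i$ and the $j$th coordinate of $s_{\rightarrow t}(\bD\bq)$ is $\sigma_{(j-t)\bmod d}\, q_{(j-t)\bmod d}$, reindexing by $i = (j-t)\bmod d$ gives
\[ X := \iprod{\bD\bp,\, s_{\rightarrow t}(\bD\bq)} = \sum_{i=0}^{d-1} p_{(i+t)\bmod d}\, q_i\, \sigma_i\, \sigma_{(i+t)\bmod d}. \]
Because $0 < t < d$, the indices $i$ and $(i+t)\bmod d$ are always distinct, so each summand is a product of two distinct independent signs and has mean zero; hence $\E[X] = 0$. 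The difficulty is that the product variables $\sigma_i \sigma_{(i+t)\bmod d}$ are highly dependent (the permutation $i \mapsto (i+t)\bmod d$ groups them into cycles), so the Hoeffding bound for sums of independent terms does not apply directly to this quadratic form.

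The key observation that circumvents this is that $X$ is a function of the $d$ mutually independent base signs $\sigma_0, \dots, \sigma_{d-1}$, and each $\sigma_j$ appears in at most two summands: the one with first index $i = j$, and the one with $(i+t)\equiv j$, i.e. $i = (j-t)\bmod d$. Consequently, flipping a single $\sigma_j$ changes $X$ by at most $c_j := 2\big(|p_{(j+t)\bmod d}\, q_j| + |p_j\, q_{(j-t)\bmod d}|\big)$, uniformly over the other signs. Using $\norm{\bq}_\infty < \rho$ I would bound $c_j \le 2\rho\big(|p_{(j+t)\bmod d}| + |p_j|\big)$, hence $c_j^2 \le 8\rho^2\big(p_{(j+t)\bmod d}^2 + p_j^2\big)$, and summing over $j$ together with $\norm{\bp}_2 = 1$ yields $\sum_{j} c_j^2 \le 16\rho^2$.

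Finally I would invoke McDiarmid's bounded-differences inequality, which for the mean-zero function $X$ of independent coordinates gives $\Pr[X > \gamma] \le \exp\big(-2\gamma^2 / \sum_j c_j^2\big) \le \exp(-\gamma^2 / 8\rho^2)$, matching the claimed bound exactly. The main obstacle is conceptual rather than computational: it is recognising that, although the degree-two structure of $X$ rules out a direct independent-sum argument, the fact that every base sign touches only $O(1)$ terms makes the bounded differences uniformly small, which is precisely what lets McDiarmid recover a clean sub-Gaussian tail with the stated constant. A minor point to verify in passing is the degenerate case $t = d/2$, where $(j+t)$ and $(j-t)$ coincide mod $d$; the bound on $c_j$ holds verbatim there as well, so no separate treatment is needed.
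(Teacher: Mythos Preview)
Your proof is correct and follows essentially the same route as the paper: the paper writes the inner product as a quadratic form in the Rademacher signs, observes that each $\sigma_j$ enters at most two summands, and applies Azuma's inequality to the resulting Doob martingale with difference bound $|Q_j|\le |p_j q_{j+t}|+|p_{j-t}q_j|$, arriving at the identical exponent $-\gamma^2/8\rho^2$. Your use of McDiarmid is exactly the bounded-differences reformulation of that argument, with the same per-coordinate oscillation bound and the same final constant.
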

The proof of the above lemma is moved to Appendix~\ref{app:lem:small-dproduct}. 
We remark that the lemma only assumes that $\bp$ is a unit vector, it need not have a small $\ell_\infty$ norm.

We can now complete the proof of our theorem. 
As noted above, we need to show~\eqref{eq:to-bound-variance}. 
To recall, $\Pi$ is the projector onto $\spn{ \bD \bx, \bD \by}$, and we need to bound:
\begin{equation}
\E \left[ \max \{\norm{\Pi \bu}, \norm{\Pi \bv} \}  \right] \le \E [\norm{\Pi \bu}] + \E[ \norm{\Pi \bv}]. \label{eq:to-show-var2}
\end{equation}  
Let $\bx, \bz$ be an orthonormal basis for $\spn{ \bx, \by}$; then it is easy to see that for any diagonal $\bD$ with $\pm 1$ entries on the diagonal, $\bD \bx, \bD \bz$ is an orthonormal basis for $\spn{\bD \bx, \bD \by}$.  Thus
\[ \E[\norm{\Pi \bu}] \le \E [| \iprod{\bu, \bD\bx }| + |\iprod{\bu, \bD\bz}|]. \]
Now by Lemma~\ref{lem:small-dproduct},
\[ \Pr[ |\iprod{\bu, \bD\bx}| > t\rho ] \le e^{-t^2/4}. \]
Integrating over $t$, we get $\E[ |\iprod{\bu, \bD\bx}| ] \le 4\rho$. Thus we can bound the LHS of~\eqref{eq:to-show-var2} by $16\rho$, completing the proof of the theorem. \qed

\subsection{The Johnson-Lindenstrauss Type Result} 
\label{subsec:jl}

Next, we turn to the proof of Theorem~\ref{thm:jl-large-angle}, where we wish to obtain a strong tail bound.
At a high level, the argument consists of two steps:
\begin{itemize}
\item First, show that with probability $1-\epsilon$ over the choice of $\bD$, the $k$ translates of $\bx, \by$ satisfy certain orthogonality properties (this is in the same spirit as Lemma~\ref{lem:small-dproduct}). 
\item Second, conditioned on orthogonality as above, with high probability over the choice of $\br$, we have the desired guarantee.
\end{itemize}

Next will will show the two steps respectively.  
Throughout this section, we denote by $X_0, X_1, \dots, X_{k-1}$ the $k$ shifts of $\bD \bx$, i.e., $X_i = s_{\rightarrow (i)}(\bD \bx)$; define $Y_0, \dots, Y_{k-1}$ analogously as shifts of $\bD \by$.  We will also assume that $\rho < \frac{\theta^2}{16 k \log(k/\delta)}$.

The structure we require is formally the following.
\begin{definition}[$(\gamma, k)$-orthogonality]\label{def:orthogonality}
Two sequences of $k$ unit vectors $X_0, X_1, \dots, X_{k-1}$ and $Y_0, Y_1, \dots, Y_{k-1}$ are said to be $(\gamma, k)$-orthogonal if there exists a decomposition (for every $i$)
\[ X_i  = \bu_i + \be_i ~; \quad  Y_i = \bv_i + \Bf_i \]
satisfying the following properties:
\begin{enumerate}
\item $\bu_i$ and $\bv_i$ are both orthogonal to $\spn{\bu_j, \bv_j : j\ne i}$.
\item $\max_i \{ \norm{\be_i}, \norm{\Bf_i} \} < \gamma$.
\end{enumerate}
\end{definition}

The lemma of the first step, as described earlier, is the following:
\begin{lemma}\label{lem:jl-orthogonality}
Let $\bx, \by$ be unit vectors with $\norm{\bx}_\infty, \norm{\by}_\infty \leq \rho$, and $\theta = \angle(\bx, \by)$, and let $X_i, Y_i$ be rotations of $\bD \bx, \bD \by$ respectively (as defined earlier). Then w.p. $1-\delta$ over the choice of $\bD$, the vectors $(X_i, Y_i)_{i=1}^k$ are $(\gamma, k)$ orthogonal, for  $\gamma =  4 \sqrt{\rho}$.
\end{lemma}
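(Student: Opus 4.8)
The plan is to construct the decomposition greedily, pair by pair, after first establishing that the $2k$ vectors $X_0,Y_0,\dots,X_{k-1},Y_{k-1}$ are pairwise almost-orthogonal across distinct pairs. Let $G$ be the $2k\times 2k$ Gram matrix of these vectors. Since shifts and $\bD$ are orthogonal, every $X_i,Y_i$ is a unit vector and $\iprod{X_i,Y_i}=\iprod{\bD\bx,\bD\by}=\iprod{\bx,\by}=\cos\theta$; thus the diagonal of $G$ is $1$ and each within-pair entry is exactly $\cos\theta$. The only random entries are the cross-pair inner products, e.g. $\iprod{X_i,X_j}=\iprod{\bD\bx,s_{\rightarrow(j-i)}(\bD\bx)}$ and $\iprod{X_i,Y_j}=\iprod{\bD\bx,s_{\rightarrow(j-i)}(\bD\by)}$ for $i\ne j$. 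Each of these has exactly the form controlled by Lemma~\ref{lem:small-dproduct} (one argument is a unit vector, the other has $\ell_\infty$ norm $\le\rho$), so each is at most $\gamma_0:=\rho\sqrt{8\log(4k^2/\delta)}$ in absolute value with probability $\ge 1-2e^{-\gamma_0^2/8\rho^2}$. Taking a union bound over the fewer than $2k^2$ cross-pair entries, I define the good event $\mathcal{E}$ on which all cross-pair inner products are bounded by $\gamma_0$; then $\Pr[\mathcal{E}]\ge 1-\delta$. Everything below is carried out on $\mathcal{E}$.

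Next comes the construction. I process the pairs in order $i=0,1,\dots,k-1$, and let $V_{<i}:=\spn{X_j,Y_j:j<i}$ with orthogonal projector $\Pi_{<i}$. Define $\bu_i:=(I-\Pi_{<i})X_i$, $\be_i:=\Pi_{<i}X_i$, and likewise $\bv_i:=(I-\Pi_{<i})Y_i$, $\Bf_i:=\Pi_{<i}Y_i$, so that $X_i=\bu_i+\be_i$ and $Y_i=\bv_i+\Bf_i$ as required by Definition~\ref{def:orthogonality}. Property~1 is immediate: for any $j<i$ we have $\bu_j,\bv_j\in\spn{X_\ell,Y_\ell:\ell\le j}\subseteq V_{<i}$, while $\bu_i,\bv_i\perp V_{<i}$ by construction; hence $\bu_i,\bv_i$ are orthogonal to every $\bu_j,\bv_j$ with $j\ne i$ (using symmetry in the pair indices). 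It remains to bound $\norm{\be_i}$ and $\norm{\Bf_i}$ by $\gamma=4\sqrt{\rho}$.

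For the error bound I use the standard projection identity. Writing $G_{<i}$ for the Gram matrix of the (at most $2k$) spanning vectors of $V_{<i}$ and $c$ for the column of inner products of $X_i$ against them, one has $\norm{\be_i}^2=\norm{\Pi_{<i}X_i}^2=c^{T}G_{<i}^{-1}c\le\norm{c}^2/\lambda_{\min}(G_{<i})$. On $\mathcal{E}$ the numerator is small, $\norm{c}^2=\sum_{j<i}(\iprod{X_i,X_j}^2+\iprod{X_i,Y_j}^2)\le 2k\gamma_0^2$, while the denominator is bounded away from zero: $G_{<i}$ equals a block-diagonal matrix (with $2\times 2$ blocks of smallest eigenvalue $1-\cos\theta$) plus a perturbation $E_{<i}$ whose off-block entries are each $\le\gamma_0$, so $\lambda_{\min}(G_{<i})\ge(1-\cos\theta)-\norm{E_{<i}}$. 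Bounding $\norm{E_{<i}}$ (e.g. by its Frobenius norm $\le 2k\gamma_0$, or via Gershgorin) and using the hypothesis on $\rho$ shows $\norm{E_{<i}}\le\tfrac12(1-\cos\theta)$, whence $\lambda_{\min}(G_{<i})\ge\tfrac12(1-\cos\theta)$. Combining, $\norm{\be_i}^2\le 4k\gamma_0^2/(1-\cos\theta)$, and the identical bound holds for $\norm{\Bf_i}$.

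Finally I substitute $\gamma_0^2=8\rho^2\log(4k^2/\delta)$ and use $1-\cos\theta\ge\tfrac{2}{\pi^2}\theta^2$ together with the standing assumption $\rho<\theta^2/(16k\log(k/\delta))$, which is exactly the form that makes the factor $k\rho\log(k/\delta)$ cancel the $\theta^2$ in the denominator; this collapses the bound to $\norm{\be_i}^2=O(\rho)$ and, after fixing the absolute constants, to $\norm{\be_i},\norm{\Bf_i}\le 4\sqrt{\rho}=\gamma$. The main obstacle is precisely this denominator: projecting onto a span that contains nearly-parallel within-pair vectors (since $\iprod{X_i,Y_i}=\cos\theta$ is close to $1$ for small $\theta$) is ill-conditioned, which is why the spread parameter $\rho$ must be smaller by a factor $\theta^2$. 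Care is also needed in tracking constants and in reconciling the $\log(4k^2/\delta)$ from the union bound with the $\log(k/\delta)$ in the hypothesis, since both the guarantee $\norm{E_{<i}}\le\tfrac12(1-\cos\theta)$ and the final constant $4$ depend on these estimates being taken with enough slack.
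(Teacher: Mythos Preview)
Your proposal is correct and follows essentially the same approach as the paper: both proofs use Lemma~\ref{lem:small-dproduct} with a union bound to control all cross-pair inner products, then build $\bu_i,\bv_i$ by a Gram--Schmidt-type greedy projection onto the orthogonal complement of the span of earlier pairs, and bound $\norm{\be_i}$ via the smallest eigenvalue of the Gram matrix of the spanning vectors together with Cauchy--Schwarz on the vector of cross inner products. The only differences are cosmetic (the paper phrases the eigenvalue step through $\sigma_{2k}$ of the stacked matrix and a coefficient bound rather than the identity $\norm{\Pi_{<i}X_i}^2=c^TG_{<i}^{-1}c$, and uses diagonal dominance of $B^TB-I$ rather than your block-diagonal-plus-perturbation argument), and your use of $1-\cos\theta\ge 2\theta^2/\pi^2$ is in fact the correct inequality here.
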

The proof of the lemma is quite technical, and is moved to Appendix~\ref{app:lem:jl-orthogonality}. 

Now suppose we have that the shifts $X_i, Y_i$ satisfy $(\gamma, k)$-orthogonality for some $\gamma >0$.  Suppose $\bu_i, \bv_i, \be_i, \Bf_i$ are as defined earlier.  $(\gamma, k)$-orthogonality gives us that $\norm{\be_i}, \norm{\Bf_i} < \gamma$, which is $\ll 1$.  Roughly speaking, we use this to say that {\em most of the time}, $\sign( \iprod{\br, X_i} = \iprod{\br, \bu_i} )$.  Thus determining if $\sign( \iprod{\br, X_i} ) = \sign (\iprod{\br, Y_i})$ is essentially equivalent to determining if $\sign( \iprod{\br, \bu_i} ) = \sign (\iprod{\br, \bv_i})$.  But the latter quantities, by orthogonality, are indepedent! (because the signs depend only on the projection of $\br$ onto the span of $\bu_i, \bv_i$, which is independent for different $i$).\footnote{ Again, using the property of multi-variate Gaussians that the projections onto orthogonal directions are orthogonal.}  The main lemma of the second step is the following:
\begin{lemma}\label{lem:orthog-to-thm}
Let $(X_i, Y_i)_{i=1}^k$ be a set of vectors satisfying $(\gamma, k)$-orthogonality and $\angle(X_i, Y_i) = \theta$ for all $i$.  Then for any $\delta>0$ and $k > \max\{1/\gamma, \log(4/\delta) \}$, we have
\[ \Pr \left[  \left| \frac{1}{k} \sum_{i} (\sign{\iprod{\br, X_i}} \ne \sign{ \iprod{\br, Y_i} }) - \frac{\theta}{\pi} \right| > \gamma \cdot (12 \log (2k/\delta)) \right] < 1-\delta. \]
The probability here is over the choice of $\br$.
\end{lemma}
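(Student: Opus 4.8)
The plan is to couple the actual sign-disagreement indicators with a collection of \emph{independent} Bernoulli variables built from the orthogonal parts $\bu_i, \bv_i$, and then to control three sources of error separately. Write $Z_i := \mathbf{1}[\sign\iprod{\br, X_i} \ne \sign\iprod{\br, Y_i}]$ and $Z_i' := \mathbf{1}[\sign\iprod{\br, \bu_i} \ne \sign\iprod{\br, \bv_i}]$. By the first property of $(\gamma,k)$-orthogonality, $\sign\iprod{\br, \bu_i}$ and $\sign\iprod{\br, \bv_i}$ depend only on the projection of $\br$ onto $V_i := \spn{\bu_i, \bv_i}$, and these subspaces are mutually orthogonal; since projections of a standard Gaussian onto orthogonal subspaces are independent, the variables $Z_0', \dots, Z_{k-1}'$ are mutually independent. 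Moreover each $Z_i'$ is Bernoulli with mean $p_i = \angle(\bu_i, \bv_i)/\pi$, by the simhash identity quoted in the introduction.

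Next I would decompose the quantity of interest as
\[ \frac{1}{k}\sum_i Z_i - \frac{\theta}{\pi} = \underbrace{\frac{1}{k}\sum_i (Z_i - Z_i')}_{\text{coupling}} + \underbrace{\frac{1}{k}\sum_i (Z_i' - p_i)}_{\text{fluctuation}} + \underbrace{\frac{1}{k}\sum_i (p_i - \tfrac{\theta}{\pi})}_{\text{bias}}. \]
For the bias term I would argue that perturbing the unit vectors $X_i, Y_i$ by residuals $\be_i, \Bf_i$ of norm $<\gamma$ changes the angle by at most $O(\gamma)$, so $|p_i - \theta/\pi| = O(\gamma)$ and the bias is $O(\gamma)$. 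For the fluctuation term, the $Z_i'$ are independent and bounded, so a Chernoff/Hoeffding bound gives $|\frac1k\sum_i(Z_i'-p_i)| \le O(\sqrt{\log(1/\delta)/k})$ with probability $1-\delta$; using $k > 1/\gamma$ together with the extra $\log(2k/\delta)$ factor present in the target bound, this is absorbed into $\gamma\cdot O(\log(2k/\delta))$ (this is the step where the regime $\gamma \gtrsim 1/\sqrt{k}$ coming from the hypothesis on $\rho$ is implicitly used).

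The heart of the argument is the coupling term, i.e. bounding the number of indices at which $Z_i \ne Z_i'$. A flip at $i$ forces $\sign\iprod{\br, X_i} \ne \sign\iprod{\br, \bu_i}$ or $\sign\iprod{\br, Y_i} \ne \sign\iprod{\br, \bv_i}$; since $X_i = \bu_i + \be_i$ with $\norm{\be_i} < \gamma$, the first of these requires $|\iprod{\br, \bu_i}| \le |\iprod{\br, \be_i}|$. The difficulty, which I expect to be the main obstacle, is that these flip events are \emph{not} independent across $i$: although $\iprod{\br,\bu_i}$ lives in the orthogonal slice $V_i$, the residual $\be_i$ is spread over all the subspaces, so all flip events share the single Gaussian $\br$. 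I would resolve this by a threshold split at $\tau := \gamma\sqrt{2\log(2k/\delta)}$. First, $\iprod{\br, \be_i}$ is Gaussian with standard deviation $\norm{\be_i} < \gamma$, so $\Pr[|\iprod{\br,\be_i}| > \tau] \le \delta/k$; a union bound over all $i$ (and the $\Bf_i$) shows that with probability $1-\delta$ \emph{every} residual projection is at most $\tau$, so each flip forces the \emph{independent} event $|\iprod{\br, \bu_i}| \le \tau$ (resp.\ $|\iprod{\br,\bv_i}| \le \tau$). Second, since $\bu_i$ is (essentially) a unit vector, $\iprod{\br,\bu_i}$ is standard Gaussian and $\Pr[|\iprod{\br,\bu_i}| \le \tau] \le O(\tau)$; as these events are independent across $i$, a Chernoff bound puts their count at $O(k\tau + \log(1/\delta))$ with probability $1-\delta$, i.e.\ a flip fraction of $O(\gamma\sqrt{\log(2k/\delta)}) \le \gamma\cdot O(\log(2k/\delta))$.

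Combining the bias, fluctuation, and coupling bounds and adjusting constants (replacing $\delta$ by $\delta/3$ in each step before a final union bound) yields the claimed deviation $\gamma\cdot 12\log(2k/\delta)$ with the stated failure probability. The key device that makes the argument go through is choosing $\tau$ of order $\gamma\sqrt{\log(k/\delta)}$: large enough that the dependent ``large residual projection'' events can be union-bounded away entirely, leaving only the independent ``small clean inner product'' events, to which standard concentration applies.
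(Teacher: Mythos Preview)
Your proposal is correct and matches the paper's proof: the paper likewise sets a threshold $\eta = 2\gamma\sqrt{\log(4k/\delta)}$, union-bounds away the event $H=\{\max_i\{|\iprod{\br,\be_i}|,|\iprod{\br,\Bf_i}|\}\ge\eta\}$, and then applies Chernoff to the independent events $E_i=\{\min(|\iprod{\br,\bu_i}|,|\iprod{\br,\bv_i}|)<\eta\}$ and $F_i=\neg E_i\wedge(\sign\iprod{\br,\bu_i}\ne\sign\iprod{\br,\bv_i})$, which together correspond to your coupling and fluctuation-plus-bias terms. The packaging differs slightly (the paper does not separate a bias term, and works with $E_i,F_i$ directly rather than your $Z_i,Z_i'$), but the key device you single out---the threshold split that kills the dependent residual events and leaves only independent small-ball events---is exactly the paper's $\eta$.
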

The proof is deferred to Appendix~\ref{app:lem:orthog-to-thm}. 

We can now complete the proof of Theorem~\ref{thm:jl-large-angle}.  It essentially follows  using Lemma~\ref{lem:jl-orthogonality} and Lemma~\ref{lem:orthog-to-thm}.  Note that we can apply Lemma~\ref{lem:orthog-to-thm} because the angle between $X_i$ and $Y_i$ is also $\theta$ for each $i$ (since they are shifts of $\bx, \by$).

Formally, using the value of $\gamma$ defined in Lemma~\ref{lem:jl-orthogonality}, we have that the vectors $X_i, Y_i$ are $(\gamma, k)$ orthogonal with probability $1-\delta$.  Conditioned on this, the probability that the conclusion of Lemma~\ref{lem:orthog-to-thm} holds with probability $1-\delta$.  Thus the overall probability of success is at least $1-2\delta$.
The theorem is thus easily proved by plugging in the value of $\gamma$ from Lemma~\ref{lem:jl-orthogonality}, together with $\rho < 1$.  This completes the proof of the Theorem.

\section{Optimized Binary Embedding}\label{sec:opt}
In the previous section, we showed the randomized CBE has LSH-like angle preserving properties, especially for high-dimensional data.  One problem with the randomized CBE method is that it does not utilize the underlying data distribution while generating the matrix $\mathbf{R}$. In the next section, we propose to learn $\mathbf{R}$ in a data-dependent fashion, to minimize the distortions due to circulant projection and binarization.

We propose data-dependent CBE (CBE-opt), by optimizing the projection matrix with a novel time-frequency alternating optimization. 
We consider the following objective function in learning the $d$-bit CBE. The extension of learning $k < d$ bits will be shown in Section \ref{subsec:k}.
\begin{align}
\argmin_{\mathbf{B}, \mathbf{r}}   \quad  &|| \mathbf{B} - \mathbf{X} \mathbf{R}^T ||_F^2 +  \lambda|| \mathbf{R} \mathbf{R}^T - \mathbf{I}||_F^2 \label{eq:obj}\\
\text{s.t.}  \quad &  \mathbf{R} = \circR(\mathbf{r}), \nonumber
\end{align}
where $\mathbf{X} \in  \mathbb{R}^{N \times d}$, is the data matrix containing $n$ training points: $\mathbf{X} = [\mathbf{x}_0, \cdots, \mathbf{x}_{N-1}]^T$, and
  $\mathbf{B} \in  \{-1,1\}^{N \times d}$ is the corresponding binary code matrix.\footnote{If the data is $\ell_2$ normalized, we can set $\mathbf{B} \in  \{-1/\sqrt{d},1/\sqrt{d} \}^{N \times d}$ to make $\mathbf{B}$ and $\mathbf{X}\mathbf{R}^T$ more comparable. This does not empirically influence the performance.}

In the above optimization, the first term minimizes distortion due to binarization. The second term tries to make the projections (rows of $\mathbf{R}$, and hence the corresponding bits) as uncorrelated as possible. In other words, this helps to reduce the redundancy in the learned code. 
%
If $\mathbf{R}$ were to be an orthogonal matrix, the second term will vanish and the optimization would find the best rotation such that the distortion due to binarization is minimized. 
However, being a circulant matrix, $\mathbf{R}$, in general, will not be orthogonal\footnote{We note that the rank of the circulant matrices can range from 1 (an all-1 matrix) to $d$ (an identity matrix). }. 
Similar objective has been used in previous works including \cite{gongiterative, gonglearning} and \cite{wang2010sequential}.

\subsection{The Time-Frequency Alternating Optimization}
The above is a difficult non-convex combinatorial optimization problem.
In this section we propose a novel approach to efficiently find a local solution. The idea is to alternatively optimize the objective by fixing $\mathbf{r}$, and $\mathbf{B}$, respectively. 
For a fixed $\mathbf{r}$, optimizing $\mathbf{B}$ can be easily performed in the input domain (``time'' as opposed to ``frequency''). 
For a fixed $\mathbf{B}$, the circulant structure of $\mathbf{R}$ makes it difficult to optimize the objective in the input domain. Hence we propose a novel method, by optimizing $\mathbf{r}$ in the frequency domain based on DFT. This leads to a very efficient procedure. 
%

\textbf{For a fixed $\mathbf{r}$}. 
The objective is independent on each element of $\mathbf{B}$. Denote $B_{ij}$ as the element of the $i$-th row and $j$-th column of $\mathbf{B}$. It is easy to show that $\mathbf{B}$ can be updated as:
\begin{align}
& B_{ij} = \begin{cases}
   1 & \text{if } \mathbf{R}_{j\cdot}\mathbf{x}_i \geq 0 \\
   -1 & \text{if } \mathbf{R}_{j\cdot}\mathbf{x}_i < 0
  \end{cases}, \\
&  i = 0, \cdots, N-1. \quad j = 0, \cdots, d-1. \nonumber
\end{align}
\textbf{For a fixed $\mathbf{B}$}. Define $\tilde{\mathbf{r}}$ as the DFT of the circulant vector $\tilde{\mathbf{r}} := \mathcal{F}(\mathbf{r})$. Instead of solving $\mathbf{r}$ directly, we propose to solve $\tilde{\mathbf{r}}$, from which $\mathbf{r}$ can be recovered by IDFT. 

Key to our derivation is the fact that DFT projects the signal to a set of orthogonal basis. Therefore the $\ell_2$ norm can be preserved. Formally, according to Parseval's theorem , for any $\mathbf{t} \in \mathbb{C}^d$ \cite{oppenheim1999discrete}, 
\begin{equation}
|| \mathbf{t}||_2^2 = (1/d) ||\mathcal{F}(\mathbf{t})||_2^2. 
\end{equation}

Denote $\text{diag}(\cdot)$ as the diagonal matrix formed by a vector. Denote $\Re(\cdot)$ and $\Im(\cdot)$ as the real and imaginary parts, respectively. We use $\mathbf{B}_{i\cdot}$ to denote the $i$-th row of $\mathbf{B}$. With complex arithmetic, the first term in (\ref{eq:obj}) can be expressed in the frequency domain as:

\begin{align}
& || \mathbf{B} - \mathbf{X} \mathbf{R}^T ||_F^2 = \frac{1}{d}\sum_{i = 0}^{N-1} ||\mathcal{F}(\mathbf{B}^T_{i \cdot} -  \mathbf{R} \mathbf{x}_i) ||_2^2 \label{eq:bxr}
\\ 
= & \frac{1}{d}  \sum_{i = 0}^{N-1}  ||\mathcal{F} (\mathbf{B}^T_{i \cdot}) -  \tilde{\mathbf{r}} \circ \mathcal{F}(\mathbf{x}_i) ||_2^2 = \frac{1}{d}\sum_{i = 0}^{N-1} ||\mathcal{F} (\mathbf{B}^T_{i \cdot}) - \text{diag}({\mathcal{F}(\mathbf{x}_i)} ) \tilde{\mathbf{r}} ||_2^2 \nonumber \\
= & \frac{1}{d}\sum_{i = 0}^{N-1} \left(\mathcal{F} (\mathbf{B}^T_{i \cdot}) - \text{diag}({\mathcal{F}(\mathbf{x}_i)} ) \tilde{\mathbf{r}} \right)^T 
\left(\mathcal{F} (\mathbf{B}^T_{i \cdot}) - \text{diag}({\mathcal{F}(\mathbf{x}_i)} ) \tilde{\mathbf{r}} \right) \nonumber \\
= & \frac{1}{d} \Big[ \Re(\tilde{\mathbf{r}})^T \mathbf{M} \Re(\tilde{\mathbf{r}})  +  \Im(\tilde{\mathbf{r}})^T \mathbf{M} \Im(\tilde{\mathbf{r}}) 
 +  \Re(\tilde{\mathbf{r}})^T \mathbf{h} 
 +  \Im(\tilde{\mathbf{r}})^T \mathbf{g} \Big]  +  ||\mathbf{B}||_F^2, \nonumber
\end{align}
where,
\begin{align}
&\mathbf{M} =  \text{diag}\big(
\sum_{i=0}^{N-1} \Re(\mathcal{F}(\mathbf{x}_i))\circ \Re(\mathcal{F}(\mathbf{x}_i))
 +  \Im(\mathcal{F}(\mathbf{x}_i)) \circ \Im(\mathcal{F}(\mathbf{x}_i))
  \big),  \\ 
&\mathbf{h} =  -2  \sum_{i=0}^{N-1} 
\Re(\mathcal{F}(\mathbf{x}_i))  \circ 
\Re(\mathcal{F} (\mathbf{B}^T_{i \cdot})) 
 +  
\Im(\mathcal{F}(\mathbf{x}_i)) \circ
\Im(\mathcal{F} (\mathbf{B}^T_{i \cdot})),  \\ 
&\mathbf{g}  = 2 \sum_{i=0}^{N-1}
\Im(\mathcal{F}(\mathbf{x}_i)) \circ
\Re(\mathcal{F} (\mathbf{B}^T_{i \cdot}))
- 
\Re(\mathcal{F}(\mathbf{x}_i)) \circ
\Im(\mathcal{F} (\mathbf{B}^T_{i \cdot})). 
\end{align}
The above can be derived based on the following fact. For any $\mathbf{Q} \in \mathbb{C}^{d \times d}$, $\mathbf{s}$, $\mathbf{t} \in \mathbb{C}^d$,
\begin{align}
& ||\mathbf{s} - \mathbf{Q}\mathbf{t} ||_2^2 = (\mathbf{s}-\mathbf{Q}\mathbf{t})^H  (\mathbf{s}-\mathbf{Q}\mathbf{t}) \\
= & \mathbf{s}^H \mathbf{s} - \mathbf{s}^H \mathbf{Q}\mathbf{t} - \mathbf{t}^H \mathbf{Q}^H \mathbf{s} + \mathbf{t}^H \mathbf{Q}^H A \mathbf{t} \nonumber\\
= & \Re(\mathbf{s})^T \Re(\mathbf{s}) + \Im(\mathbf{s})^T \Im(\mathbf{s}) \nonumber - 2 \Re(\mathbf{t})^T (\Re(\mathbf{Q})^T \Re(\mathbf{s})  + \Im(\mathbf{Q})^T \Im(\mathbf{s})  ) \nonumber \\ & 
+ 2 \Im(\mathbf{t})^T (\Im(\mathbf{Q})^T \Re(\mathbf{s})  - \Re(\mathbf{Q})^T \Im(\mathbf{s}) )  + \Re(\mathbf{t})^T (\Re(\mathbf{Q})^T \Re(\mathbf{Q}) + \Im(\mathbf{Q})^T \Im(\mathbf{Q})) \Re(\mathbf{t})\nonumber \\
& + \Im(\mathbf{t})^T (\Re(\mathbf{Q})^T \Re(\mathbf{Q}) + \Im(\mathbf{Q})^T \Im(\mathbf{Q})) \Im(\mathbf{t}) + 2\Re(\mathbf{t})^T (\Im(\mathbf{Q})^T \Re(\mathbf{Q}) - \Re(\mathbf{Q})^T \Im(\mathbf{Q})) \Im(\mathbf{t}). \nonumber
\end{align}

For the second term in (\ref{eq:obj}), we note that the circulant matrix can be diagonalized by DFT matrix $\mathbf{F}_d$ and its conjugate transpose $\mathbf{F}_d^H$. Formally, for $\mathbf{R} = \circR(\mathbf{r})$, $\mathbf{r} \in \mathbb{R}^d$,
\begin{equation}
\mathbf{R} = (1/d) \mathbf{F}_d^H \text{diag}(\mathcal{F}(\mathbf{r}))\mathbf{F}_d.
\end{equation}
Let $\Tr(\cdot)$ be the trace of a matrix. Therefore,
\begin{align}
&||\mathbf{R} \mathbf{R}^T - \mathbf{I}||_F^2 = ||\frac{1}{d} \mathbf{F}_d^H ( \text{diag}(\tilde{\mathbf{r}})^H\text{diag}(\tilde{\mathbf{r}}) - \mathbf{I} )
 \mathbf{F}_d||_F^2 \\
  = & \Tr  \left[   \frac{1}{d} \mathbf{F}_d^H 
  ( \text{diag}(\tilde{\mathbf{r}})^H\text{diag}(\tilde{\mathbf{r}}) - \mathbf{I} )^H
  ( \text{diag}(\tilde{\mathbf{r}})^H\text{diag}(\tilde{\mathbf{r}}) - \mathbf{I} )
  \mathbf{F}_d  \right] \nonumber \\
    = & \Tr \left[   ( \text{diag}(\tilde{\mathbf{r}})^H\text{diag}(\tilde{\mathbf{r}}) - \mathbf{I} )^H
    ( \text{diag}(\tilde{\mathbf{r}})^H\text{diag}(\tilde{\mathbf{r}}) - \mathbf{I} ) \right] \nonumber \\
 = &  || \tilde{\mathbf{r}}^H \circ \tilde{\mathbf{r}} - \mathbf{1} ||_2^2  = || \Re(\tilde{\mathbf{r}})^2 + \Im(\tilde{\mathbf{r}})^2 - \mathbf{1} ||_2^2. \nonumber
 \label{eq:second}
\end{align}

Furthermore, as $\mathbf{r}$ is real-valued, additional constraints on $\tilde{\mathbf{r}}$ are needed. For any $u \in \mathbb{C}$, denote $\overline{u}$ as its complex conjugate. We have the following result \cite{oppenheim1999discrete}: For any real-valued vector $\mathbf{t} \in \mathbb{C}^d$, $\mathcal{F}(\mathbf{t})_0 \text{ is real-valued}$, and
\begin{align}
\mathcal{F}(\mathbf{t})_{d-i} = \overline{ \mathcal{F}(\mathbf{t})_{i} }, \quad i = 1, \cdots,  \lfloor d/2 \rfloor. 
\end{align}

From (\ref{eq:bxr}) $-$ (\ref{eq:second}), the problem of optimizing $\tilde{\mathbf{r}}$ becomes
\begin{align}
\argmin_{\tilde{\mathbf{r}}} \quad & 
\Re(\tilde{\mathbf{r}})^T \mathbf{M} \Re(\tilde{\mathbf{r}}) + \Im(\tilde{\mathbf{r}})^T \mathbf{M} \Im(\tilde{\mathbf{r}}) 
+ \Re(\tilde{\mathbf{r}})^T \mathbf{h} \nonumber
\\& + \Im(\tilde{\mathbf{r}})^T \mathbf{g} 
 + \lambda d || \Re(\tilde{\mathbf{r}})^2 + \Im(\tilde{\mathbf{r}})^2 - \mathbf{1} ||_2^2 \\
\text{s.t.} \quad &  \Im(\tilde{r}_0) = 0  \nonumber\\
&             \Re(\tilde{r}_i) = \Re(\tilde{r}_{d-i}), i = 1, \cdots, \lfloor d/2 \rfloor \nonumber\\
&             \Im(\tilde{r}_i) = -\Im(\tilde{r}_{d-i}) , i = 1, \cdots, \lfloor d/2 \rfloor.\nonumber
\end{align}
The above is non-convex. Fortunately, the objective function can be decomposed, such that we can solve two variables at a time. Denote the diagonal vector of the diagonal matrix $\mathbf{M}$ as $\mathbf{m}$. The above optimization can then be decomposed to the following sets of optimizations. 
\begin{align} 
&\argmin_{\tilde{r}_0} \quad  m_0 \tilde{r}_0^2
+ h_0 \tilde{r}_0 + \lambda d \left( \tilde{r}_0^2 - 1 \right) ^2, \text{ s.t. } \tilde{r}_0 = \overline{\tilde{r}_0}. \label{eq:0}\\
&\argmin_{\tilde{r}_i} \quad (m_i + m_{d-i}) ( \Re(\tilde{r}_i)^2 + \Im(\tilde{r}_i)^2 )  \label{eq:1} + 2 \lambda d \left( \Re(\tilde{r}_i)^2 + \Im(\tilde{r}_i)^2 - 1 \right) ^2 \nonumber \\
& \quad\quad\quad\quad\quad + (h_i + h_{d-i}) \Re(\tilde{r}_i) + (g_i - g_{d-i}) \Im(\tilde{r}_i),  \quad i = 1, \cdots, \lfloor d/2 \rfloor. \nonumber
\end{align}
In (\ref{eq:0}), we need to minimize a $4^{th}$ order polynomial with one variable, with the closed form solution readily available. 
In (\ref{eq:1}), we need to minimize a $4^{th}$ order polynomial with two variables.
Though the closed form solution is hard to find (requiring solution of a cubic bivariate system), a local minima can be found by gradient descent, which in practice has constant running time for such small-scale problems.
The overall objective is guaranteed to be non-increasing in each step. In practice, we find that a good solution can be reached within just 5-10 iterations.  Therefore in practice, the proposed time-frequency alternating optimization procedure has running time $\mathcal{O}(N d \log d)$. 

\subsection{Learning with Dimensionality Reduction}
\label{subsec:k}
In the case of learning $k < d$ bits, we need to solve the following optimization problem:
\begin{align}
\hspace{-0.2cm} 
\argmin_{\mathbf{B}, \mathbf{r}} \quad  & || \mathbf{B}\mathbf{P}_k  - \mathbf{X} \mathbf{P}_k^T\mathbf{R}^T  ||_F^2 +  \lambda|| \mathbf{R}\mathbf{P}_k \mathbf{P}_k^T \mathbf{R}^T - \mathbf{I}||_F^2  \\
\text{s.t.}   \quad &  \mathbf{R} = \circR(\mathbf{r}), \nonumber
\end{align}
in which 
$\mathbf{P}_k =  
\begin{bmatrix}
  \mathbf{I}_k & \mathbf{O} \\
  \mathbf{O}   & \mathbf{O}_{d-k}
 \end{bmatrix}$,
$\mathbf{I}_k$ is a $k \times k$ identity matrix, and 
$\mathbf{O}_{d-k}$ is a  $(d-k) \times (d-k)$ all-zero matrix.

In fact, the right multiplication of $\mathbf{P}_k$  can be understood as a ``temporal cut-off'', which is equivalent to a
frequency domain convolution. This makes the optimization difficult, as the objective in frequency domain can no longer be decomposed. 
To address this issue, we propose a simple solution in which $B_{ij} = 0$, $i = 0, \cdots, N-1, j = k, \cdots, d-1$ in (\ref{eq:obj}). 
Thus, the optimization procedure remains the same, and the cost is also $\mathcal{O}(N d \log d)$. We will show in experiments that this heuristic provides good performance in practice.

\section{Discussion}\label{sec:discussion}
\subsection{Limitations of the Theory for Long Codes}
As was shown in earlier works \cite{li2011hashing, gonglearning, sanchez2011high} and as we see in our experiments (Section \ref{sec:exp}), long codes are necessary for high-dimensional data for all binary embedding methods, either randomized or optimized. 

However, when the code length is too large, our theoretical analysis is not optimal. For instance, consider our variance bound when $k > \sqrt{d}$. Here the $\rho$ term always dominates, because for any vector, we have $\rho \ge 1/\sqrt{d}$ (at least one entry of a unit vector is at least $1/\sqrt{d}$). In numeric simulations, we see that the variance drops as $1/k$ for a larger range of $k$, roughly up to $d$.  A similar behavior holds in Theorem~\ref{thm:jl-large-angle}, where the condition $\rho \le \frac{\theta^2}{16 k \log(k/\delta)}$ can hold only when $k < \mathcal{O}(\sqrt{d} / \log d)$. It is an interesting open question to analyze the variance and other concentration properties for larger $k$.

\subsection{Semi-supervised Extension}
In some applications, one can have access to a few labeled pairs of similar and dissimilar data points. Here we show how the CBE formulation can be extended to incorporate such information in learning.  This is achieved by adding an additional objective term $J(\mathbf{R})$.
\begin{align}
\argmin_{\mathbf{B}, \mathbf{r}} \quad & || \mathbf{B} - \mathbf{X} \mathbf{R}^T ||_F^2 
+ \lambda ||\mathbf{R} \mathbf{R}^T - \mathbf{I}||_F^2
+ \mu J(\mathbf{R}) \\
\text{s.t.}  \quad & \mathbf{R} = \circR(\mathbf{r}), \nonumber
\end{align}
\begin{align}
J(\mathbf{R})  =  & 
\sum_{i,j \in \mathcal{M}}  ||\mathbf{R}\mathbf{x}_i  -  \mathbf{R}\mathbf{x}_j  ||_2^2
 -  
\sum_{i,j \in \mathcal{D}} ||\mathbf{R}\mathbf{x}_i  -  \mathbf{R}\mathbf{x}_j  ||_2^2.
\end{align}

Here $\mathcal{M}$ and $\mathcal{D}$ are the set of ``similar'' and ``dissimilar'' instances, respectively. The intuition is to maximize the distances between the dissimilar pairs, and minimize the distances between the similar pairs. Such a term is commonly used in semi-supervised binary coding methods \cite{wang2010sequential}. 
We again use the time-frequency alternating optimization procedure of Section \ref{sec:opt}. For a fixed $\mathbf{r}$, the optimization procedure to update $\mathbf{B}$ is the same.  For a fixed $\mathbf{B}$, optimizing $\mathbf{r}$ is done in frequency domain by expanding $J(\mathbf{R})$ as below, with similar techniques used in Section \ref{sec:opt}. 
\begin{align}
||\mathbf{R}\mathbf{x}_i - \mathbf{R}\mathbf{x}_j  ||_2^2 = (1/d) || \text{diag}(\mathcal{F}(\mathbf{x}_i)-\mathcal{F}(\mathbf{x}_j))  \tilde{\mathbf{r}} ||_2^2. 
\end{align}
Therefore,
\begin{align}
J(\mathbf{R}) = (1/d) ( \Re(\tilde{\mathbf{r}})^T \mathbf{A} \Re(\tilde{\mathbf{r}}) + \Im(\tilde{\mathbf{r}})^T \mathbf{A} \Im(\tilde{\mathbf{r}}) ),
\end{align}
where $\mathbf{A} =  \mathbf{A}_1 + \mathbf{A}_2 - \mathbf{A}_3  - \mathbf{A}_4 $, and
\begin{equation}
\mathbf{A}_1 = \sum_{(i,j) \in \mathcal{M}} \Re(\text{diag}(\mathcal{F}(\mathbf{x}_i) - \mathcal{F}(\mathbf{x}_j)))^T \Re(\text{diag} (\mathcal{F}(\mathbf{x}_i) - \mathcal{F}(\mathbf{x}_j))),
\end{equation}
\begin{equation}
 \mathbf{A}_2 = \sum_{(i,j) \in \mathcal{M}}
\Im(\text{diag} (\mathcal{F}(\mathbf{x}_i) - \mathcal{F}(\mathbf{x}_j)))^T \Im(\text{diag} (\mathcal{F}(\mathbf{x}_i)-\mathcal{F}(\mathbf{x}_j))),
\end{equation}
\begin{equation}
\mathbf{A}_3 = \sum_{(i,j) \in \mathcal{D}}
\Re(\text{diag}(\mathcal{F}(\mathbf{x}_i) - \mathcal{F}(\mathbf{x}_j)))^T \Re(\text{diag} (\mathcal{F}(\mathbf{x}_i) - \mathcal{F}(\mathbf{x}_j))),
\end{equation}
\begin{equation}
\mathbf{A}_4 =  \sum_{(i,j) \in \mathcal{D}}
\Im(\text{diag} (\mathcal{F}(\mathbf{x}_i) - \mathcal{F}(\mathbf{x}_j)))^T \Im(\text{diag} (\mathcal{F}(\mathbf{x}_i)-\mathcal{F}(\mathbf{x}_j))).
\end{equation}

Hence, the optimization can be carried out as in Section \ref{sec:opt}, where $\mathbf{M}$ in (\ref{eq:bxr}) is simply replaced by $\mathbf{M} + \mu \mathbf{A}$. The semi-supervised extension improves over the non-semi-supervised version by 2\% in terms of averaged AUC on the ImageNet-25600 dataset.

\section{Experiments}\label{sec:exp}
To compare the performance of the circulant binary embedding techniques, we conduct experiments on three real-world high-dimensional datasets used by the current state-of-the-art method for generating long binary codes \cite{gonglearning}.  The Flickr-25600 dataset contains 100K images sampled from a noisy Internet image collection. Each image is represented by a $25,600$ dimensional vector.  The ImageNet-51200 contains 100k images sampled from 100 random classes from ImageNet \cite{deng2009imagenet}, each represented by a $51,200$ dimensional vector. The third dataset (ImageNet-25600) is another random subset of ImageNet containing 100K images in $25,600$ dimensional space. All the vectors are normalized to be of unit norm. 

We compare the performance of the randomized (CBE-rand) and learned (CBE-opt) versions of our circulant embeddings with the current state-of-the-art for high-dimensional data,  \emph{i.e.}, bilinear embeddings. We use both the randomized (bilinear-rand) and learned  (bilinear-opt) versions. Bilinear embeddings have been shown to perform similarly or better than another promising technique called Product Quantization \cite{jegou2011product}. Finally, we also compare against the binary codes produced by the baseline LSH method \cite{charikar2002similarity}, which is still applicable to 25,600 and 51,200 dimensional feature but with much longer running time and much more space. We also show an experiment with relatively low-dimensional feature (2048, with Flickr data) to compare against techniques that perform well for low-dimensional data but do not scale to high-dimensional scenario. Example techniques include ITQ \cite{gongiterative}, SH \cite{weiss2008spectral}, SKLSH \cite{raginsky2009locality}, and AQBC \cite{gong2012angular}. 
 
Following \cite{gonglearning, Norouzi11, gordo2011asymmetric}, we use 10,000 randomly sampled instances for training. We then randomly sample 500 instances, different from the training set as queries. The performance (recall@1-100) is evaluated by averaging the recalls of the query instances.
The ground-truth of each query instance is defined as its 10 nearest neighbors based on  $\ell_2$ distance. For each dataset, we conduct two sets of experiments: \textit{fixed-time} where code generation time is fixed and \textit{fixed-bits} where the number of bits is fixed across all techniques. We also show an experiment where the binary codes are used for classification.

The proposed CBE method is found robust to the choice of $\lambda$ in (\ref{eq:obj}). For example, in the retrieval experiments, the performance difference for $\lambda$ = 0.1, 1, 10, is within 0.5\%. Therefore, in all the experiments, we simply fix $\lambda$ = 1. For the bilinear method, in order to get fast computation, the feature vector is reshaped to a near-square matrix, and the dimension of the two bilinear projection matrices are also chosen to be close to square. Parameters for other techniques are tuned to give the best results on these datasets.

\begin{table}
\begin{center}
\begin{tabular}{l|l|l|l}
\hline $d$ & Full projection   & Bilinear projection & Circulant projection \\ 
\hline $2^{15}$ &  $5.44 \times 10^2$ & $2.85$   & $1.11$ \\ 
\hline $2^{17}$ &  -      & $1.91 \times 10^1$  & $4.23$ \\ 
\hline $2^{20}$ (1M) &   -     & $3.76 \times 10^2$ & $3.77 \times 10^1$ \\ 
\hline $2^{24}$ &   -     & $1.22 \times 10^4$     & $8.10 \times 10^2$ \\ 
\hline $2^{27}$ (100M)    &   -    & $2.68\times 10^5$ & $8.15 \times 10^3$ \\ 
\hline 
\end{tabular}
\end{center}
\caption{Computational time (ms) of full projection (LSH, ITQ, SH \emph{etc}.), bilinear projection (Bilinear), and circulant projection (CBE). The time is based on a single 2.9GHz CPU core. The error is within 10\%. An empty cell indicates that the memory needed for that method is larger than the machine limit of 24GB. }
\label{table:time}
\end{table}

\subsection{Computational Time} 
When generating $k$-bit code for $d$-dimensional data, the full projection, bilinear projection, and circulant projection methods have time complexity $O(kd)$, $O(\sqrt{k} d)$, and $O(d \log d)$, respectively. We compare the computational time in Table \ref{table:time} on a fixed hardware. 
Based on our implementation, the computational time of the above three methods can be roughly characterized as $d^2: d\sqrt{d}: 5d\log d $.  Note that faster implementation of FFT algorithms will lead to better computational time for CBE by further reducing the constant factor. Due to the small storage requirement $\mathcal{O}(d)$, and the wide availability of highly optimized FFT libraries, CBE is also suitable for implementation on GPU. Our preliminary tests based on GPU shows up to 20 times speedup compared with CPU. In this paper, for fair comparison, we use same CPU based implementation for all the methods.

\begin{figure*}[!ht]
\centering
\subfigure[\#bit(CBE) = 3,200]
{\includegraphics[width = 3.8cm]{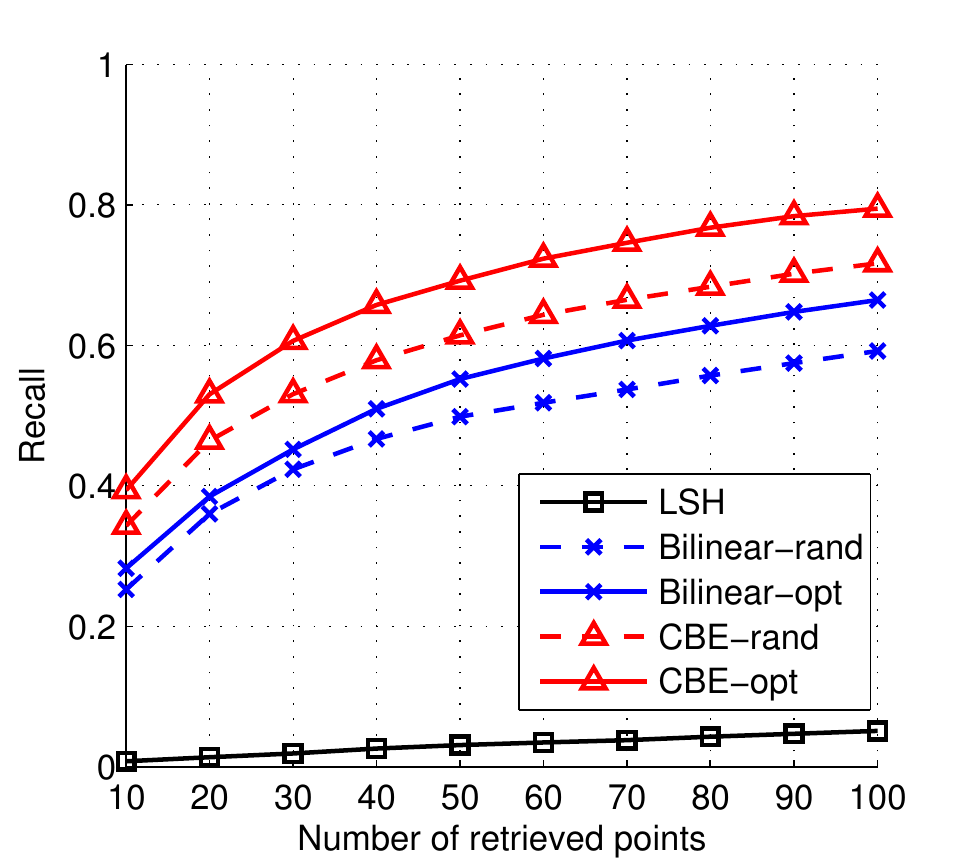}}
\hspace{-0.4cm}
\subfigure[\#bits(CBE) = 6,400]
{\includegraphics[width = 3.8cm]{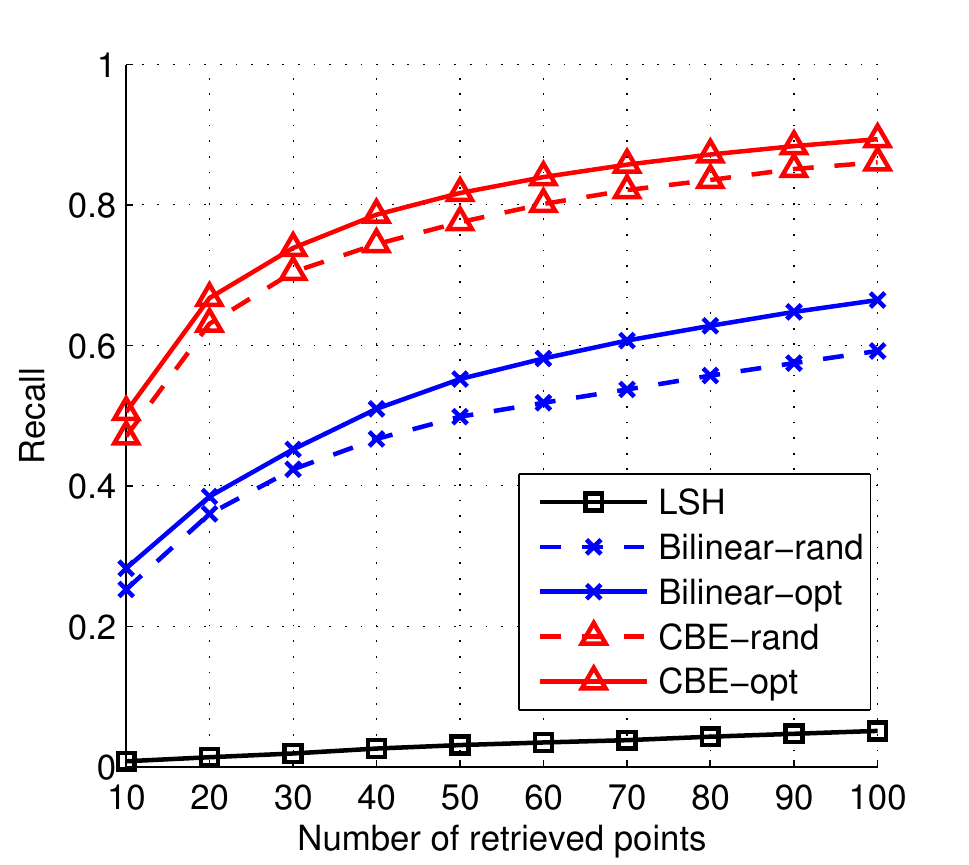}}
\hspace{-0.4cm}
\subfigure[\#bits(CBE) = 12,800]
{\includegraphics[width = 3.8cm]{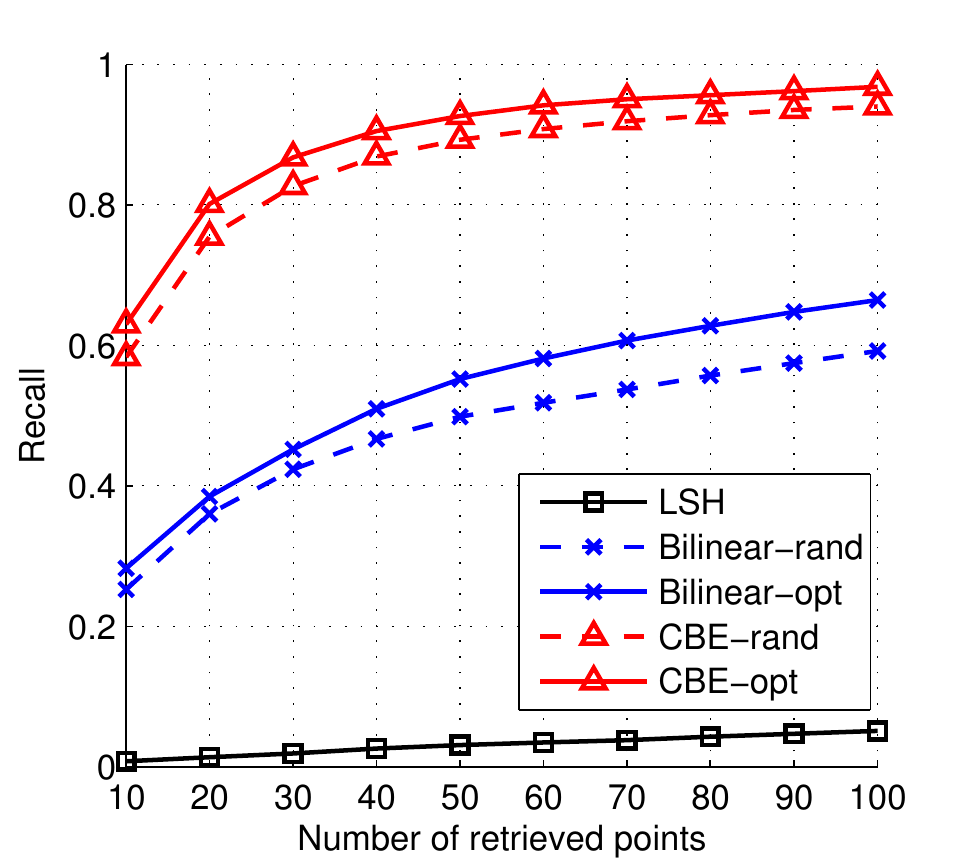}}
\hspace{-0.4cm}
\subfigure[\#bits(CBE) = 25,600]
{\includegraphics[width = 3.8cm]{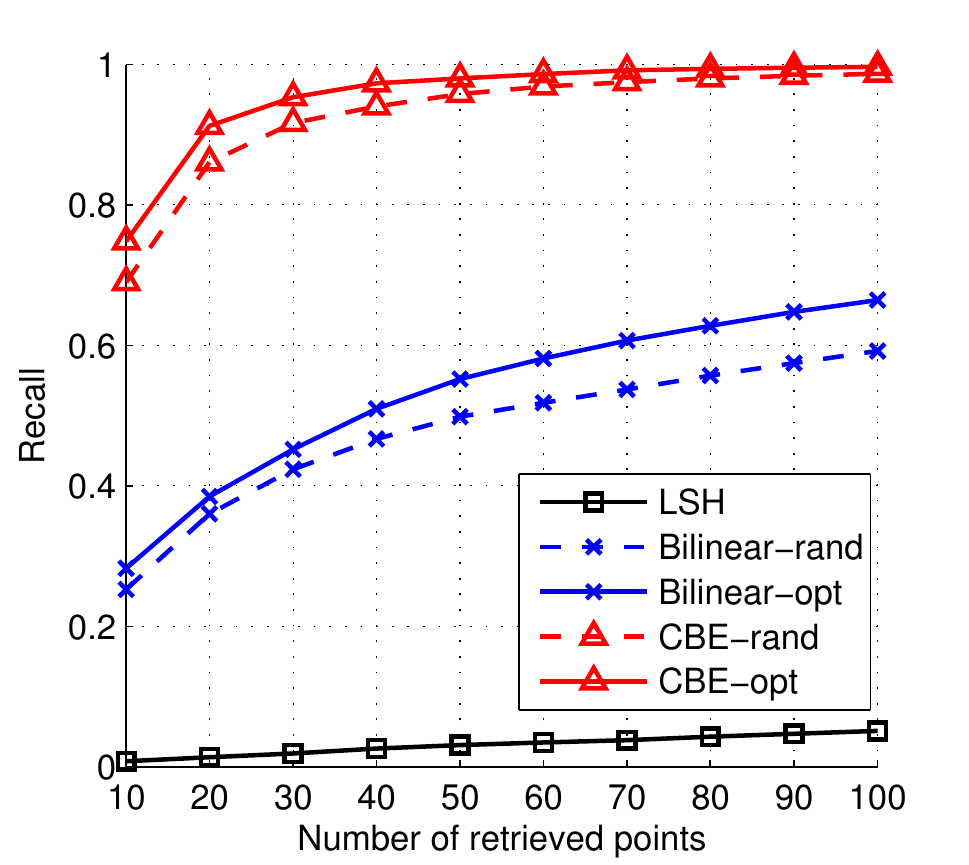}}\\
\subfigure[\# bits (all) = 3,200]
{\includegraphics[width = 3.8cm]{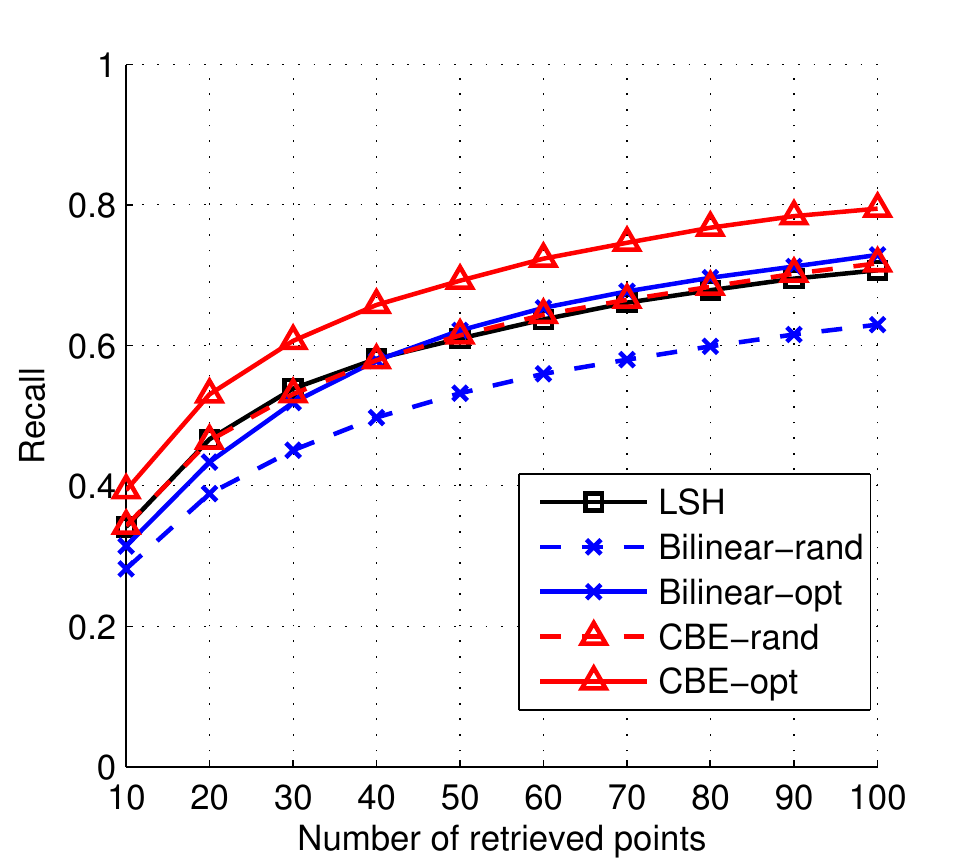}}
\hspace{-0.4cm}
\subfigure[\# bits (all) = 6,400]
{\includegraphics[width = 3.8cm]{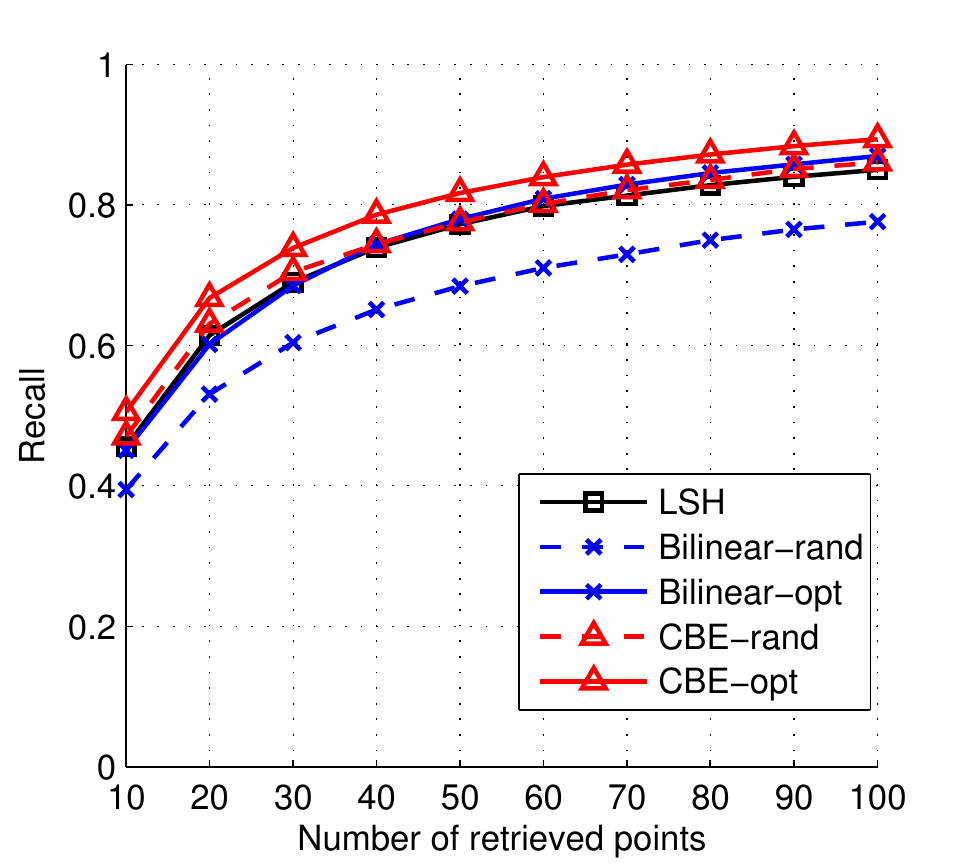}}
\hspace{-0.4cm}
\subfigure[\# bits (all) = 12,800]
{\includegraphics[width = 3.8cm]{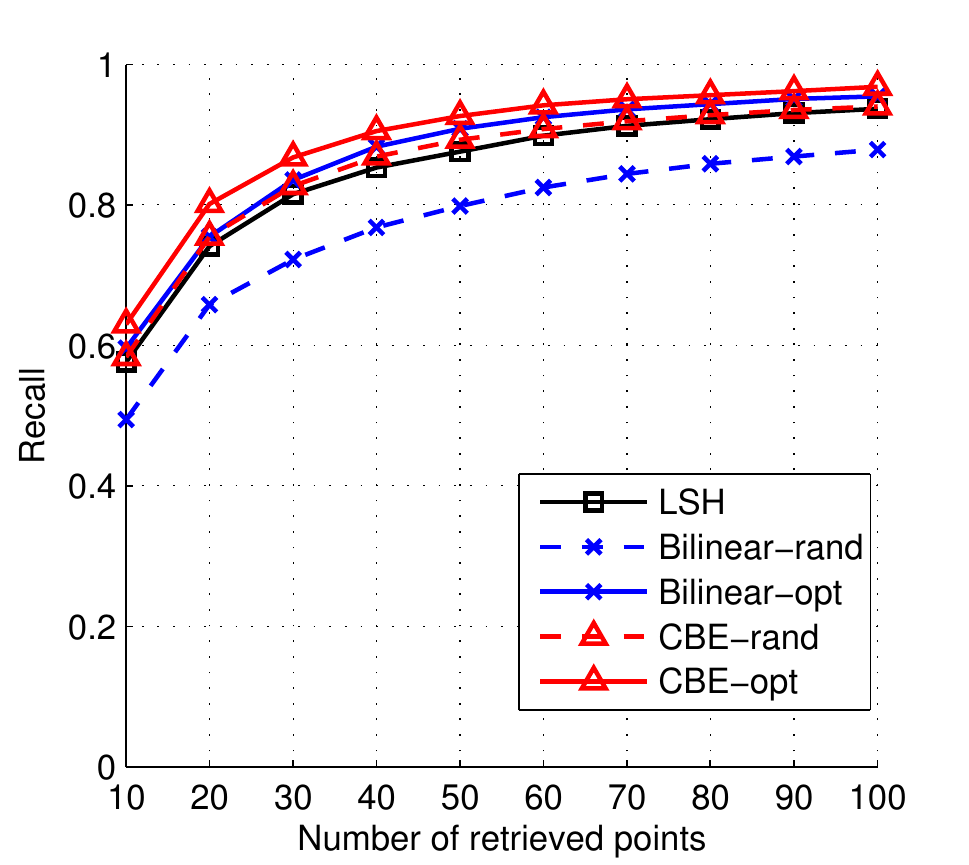}}
\hspace{-0.4cm}
\subfigure[\# bits (all) = 25,600]
{\includegraphics[width = 3.8cm]{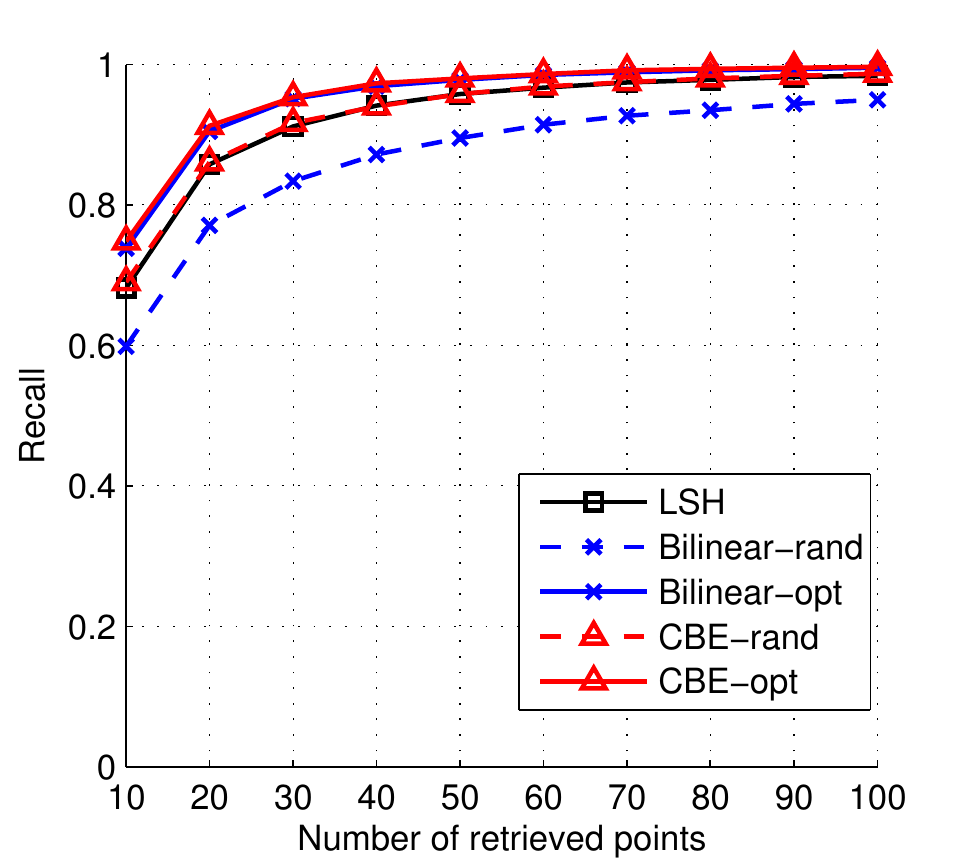}}
\caption{Recall on Flickr-25600. The standard deviation is within 1\%. \textbf{First Row}: Fixed time. ``\# bits'' is the number of bits of CBE. Other methods are using fewer bits to make their computational time identical to CBE. 
\textbf{Second Row}: Fixed number of bits. CBE-opt/CBE-rand are 2-3 times faster than Bilinear-opt/Bilinear-rand, and hundreds of times faster than LSH.}
\label{fig:flickr}
\end{figure*}

\begin{figure*}[!ht]
\centering
\subfigure[\#bits(CBE) = 3,200]
{\includegraphics[width = 3.8cm]{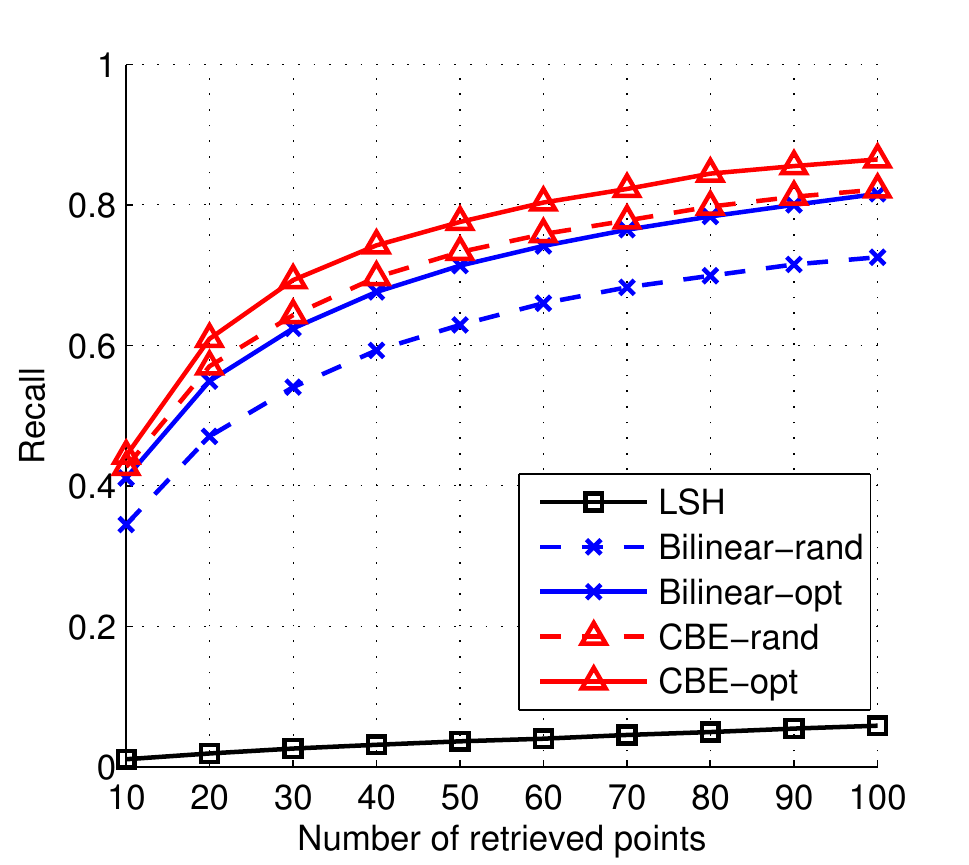}}
\hspace{-0.4cm}
\subfigure[\#bits(CBE) = 6,400]
{\includegraphics[width = 3.8cm]{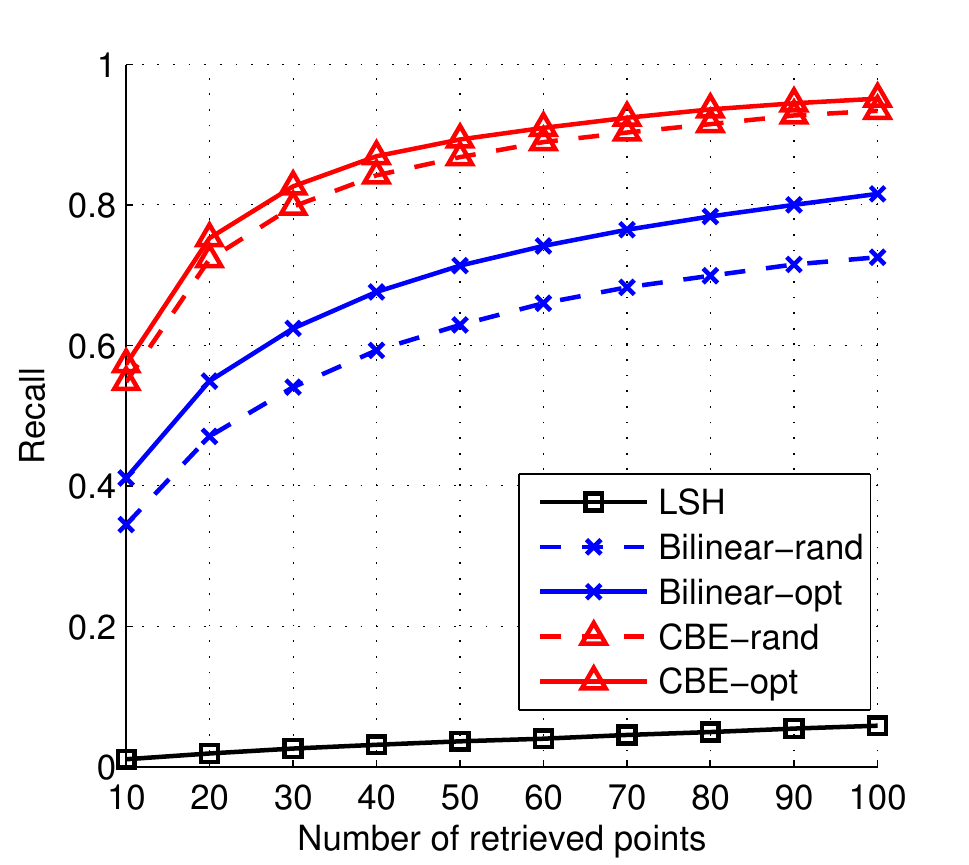}}
\hspace{-0.4cm}
\subfigure[\#bits(CBE) = 12,800]
{\includegraphics[width = 3.8cm]{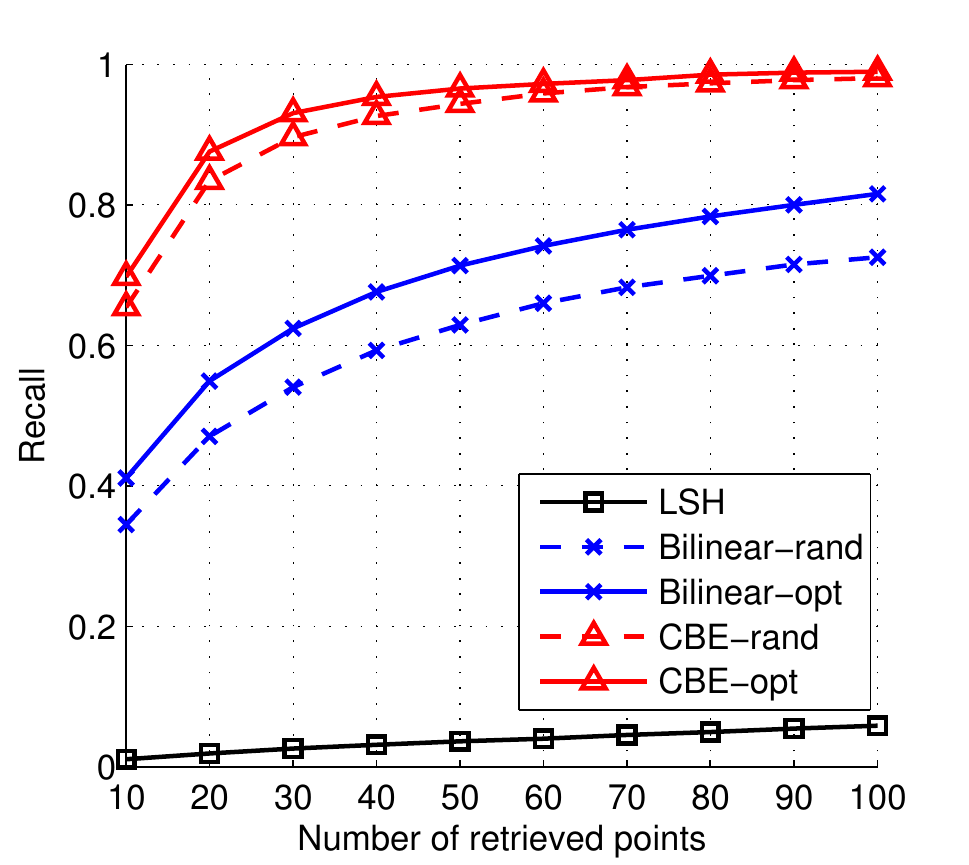}}
\hspace{-0.4cm}
\subfigure[\#bits(CBE) = 25,600]
{\includegraphics[width = 3.8cm]{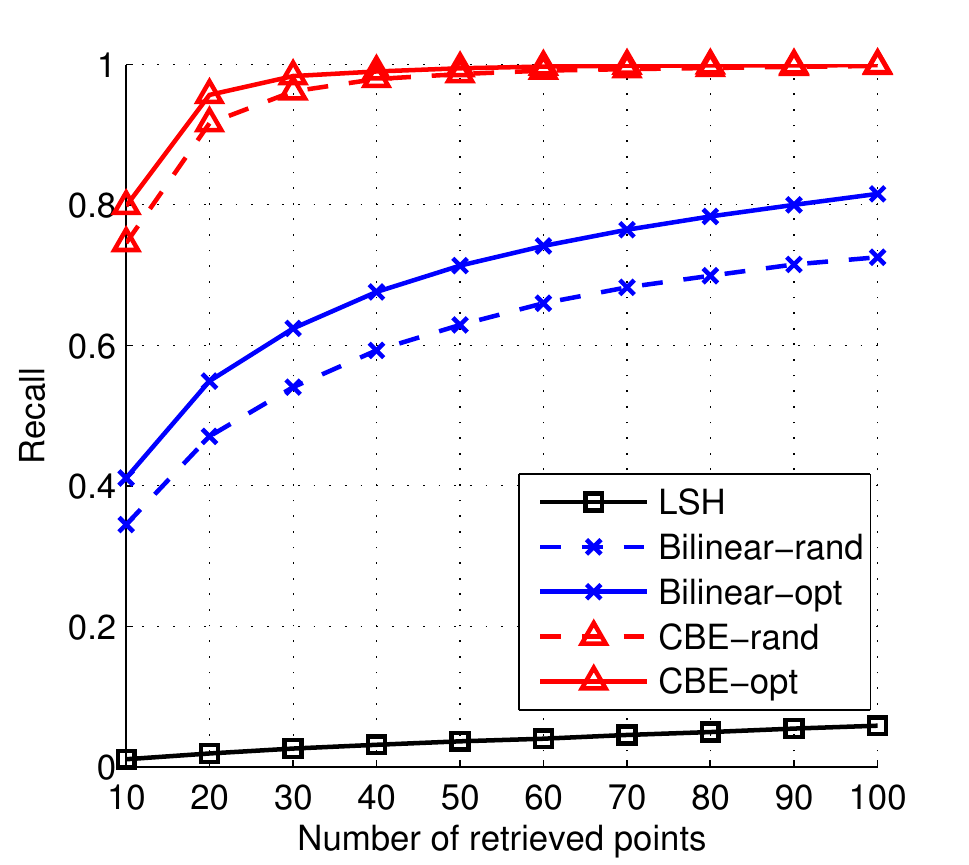}}
\\
\subfigure[\# bits (all) = 3,200]
{\includegraphics[width = 3.8cm]{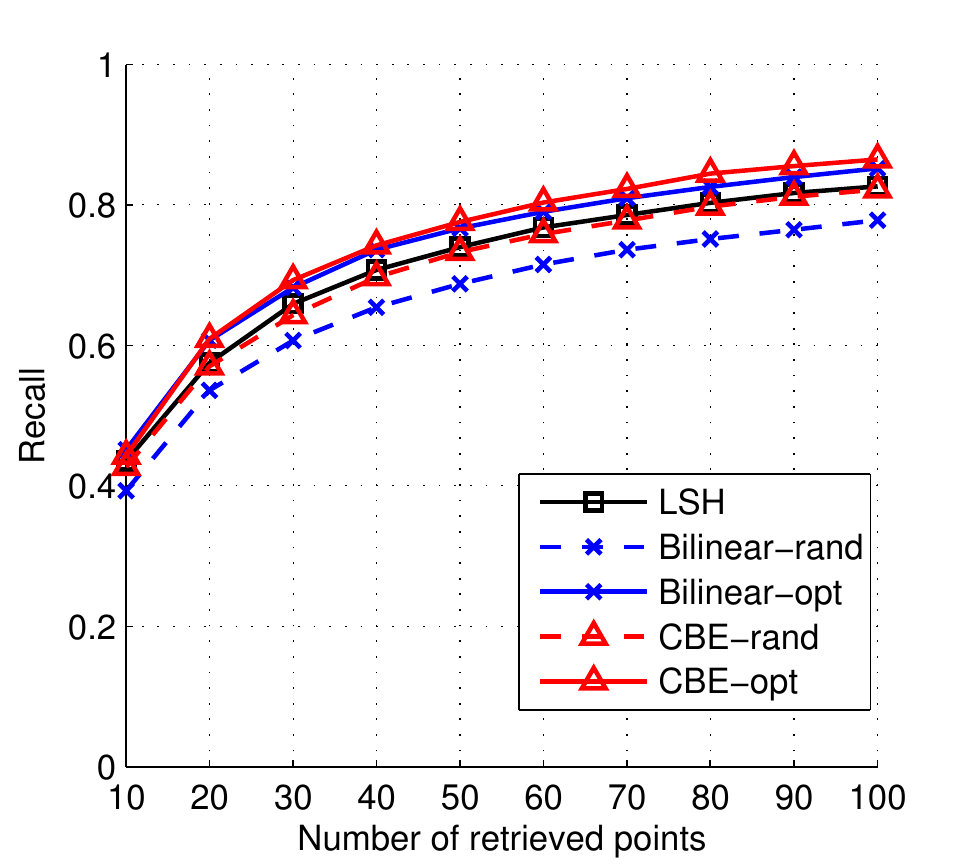}}
\hspace{-0.4cm}
\subfigure[\# bits (all) = 64,00]
{\includegraphics[width = 3.8cm]{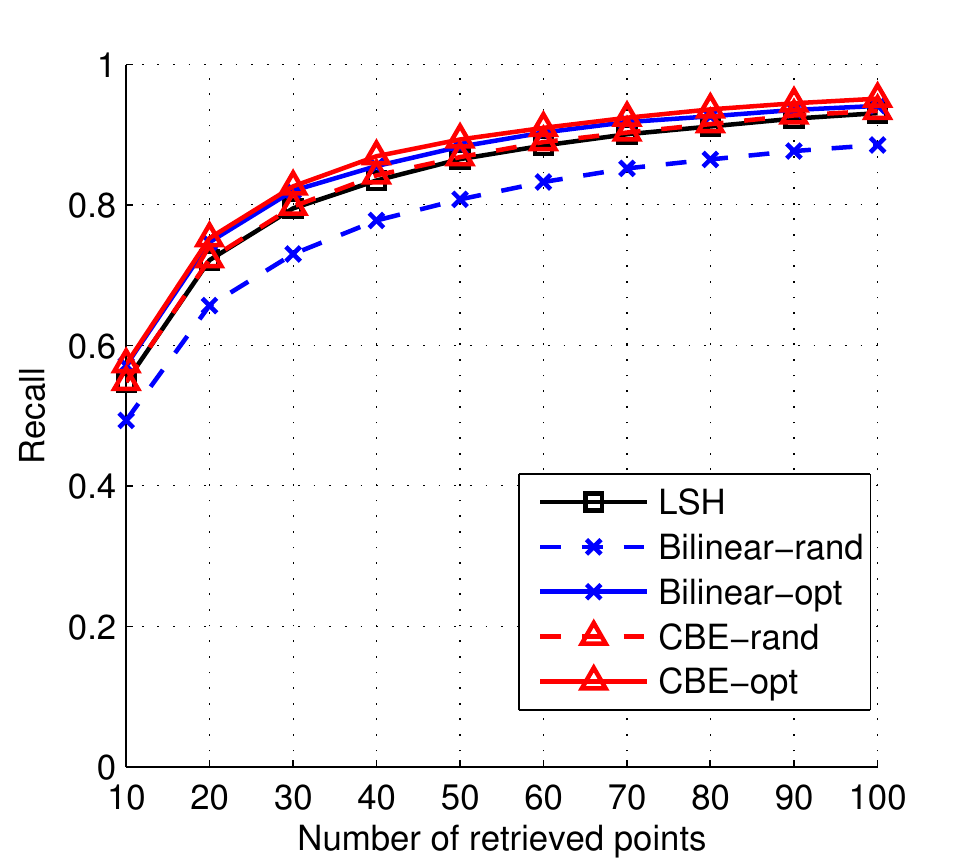}}
\hspace{-0.4cm}
\subfigure[\# bits (all) = 12,800]
{\includegraphics[width = 3.8cm]{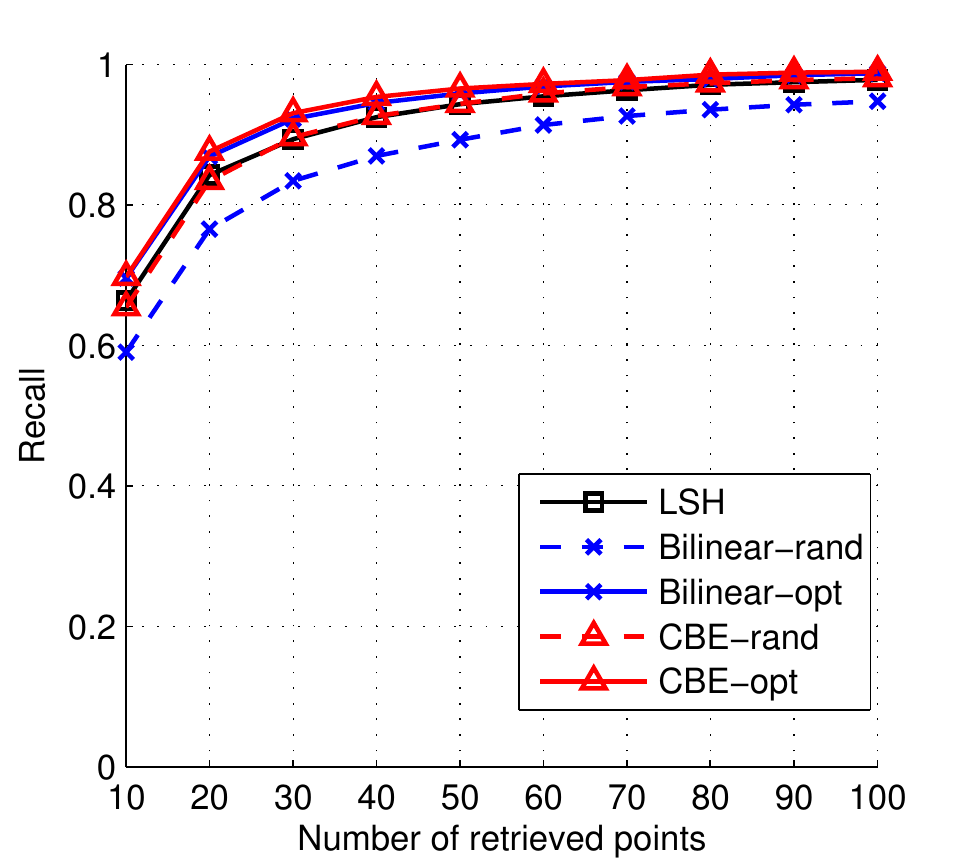}}
\hspace{-0.4cm}
\subfigure[\# bits (all) = 25,600]
{\includegraphics[width = 3.8cm]{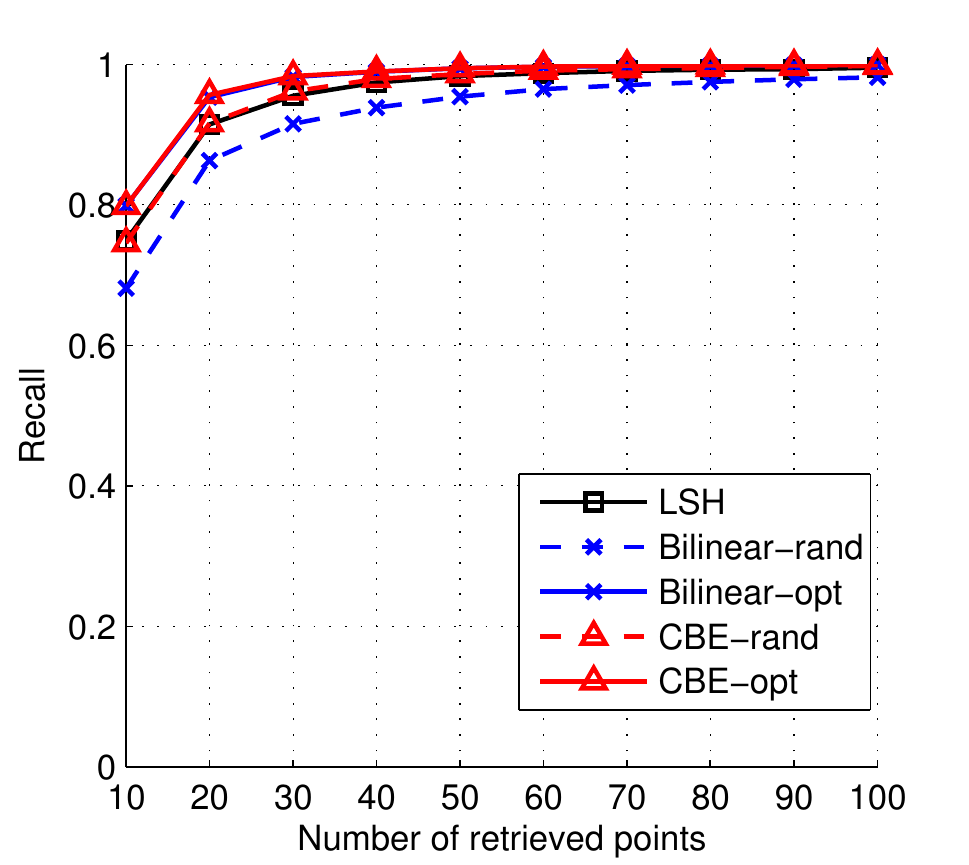}}
\caption{Recall on ImageNet-25600. The standard deviation is within 1\%. \textbf{First Row}: Fixed time. ``\# bits'' is the number of bits of CBE. Other methods are using fewer bits to make their computational time identical to CBE. 
\textbf{Second Row}: Fixed number of bits. CBE-opt/CBE-rand are 2-3 times faster than Bilinear-opt/Bilinear-rand, and hundreds of times faster than LSH.}
\label{fig:imagenet}
\end{figure*}

\begin{figure*}[!ht]
\centering
\subfigure[\#bits(CBE) = 6,400]
{\includegraphics[width = 3.8cm]{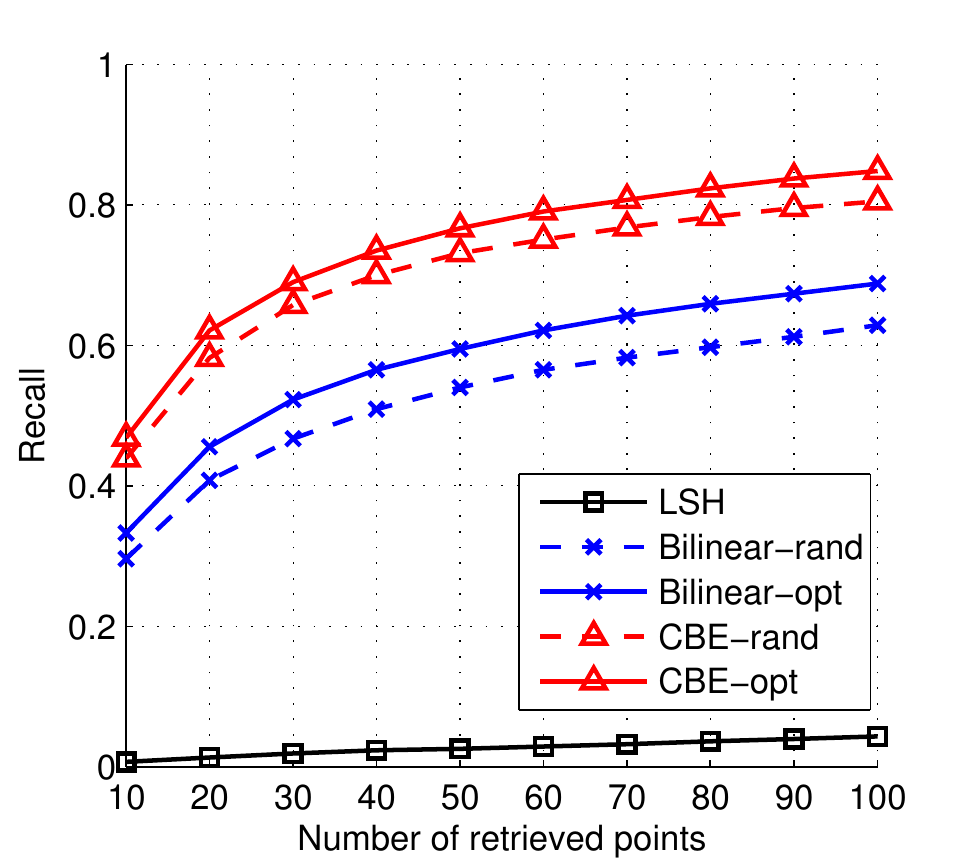}}
\hspace{-0.4cm}
\subfigure[\#bits(CBE) = 12,800]
{\includegraphics[width = 3.8cm]{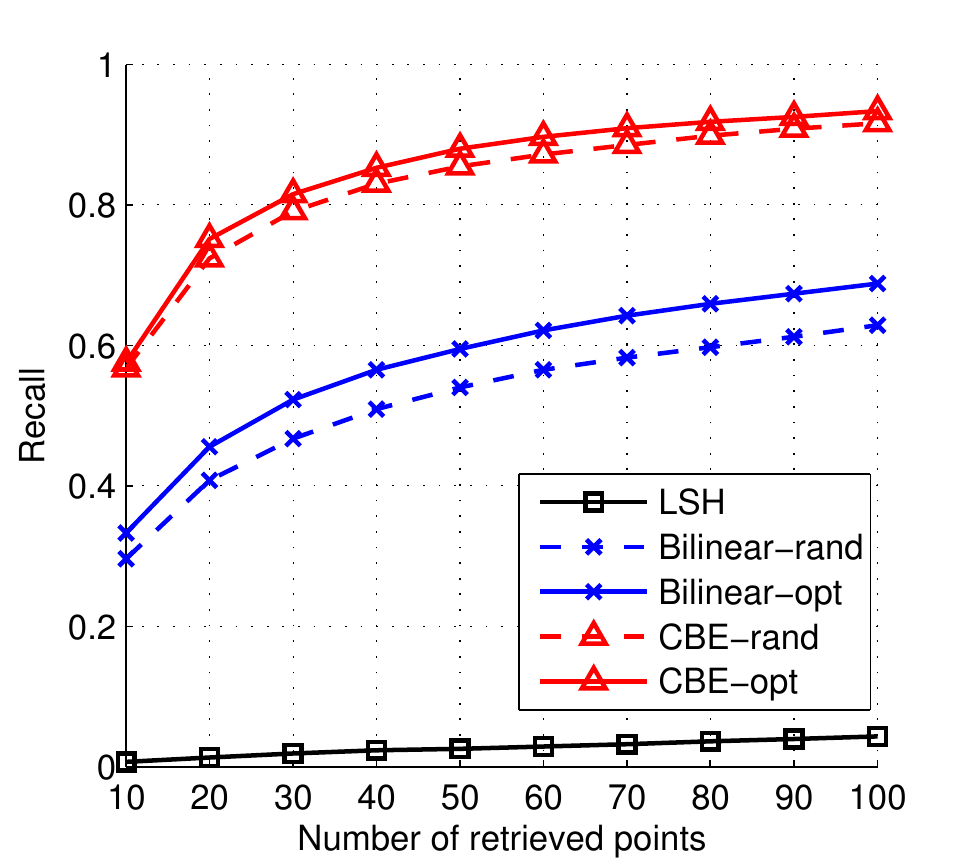}}
\hspace{-0.4cm}
\subfigure[\#bits(CBE) = 25,600]
{\includegraphics[width = 3.8cm]{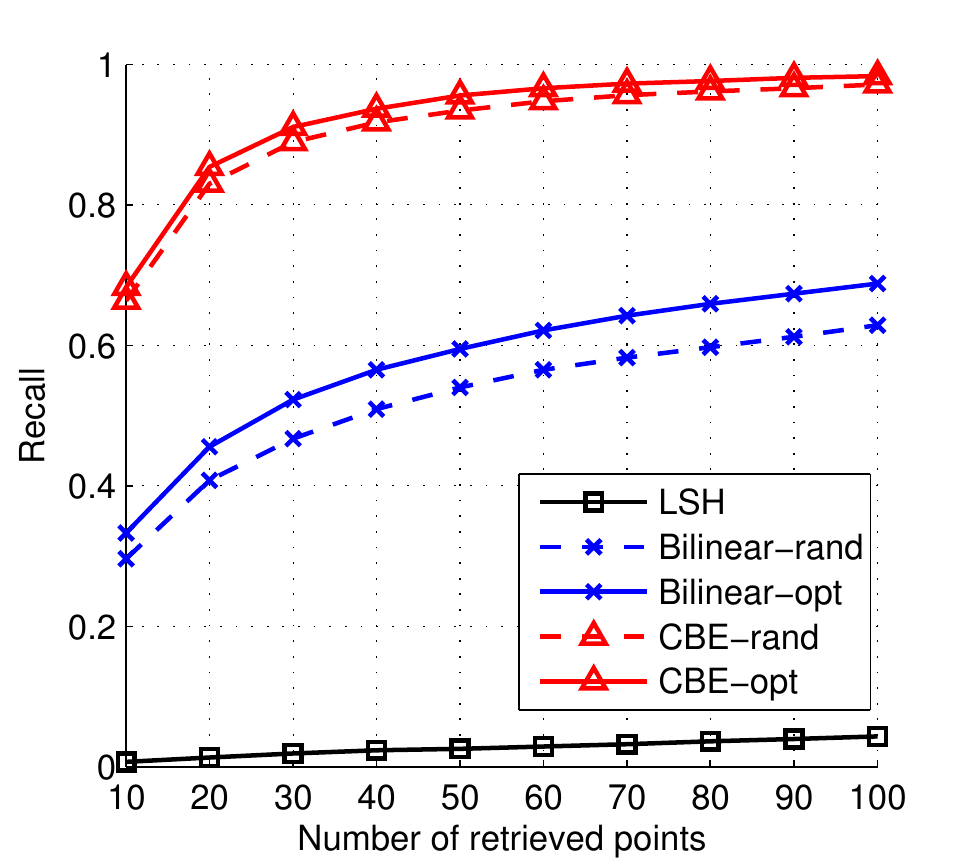}}
\hspace{-0.4cm}
\subfigure[\#bits(CBE) = 51,200]
{\includegraphics[width = 3.8cm]{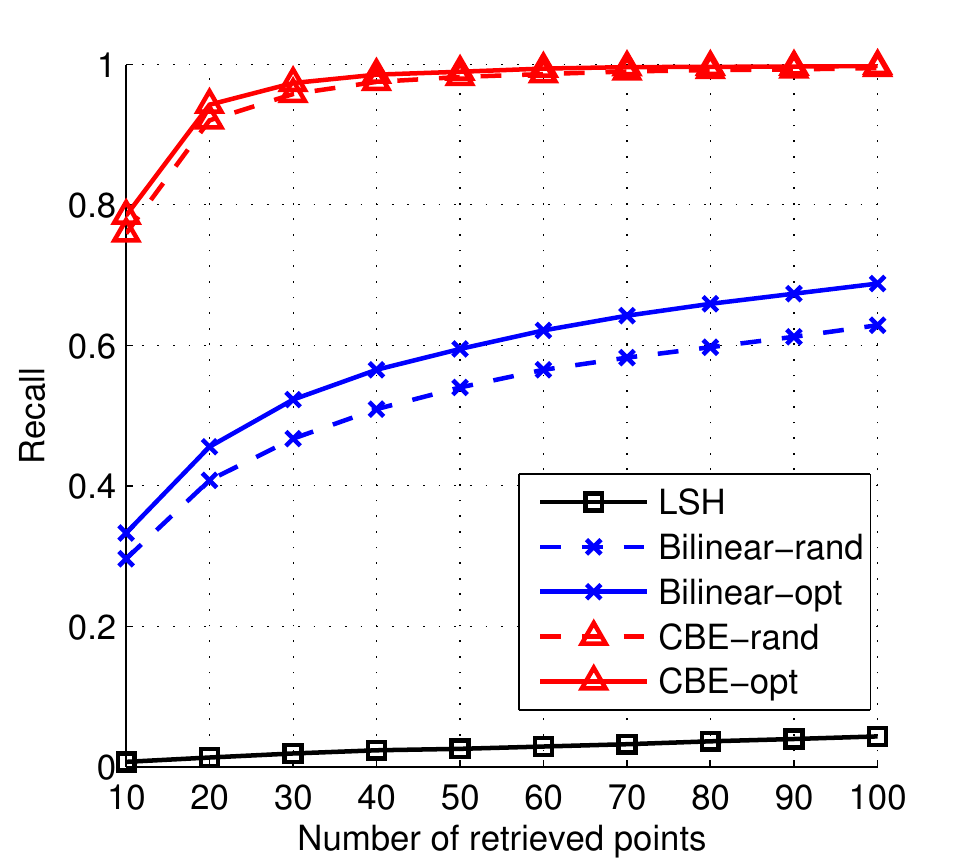}}
\\
\subfigure[\# bits (all) = 6,400]
{\includegraphics[width = 3.8cm]{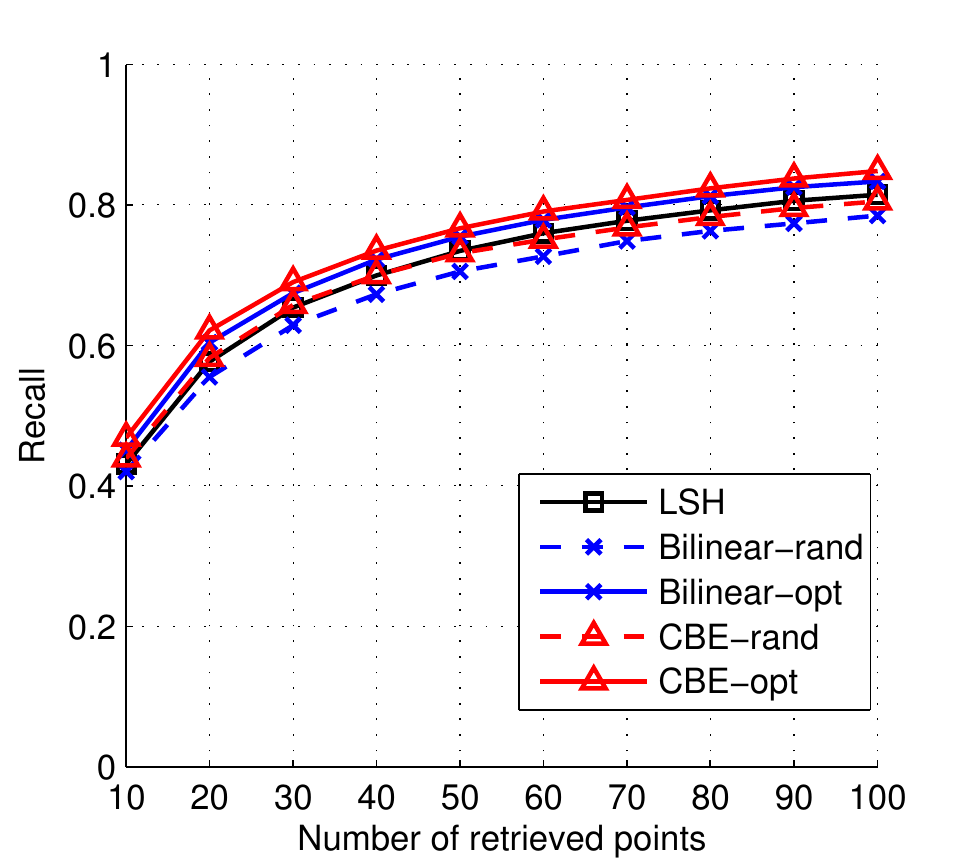}}
\hspace{-0.4cm}
\subfigure[\# bits (all) = 12,800]
{\includegraphics[width = 3.8cm]{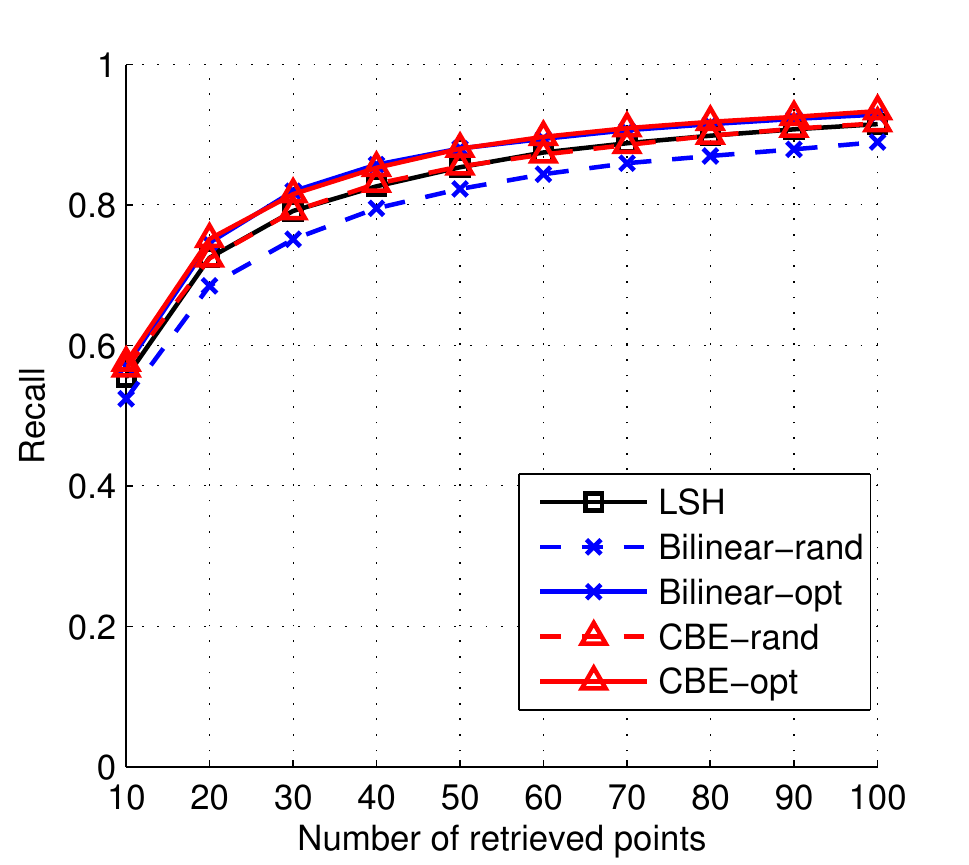}}
\hspace{-0.4cm}
\subfigure[\# bits (all) = 25,600]
{\includegraphics[width = 3.8cm]{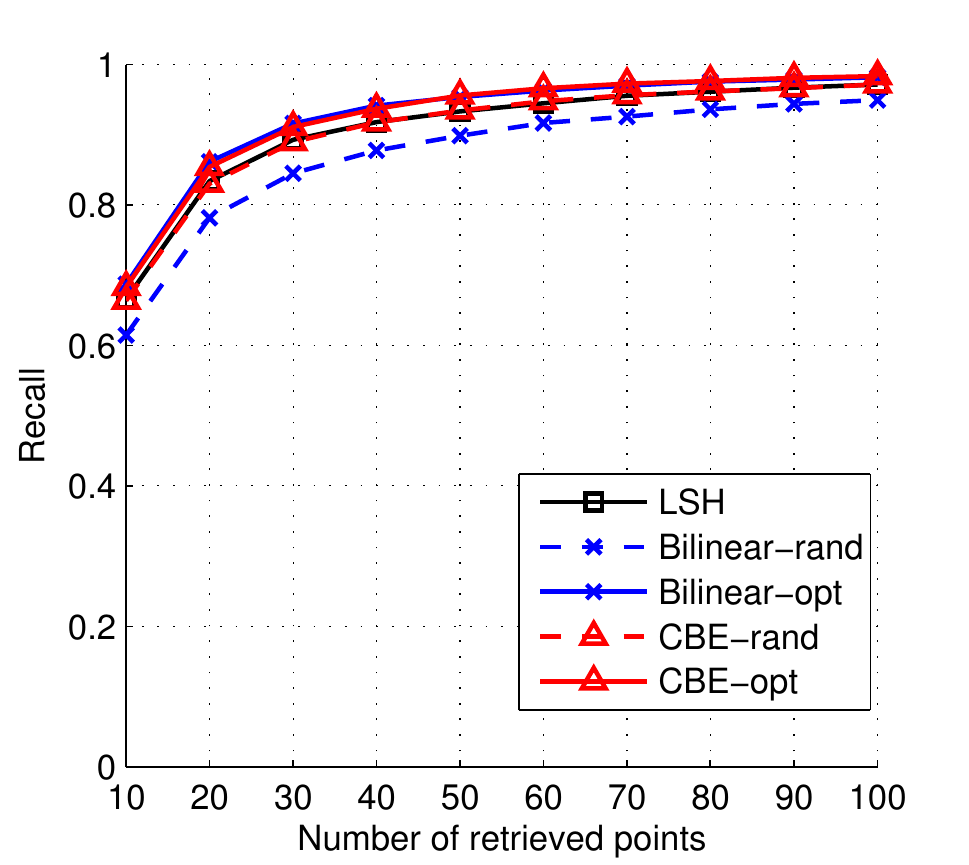}}
\hspace{-0.4cm}
\subfigure[\# bits (all) = 51,200]
{\includegraphics[width = 3.8cm]{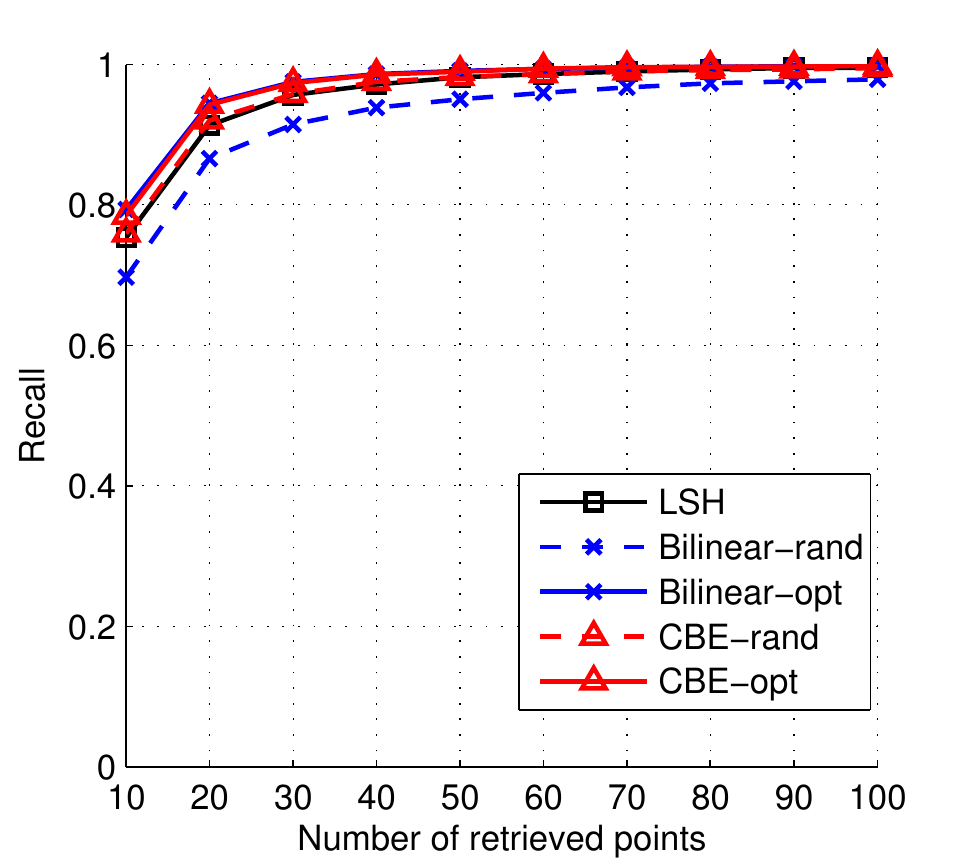}}\caption{Recall on ImageNet-51200. The standard deviation is within 1\%. \textbf{First Row}: Fixed time. ``\# bits'' is the number of bits of CBE. Other methods are using fewer bits to make their computational time identical to CBE. 
\textbf{Second Row}: Fixed number of bits. CBE-opt/CBE-rand are 2-3 times faster than Bilinear-opt/Bilinear-rand, and hundreds of times faster than LSH.}
\label{fig:imagenet_large} 
\end{figure*}

\begin{figure}[!t]
\centering
\subfigure[\# bits = 256]
{\includegraphics[width = 3.8cm]{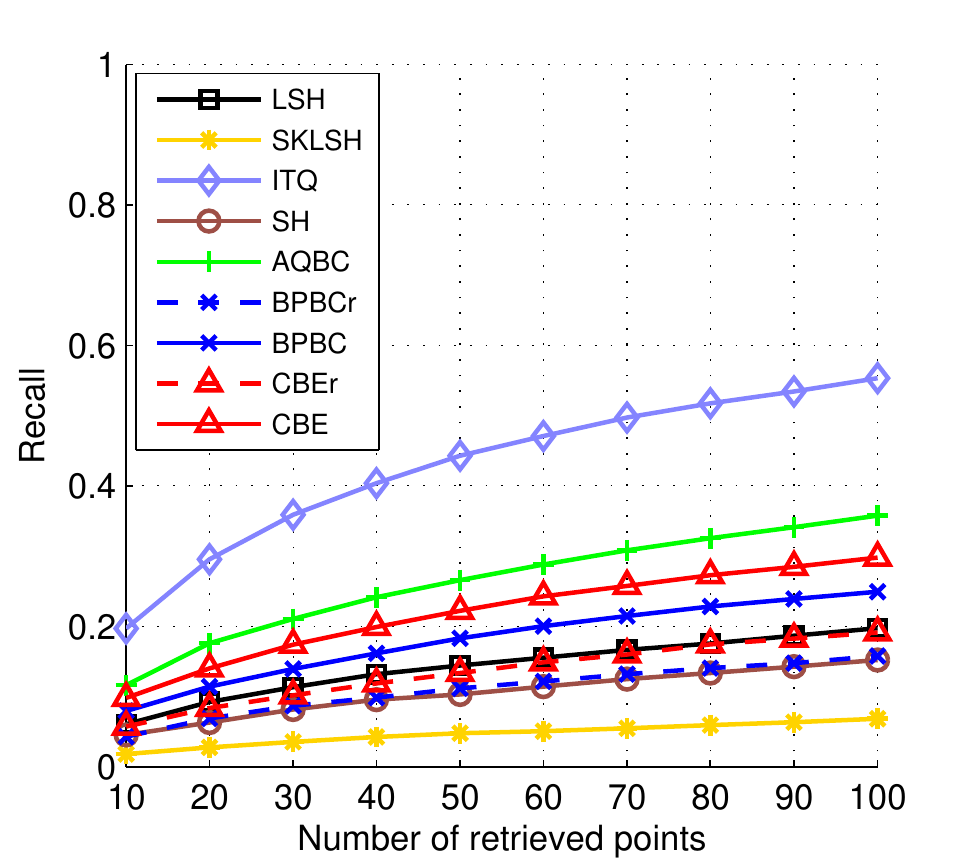}}
\hspace{-0.4cm}
\subfigure[\# bits = 512]
{\includegraphics[width = 3.8cm]{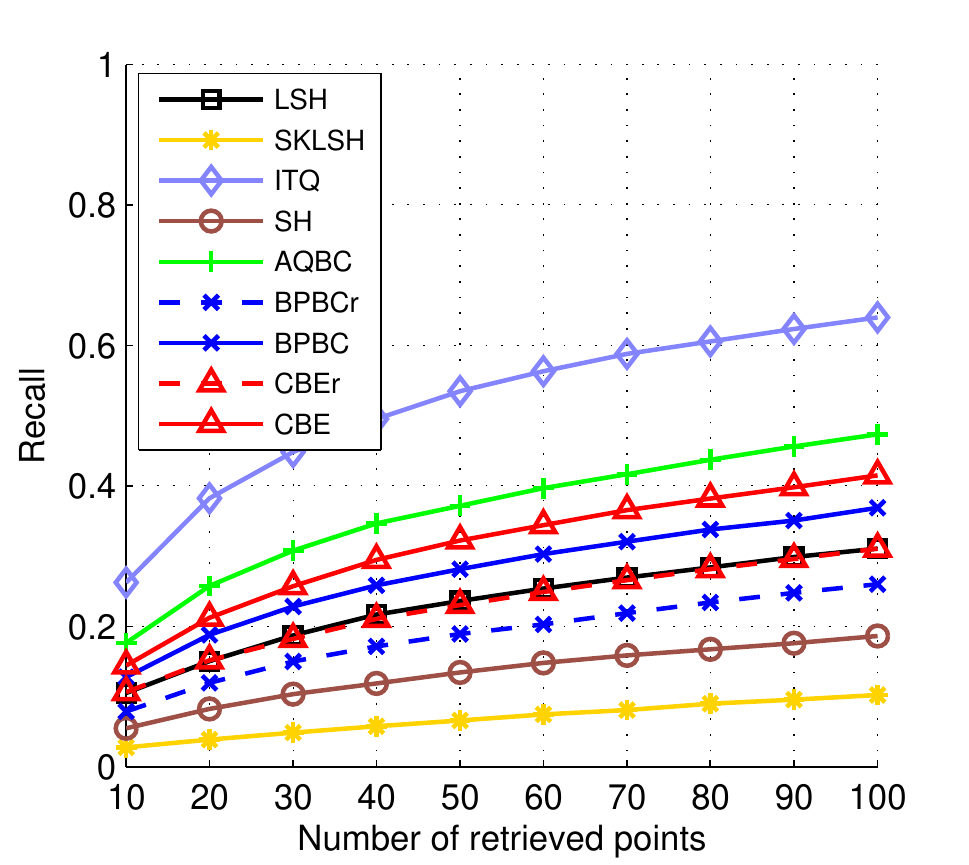}}
\hspace{-0.4cm}
\subfigure[\# bits = 1,024]
{\includegraphics[width = 3.8cm]{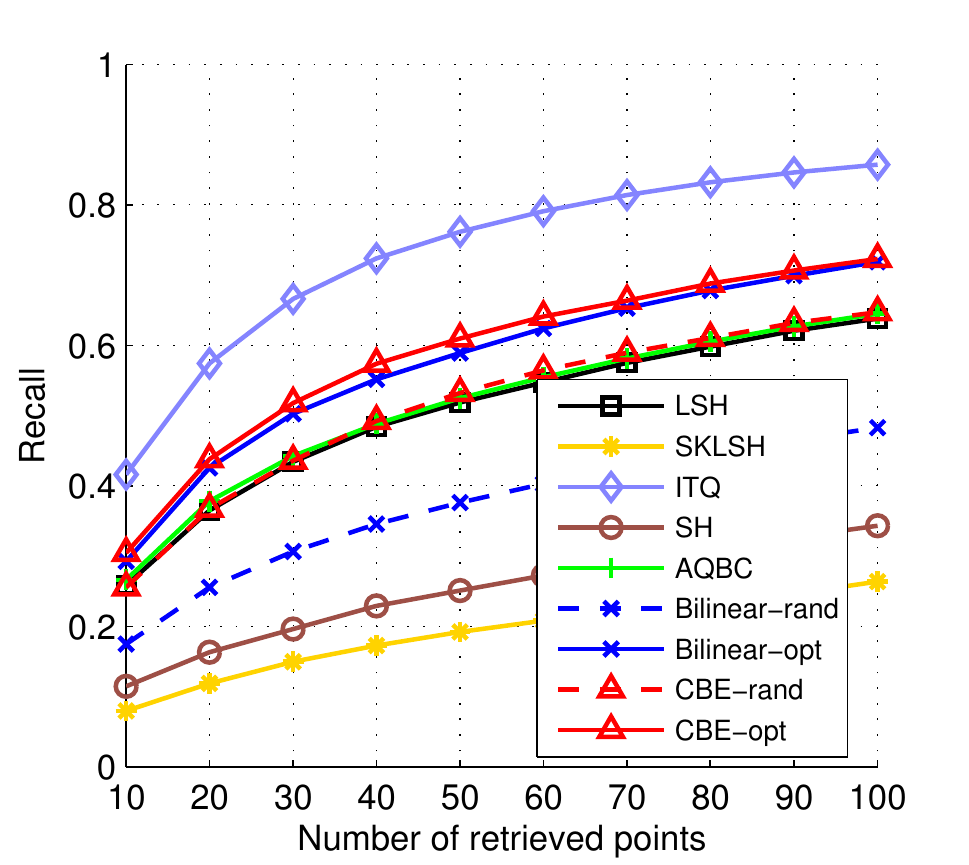}}
\hspace{-0.4cm}
\subfigure[\# bits = 2,048]
{\includegraphics[width = 3.8cm]{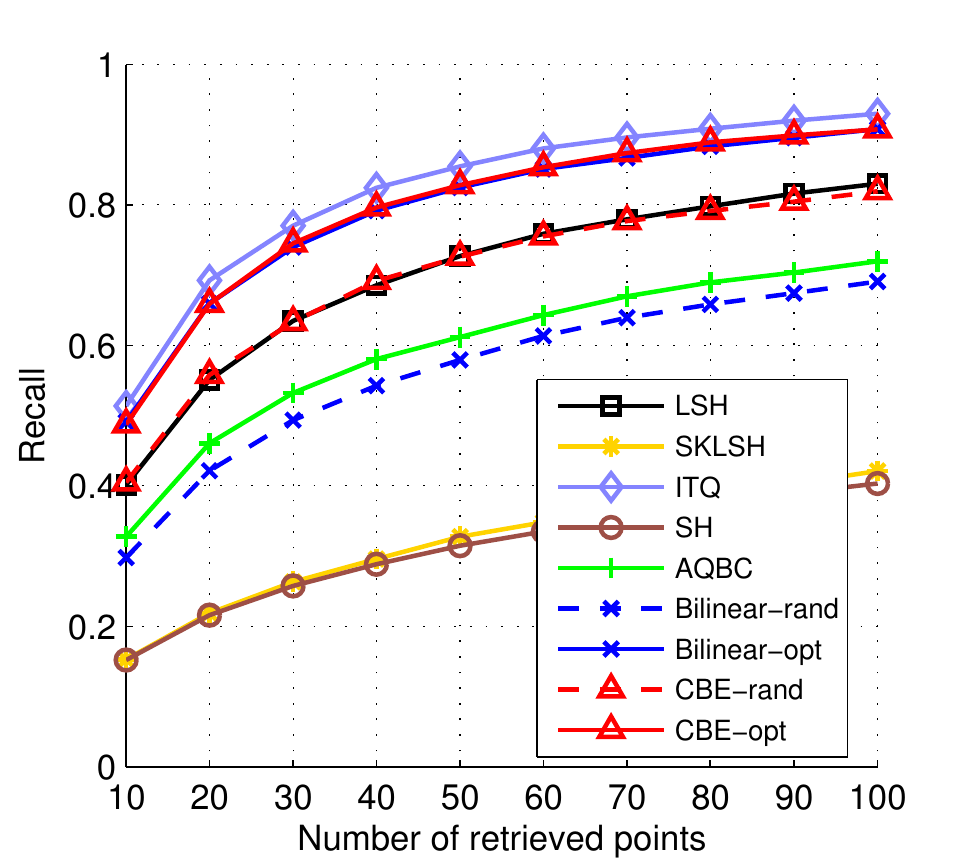}}
\caption{Performance comparison on relatively low-dimensional data (Flickr-2048) with fixed number of bits. CBE gives comparable performance to the state-of-the-art even on low-dimensional data as the number of bits is increased. However, these other methods do not scale to very high-dimensional data setting which is the main focus of this work.}
\label{fig:flickr_2048}
\end{figure}

\subsection{Retrieval} 
The recalls of different methods are compared on the three datasets, shown in 
Figure \ref{fig:flickr} -- \ref{fig:imagenet_large}. The top row in each figure shows the performance of different methods when the code generation time for all the methods is kept the same as that of CBE. For a fixed time, the proposed CBE yields much better recall than other methods. Even CBE-rand outperforms LSH and Bilinear code by a large margin. The second row compares the performance for different techniques with codes of same length. In this case, the performance of CBE-rand is almost identical to LSH even though it is hundreds of time faster. This is consistent with our analysis in Section \ref{sec:rand}. Moreover, CBE-opt/CBE-rand outperform Bilinear-opt/Bilinear-rand in addition to being 2-3 times faster. 

There exist several techniques that do not scale to high-dimensional case. To compare our method with those, we conduct experiments with fixed number of bits on a relatively low-dimensional dataset (Flickr-2048), constructed by randomly sampling 2,048 dimensions of Flickr-25600. As shown in Figure \ref{fig:flickr_2048}, though CBE is not designed for such scenario, the CBE-opt performs better or equivalent to other techniques except ITQ which scales very poorly with $d$ ($\mathcal{O}(d^3)$). Moreover, as the number of bits increases, the gap between ITQ and CBE becomes much smaller suggesting that the performance of ITQ is not expected to be better than CBE even if one could run ITQ on high-dimensional data.

\begin{figure}
\centering
\subfigure[\# bits = 5,000]
{\includegraphics[width = 5cm]{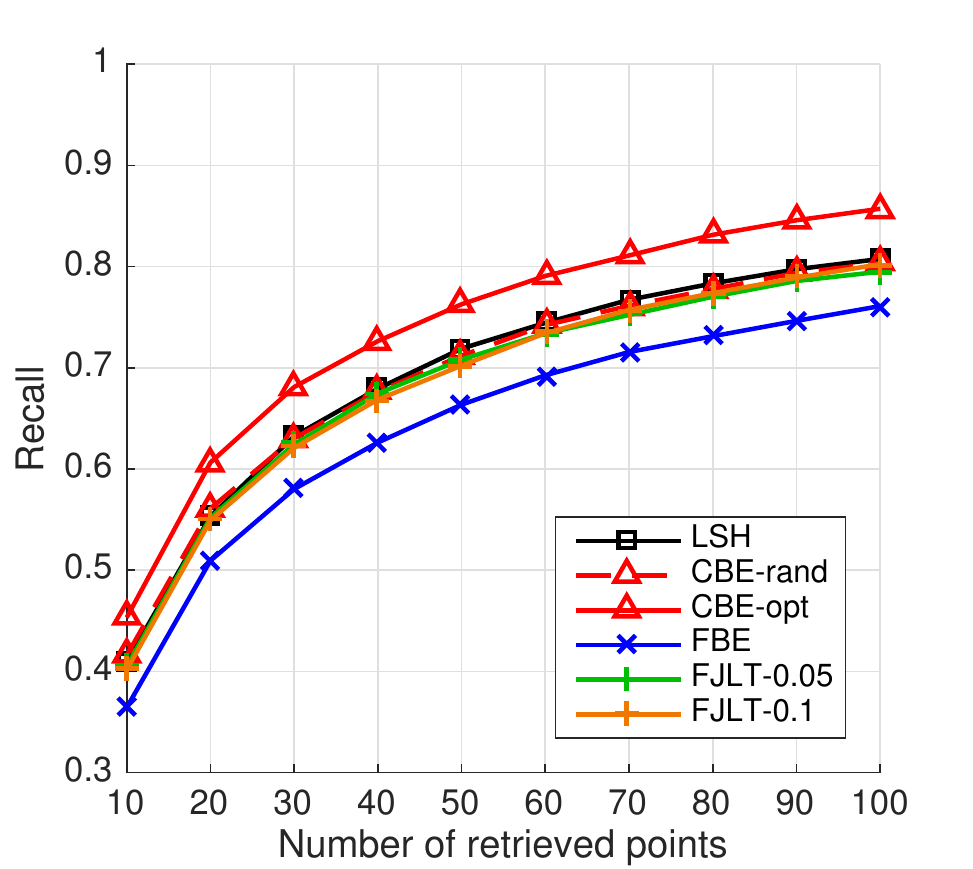}}
\hspace{-0.4cm}
\subfigure[\# bits = 10,000]
{\includegraphics[width = 5cm]{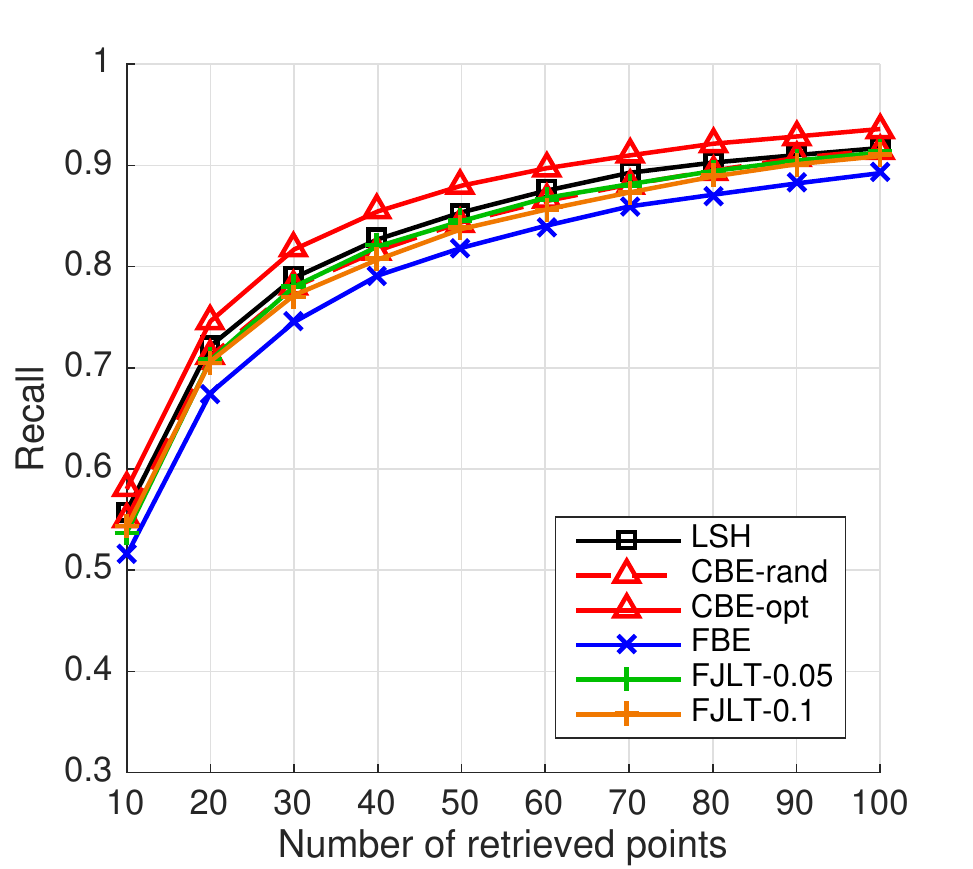}}
\hspace{-0.4cm}
\subfigure[\# bits = 15,000]
{\includegraphics[width = 5cm]{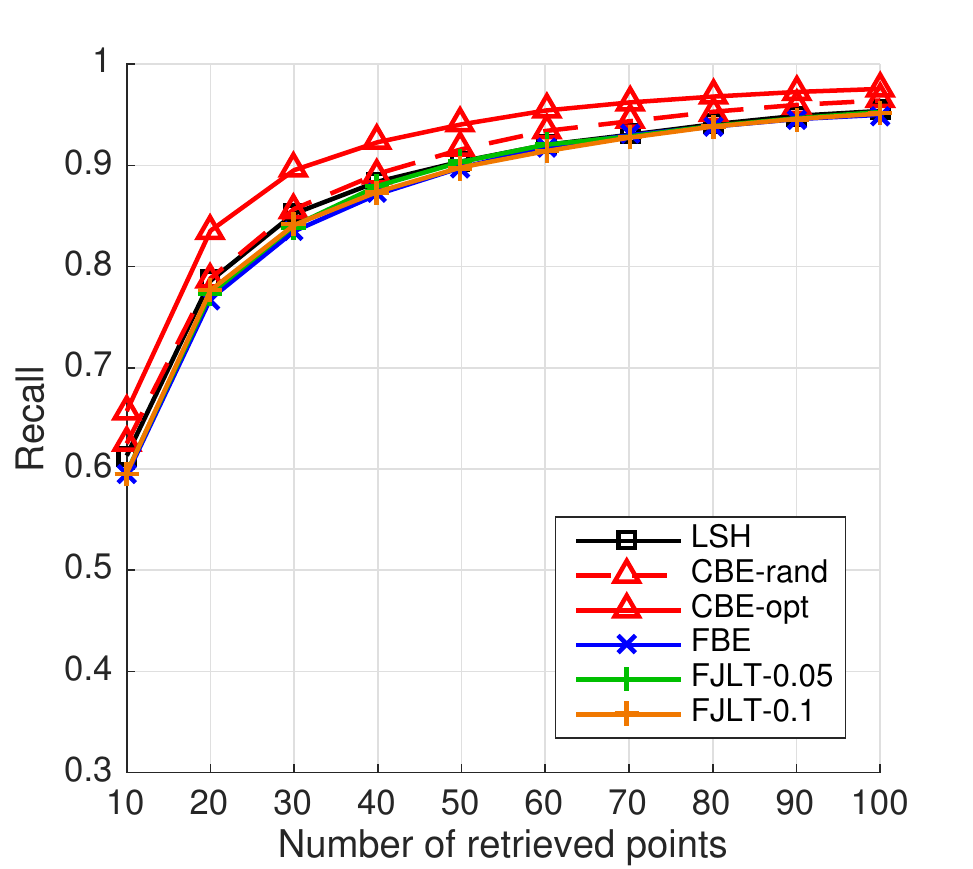}}
\caption{Recall on the Flickr-25600 dataset. The methods compared are CBE-rand, CBE-opt, Fast Binary Embedding (FBE), Fast Johnson-Lindenstruss Transformation (FJLT) based methods and LSH. We follow the detailed setting of the FBE paper\cite{yi2015binary}. In FJLT-$p$, $p$ represents the percentage of nonzero elements in the sparse Gaussian matrix.}
\label{fig:cbe_vs_fbe}
\end{figure}

We also conduct additional experiments to compare CBE with the more recent Hadamard-based algorithms. The first algorithm we consider generates the binary code using the Fast Johnson-Lindenstrss Transformation (FJLT). Similar to the circulant projection, FJLT has been used in dimensionality reduction \cite{ailon2006approximate}, deep neural networks \cite{yang2014deep}, and kernel approximation \cite{le2013fastfood}. Here, the binary code of $\bx \in \mathbb{R}^d$ is generated as
\begin{equation}
h(\bx) = \sign(\mathbf{P} \mathbf{H} \mathbf{D}\bx),
\end{equation}
where $\mathbf{P} \in \mathbb{R}^{k \times d}$ is a sparse matrix with the nonzeros entries generated iid from the standard distribution. $\mathbf{H} \in \mathbb{R}^{d \times d}$ is the Hadamard matrix, and $\mathbf{D} \in \mathbb{R}^{d \times d}$ is a diagonal matrix with random signs. 
Although the Hadamard transformation has computational complexity $\mathcal{O}(d \log d)$ (the multiplication with $\mathbf{H}$), this method is often slower than CBE due to the sparse Gaussian projection step (i.e., multiplication by $\mathbf{P}$). 

The second method we compare with is Fast Binary Embedding (FBE). It is a theoretically sound method recently proposed in \cite{yi2015binary}. 
FBE generates binary bits using a partial Walsh-Hadamard matrix and a set of partial Gaussian Toeplitz matrices. 
The method can achieve the optimal measurement complexity $\mathcal{O}(\frac{1}{\delta^2} \log N)$. 
We follow the parameters settings in \cite{yi2015binary} (Flickr-25600 dataset, with the number of bits 5,000, 10,000 and 15,000). Note that under the setting, FBE is at least a few times slower than CBE due to the use of multiple Toeplitz projections. 
Figure \ref{fig:cbe_vs_fbe} shows the retrieval performance.
Based on the experiments, in addition to being much faster than FBE and FJLT, CBE-rand provides comparable or even better performance. 
Another advantage of CBE is that the framework permits data-dependent optimization to further improve the performance.  In all the experiments, CBE-opt achieves the highest recall by a large margin compared to other methods. 

\begin{table}
\center
\begin{tabular}{c|c|c|c}
\hline
  Original& LSH  & Bilinear-opt  & CBE-opt  \\ 
\hline  25.59$\pm$0.33 & 23.49$\pm$0.24  & 24.02$\pm$0.35 & 24.55 $\pm$0.30  \\ 
\hline
\end{tabular}
\caption{Multiclass classification accuracy (\%) on binary coded ImageNet-25600. The binary codes of same dimensionality are 32 times more space efficient than the original features (single-float).}
\label{table:classification}
\end{table}

\subsection{Classification} Besides retrieval, we also test the binary codes for classification. The advantage is to save on storage, allowing even large scale datasets to fit in memory  \cite{li2011hashing, sanchez2011high}.
We follow the asymmetric setting of \cite{sanchez2011high} by training linear SVM on binary code $\text{sign}(\mathbf{R} \mathbf{x})$, and testing on the original $\mathbf{R} \mathbf{x}$. Empirically, this has been shown to give better accuracy than the symmetric procedure.
We use ImageNet-25600, with randomly sampled 100 images per category for training, 50 for validation and 50 for testing. The code dimension is set as 25,600.
As shown in Table \ref{table:classification}, CBE, which has much faster computation, does not show any performance degradation compared with LSH or bilinear codes in classification task.

\section{Conclusion}
We proposed a method of binary embedding for high-dimensional data.
Central to our framework is to use a type of highly structured matrix,
the circulant matrix, to perform the linear projection.  The proposed
method has time complexity $\mathcal{O}(d\log d)$ and space complexity
$\mathcal{O}(d)$, while showing no performance degradation on
real-world data compared with more expensive approaches
($\mathcal{O}(d^2)$ or $\mathcal{O}(d^{1.5})$). The parameters of the
method can be randomly generated, where interesting theoretical
analysis was carried out to show that the angle preserving quality can
be as good as LSH.  The parameters can also be learned based on
training data with an efficient optimization algorithm. 

\bibliography{cbe_jmlr}

\appendix
\section{Proofs of the Technical Lemmas}\label{sec:app}
\subsection{Proof of Lemma \ref{lem:error-bound}}\label{app:lem:error-bound}
For convenience, define $\bu^\perp = \bu - \Pi \bu$, and similarly define $\bv^\perp$.  From our earlier observation about independence, we have that
\begin{equation}
\mathbb{E} \left[ \left( \frac{1 - \sign(\br^T \ba ) \sign(\br^T \bb ) }{2} - \frac{\theta}{\pi}  \right)
 \left( \frac{1 - \sign(\br^T \bu^\perp ) \sign(\br^T \bv^\perp ) }{2} - \frac{\theta}{\pi}  \right) \right] = 0.
 \end{equation}
Because the LHS is equal to the product of the expectations, and the first term is $0$.  Thus the quantity we wish to bound is 
\[ \mathbb{E} \left[ \left( \frac{1 - \sign(\br^T \ba ) \sign(\br^T \bb ) }{2} - \frac{\theta}{\pi}  \right)
 \left( \frac{\sign(\br^T \bu) \sign(\br^T \bv) - \sign(\br^T \bu^\perp ) \sign(\br^T \bv^\perp ) }{2}  \right) \right]. \nonumber \]
Now by using the fact that $\mathbb{E} [XY] \le \mathbb{E} [|X||Y|]$, together with the observation that the quantity $|(1- \sign(\br^T \ba ) \sign(\br^T \bb ) )/2 - \theta/\pi |$ is at most $2$,
we can bound the above by
\[ \mathbb{E} \left[ | \sign(\br^T \bu) \sign(\br^T \bv) - \sign(\br^T \bu^\perp ) \sign(\br^T \bv^\perp) |  \right]. \]
This is equal to 
\[ 2 \Pr[ \sign(\br^T \bu) \sign(\br^T \bv) \ne \sign(\br^T \bu^\perp ) \sign(\br^T \bv^\perp)] ,\]
since the term in the expectation is $2$ if the product of signs is different, and $0$ otherwise.
To bound this, we first observe that for any two unit vectors $\bx, \by$ with $\angle(\bx, \by) \le \epsilon$, we have $\Pr[ \sign(\br^T \bx ) \ne \sign(\br^T \by)] \le \epsilon/\pi$. We can use this to say that
\[ \Pr[  \sign(\br^T \bu) \ne \sign(\br^T \bu^\perp) ] = \frac{ \angle (\bu, \bu^\perp) }{\pi}. \]
This angle can be bounded in our case by $(\pi/2) \cdot \delta$ by basic geometry.\footnote{$\bu$ is a unit vector, and $\bu^\perp + \Pi \bu = \bu$, and $\norm{\Pi \bu} \le \delta$, so the angle is at most $\sin^{-1} (\delta)$.}  Thus by a union bound, we have that 
\[ \Pr[ \big( \sign(\br^T \bu) \ne \sign(\br^T \bu^\perp)\big) \vee \big( \sign(\br^T \bv) \ne \sign(\br^T \bv^\perp)\big)  ]  \le \delta.\]
This completes the proof.

\subsection{Proof of Lemma \ref{lem:small-dproduct}}\label{app:lem:small-dproduct}
Denoting the $i$th entry of $\bp$ by $p_i$ (so also for $\bq$), we have that
\[ S := \iprod{ \bD\bp, s_{\rightarrow t}(\bD\bq) } = \sum_{i=0}^{d-1} \sigma_i \sigma_{i+t} p_i q_{i+t}.\]
We note that $\E[ S ] = 0$, by linearity of expectation (as $t >0$, $\E[ \sigma_i \sigma_{i+t} ] =0$), thus the lemma is essentially a tail bound on $S$.  While we can appeal to standard tail bounds for quadratic forms of sub-Gaussian random variables (e.g. Hansen-Wright~\cite{rudelson2013hanson}), we give below a simple argument.
Let us define 
\[ f( \sigma_0, \sigma_1, \dots, \sigma_{d-1} ) = \sum_{i=0}^{d-1} p_i q_{i+t} \sigma_i \sigma_{i+t}.\]
We will view $f$ as being obtained from a martingale as follows. Define
\[ Q_i := f(\sigma_0, \sigma_1, \dots, \sigma_i, 0, \dots, 0) - f(\sigma_0, \sigma_1, \dots, \sigma_{i-1}, 0, \dots, 0). \]
In this notation, we have $S = Q_0 + Q_1 + \dots + Q_{d-1}$. 

We have the martingale property that $\E[ Q_i | Q_0, Q_1, \dots, Q_{i-1}] = 0$ for all $i$, (because $\sigma_i$ is $\pm 1$ with equal probability). Further, we have the {\em bounded difference} property, i.e., $|Q_i| \le |p_i q_{i+t}| + |p_{i-t} q_i|$. This  implies that
\[ |Q_i|^2 \le 2 (p_i^2 q_{i+t}^2 + p_{i-t}^2 q_i^2). \]
Thus we can use Azuma's inequality to conclude that for any $\gamma >0$,
\[ \Pr[ | \sum_i Q_i - \E[ \sum_i Q_i ] | > \gamma ] < e^{ - \frac{\gamma^2}{2 \cdot \sum_i 2( p_i^2 q_{i+t}^2 + p_{i-t}^2 q_{i}^2 )} } = e^{- \frac{\gamma^2}{8\sum_i^2 p_i^2 q_{i+t}^2 }}.\]
We can now use the fact that $\sum_i p_i^2 q_{i+t}^2 \le \rho^2 \sum_i p_i^2 = \rho^2$ (since $\norm{\bp}_2 = 1$ and $\norm{\bq}_\infty \le \rho$).  This establishes the lemma.
%

\subsection{Proof of Lemma \ref{lem:jl-orthogonality}}\label{app:lem:jl-orthogonality}

First, using Lemma~\ref{lem:small-dproduct} we have, for any $i \ne j$ and $c>0$,
\[ \Pr[ |\iprod{ X_i, Y_j }| > c ] < e^{-c^2/8\rho^2}. \]
We have a similar bound for $\Pr[ |\iprod{ X_i, X_j}| > c]$.  Thus by setting $c = 4 \rho \sqrt{\log (k/\delta)}$ ($\delta$ as in the statement of the lemma), we can take a union bound over all $k^2$ choices of $i \ne j$ and conclude that w.p. at least $1-\delta$, we have
\begin{equation}\label{eq:max-correlation}
\max_{i \ne j} \{ |\iprod{ X_i, X_j}|, |\iprod{X_i, Y_j}| \} < 4 \rho \sqrt{\log(k/\delta)}.
\end{equation}
%
%
%
We now prove that whenever Eq.~\eqref{eq:max-correlation} holds, we obtain $(\gamma, k)$ orthogonality for the desired $\gamma$.  Let us start with a basic fact in linear algebra.

\begin{lemma}\label{claim:small-coeffs}
Let $A$ be an $d \times k$ matrix with $\sigma_k (A) \ge \tau$, for some parameter $\tau$.  Then any unit vector in the column span of $A$ can be written as $\sum_i \alpha_i A_i$, with $\sum_i \alpha_i^2 \le 1/\tau^2$.
\end{lemma}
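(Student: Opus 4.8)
The plan is to reduce the statement to the elementary fact that multiplication by $A$ cannot shrink the norm of a coefficient vector by more than a factor of the smallest singular value. First I would observe that, since the given unit vector $w$ lies in the column span of $A$, there is some coefficient vector $\alpha \in \real^k$ with $w = A\alpha = \sum_i \alpha_i A_i$. The hypothesis $\sigma_k(A) \ge \tau > 0$ guarantees that $A$ has full column rank (here $A$ is $d \times k$ with $d \ge k$, so $\sigma_k(A)$ is its smallest singular value), which is what makes such a representation available; in fact $\alpha$ is then unique, though uniqueness is not needed for the bound.

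The key step is the standard lower bound $\norm{A\alpha}_2 \ge \sigma_k(A)\,\norm{\alpha}_2$. I would justify this directly from the singular value decomposition $A = U \Sigma V^T$, where $U$ has orthonormal columns, $V$ is orthogonal, and $\Sigma$ is the $k \times k$ diagonal of singular values. Writing $\beta = V^T \alpha$, so that $\norm{\beta}_2 = \norm{\alpha}_2$, we get $\norm{A\alpha}_2 = \norm{\Sigma \beta}_2 = \big(\sum_i \sigma_i^2 \beta_i^2\big)^{1/2} \ge \sigma_k(A)\,\norm{\beta}_2 = \sigma_k(A)\,\norm{\alpha}_2$.

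Finally I would combine the two observations with $\norm{w}_2 = 1$: we have $1 = \norm{w}_2 = \norm{A\alpha}_2 \ge \sigma_k(A)\,\norm{\alpha}_2 \ge \tau\,\norm{\alpha}_2$, and hence $\norm{\alpha}_2 \le 1/\tau$, i.e., $\sum_i \alpha_i^2 \le 1/\tau^2$, which is exactly the claim. There is no real obstacle here, as this is a routine linear algebra fact; the only points one must get right are the direction of the singular value inequality (a \emph{lower} bound governed by the \emph{smallest} singular value) and the use of $\sigma_k(A) \ge \tau > 0$ to ensure full column rank so that the representation $w = A\alpha$ genuinely exists.
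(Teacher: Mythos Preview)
Your proposal is correct and follows essentially the same approach as the paper: both use the basic inequality $\norm{A\alpha}_2 \ge \sigma_k(A)\,\norm{\alpha}_2$ and then apply it to a unit vector $w = A\alpha$ to conclude $\sum_i \alpha_i^2 \le 1/\tau^2$. The paper's proof is just a one-line invocation of this fact without writing out the SVD justification, but the argument is the same.
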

\begin{proof}
By the definition of $\sigma_k$, we have that for any $\alpha_i$, $\norm{\sum_i \alpha_i A_i}_2^2 \ge \tau^2 \left( \sum_i \alpha_i^2 \right)$.  Thus for any unit vector $\sum_i \alpha_i A_i$, we have $\sum_i \alpha_i^2 \le 1/\tau^2$.
\end{proof}

Now let $B$ be the $d \times 2k$ matrix whose columns are $X_1, Y_1, X_2, Y_2, \dots, X_k, Y_k$ in that order.  Consider the entries of $B^T B$.  Since $X_i, Y_i$ are unit vectors, the diagonals are all $1$.  The $(2i-1, 2i)$th and $(2i, 2i-1)$th entries are exactly $\cos \theta$, because the angle between $X_i, Y_i$ is $\theta$.  The rest of the entries are of magnitude $<\eta := 4\rho \sqrt{\log (k/\delta)}$.

Thus if we consider $M = B^T B - I$ (diagonal removed from $B^T B$), we have $-(\cos\theta + k\eta)I \preceq M \preceq (\cos \theta + k\eta)I $ (diagonal dominance).  Thus we conclude that $B^T B$ has all its eigenvalues $\ge 1-\cos \theta-k\eta$.  Since $\theta \in (0, \pi/2)$, we can use the standard inequality $\cos \theta < 1-\theta^2/2$ to conclude that the eigenvalues are $\ge \theta^2/2 -k\eta$.  Now by our assumption on $\rho$, we have that $k \eta < \theta^2/4$.  This implies that all the eigenvalues are $\ge \theta^2/4$.

Thus we have $\sigma_{2k}^2 (B) \ge \theta^2/4$.  We prove now that this lets us obtain a decomposition that helps us prove $(\gamma, k)$-orthogonality. A crucial observation is the following.
\begin{lemma}\label{claim:proj-length}
The projection of $X_i$ onto $\spn{X_1, Y_1, X_2, Y_2, \dots, X_{i-1}, Y_{i-1}}$ has length at most $\frac{2\eta \sqrt{2k}}{\theta}$.  
\end{lemma}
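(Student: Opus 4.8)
The plan is to bound the length of the projection by combining two facts: the vector $X_i$ has tiny inner product with each of the spanning vectors $X_j, Y_j$ for $j<i$ (by \eqref{eq:max-correlation}), and any vector in their span can be expressed with well-controlled coefficients, because the relevant singular value is bounded below by $\theta/2$. Write $P$ for the orthogonal projector onto $V := \spn{X_1, Y_1, \dots, X_{i-1}, Y_{i-1}}$; the goal is to show $\norm{P X_i} \le 2\eta\sqrt{2k}/\theta$, where $\eta = 4\rho\sqrt{\log(k/\delta)}$. If $PX_i = 0$ there is nothing to prove, so assume otherwise.

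First I would control the coefficients. Let $B'$ be the submatrix of $B$ whose columns are $X_1, Y_1, \dots, X_{i-1}, Y_{i-1}$. Since deleting columns cannot decrease the smallest singular value (any unit combination of a subset of columns is a unit combination of all of them, extended by zeros), we have $\sigma_{2(i-1)}(B') \ge \sigma_{2k}(B) \ge \theta/2$. Applying Lemma~\ref{claim:small-coeffs} to the unit vector $PX_i/\norm{PX_i}$ in the column span of $B'$, with $\tau = \theta/2$, I can write $PX_i = B'\gamma$ for a coefficient vector $\gamma$ with $\norm{\gamma} \le \frac{2}{\theta}\norm{PX_i}$.

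Next I would exploit the defining property of the projection. Because $X_i - PX_i$ is orthogonal to $V$ while $PX_i \in V$, the self-projection identity gives $\norm{P X_i}^2 = \iprod{X_i, P X_i} = \gamma^T b$, where $b := B'^T X_i$ is the vector whose entries are the cross inner products $\iprod{X_i, X_j}$ and $\iprod{X_i, Y_j}$ for $j < i$. By \eqref{eq:max-correlation}, each of these fewer-than-$2k$ entries has magnitude below $\eta$, so $\norm{b} < \eta\sqrt{2k}$. Cauchy--Schwarz then yields $\norm{PX_i}^2 \le \norm{\gamma}\,\norm{b} \le \frac{2}{\theta}\norm{PX_i}\cdot \eta\sqrt{2k}$, and dividing through by $\norm{PX_i}$ gives exactly $\norm{PX_i} \le 2\eta\sqrt{2k}/\theta$.

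In terms of difficulty, I do not expect a genuine obstacle: the heavy lifting, namely the lower bound $\sigma_{2k}(B) \ge \theta/2$, has already been established above, and the two ingredients used here (monotonicity of the smallest singular value under column removal, and the identity $\norm{PX_i}^2 = \iprod{X_i, PX_i}$) are routine linear algebra. The only point that requires mild care is to apply the coefficient bound to the submatrix $B'$ rather than to $B$ itself, so that the span matches $V$ exactly; this is what makes the cross inner products in $b$ all be genuinely off-diagonal and hence bounded by $\eta$.
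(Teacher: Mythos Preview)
Your proof is correct and follows essentially the same approach as the paper: bound the smallest singular value of the submatrix $B'$ via $\sigma_{2(i-1)}(B') \ge \sigma_{2k}(B) \ge \theta/2$, invoke Lemma~\ref{claim:small-coeffs} to control coefficients, and finish with Cauchy--Schwarz against the vector of cross inner products, each bounded by $\eta$. The only cosmetic difference is that the paper bounds $\iprod{y, X_i}$ for an arbitrary unit $y$ in the span (the variational characterization of projection length), whereas you instantiate $y = PX_i/\norm{PX_i}$ directly via the identity $\norm{PX_i}^2 = \iprod{X_i, PX_i}$; the computations are otherwise identical.
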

\begin{proof}
Let $\calS$ denote $\spn{X_1, Y_1, \dots, X_{i-1}, Y_{i-1}}$.  By definition, the squared of the length of projection is equal to $\max\{ \iprod{y, X_i}^2 ~|~ y \in \calS \text{ and }  \norm{y}_2=1 \}$ (this is how the projection onto a subspace can be defined).

To bound this, consider any unit vector $y \in \calS$, and suppose we write it as $\sum_{j < i} \alpha_j X_j + \beta_j Y_j$.  Let $B'$ be the matrix that has columns $X_j, Y_j$, $j < i$.  Then it is straightforward to see that $\sigma_{2(i-1)}(B') \ge \sigma_{2k} (B) \ge \theta/2$.  Thus Claim~\ref{claim:small-coeffs} implies that $\sum_{j <i} \alpha_j^2 + \beta_j^2 \le 4/\theta^2$.  This means that
\begin{align}
\iprod{X_i, y}^2 &= \big( \sum_{j < i} \alpha_j \iprod{X_j, X_i} + \beta_j \iprod{Y_j, X_i} \big)^2 \\
&\le \big( \sum_{j < i} \alpha_j^2 + \beta_j^2 \big) \big( \sum_{j < i} \iprod{X_j, X_i}^2 + \iprod{Y_j, X_i}^2  \big) \\
&\le \frac{4}{\theta^2} \cdot (2i-2)\eta^2.
\end{align}
(In the first step, we used Cauchy-Schwartz.) Taking square roots now gives the claim.
\end{proof}

Now we perform the following procedure on the vectors (it is essentially Gram-Schmidt orthonormalization, with the slight twist that we deal with $X_i, Y_i$ together):
\begin{enumerate}
\item Initialize:  $\bu_1 = X_1, \quad  \be_1 = 0, \quad \bv_1 = Y_1, \quad \Bf_1 = 0.$
\item For $i = 2, \dots, k$, we set $\bu_i, \bv_i$ to be the projections of $X_i, Y_i$ (respectively) orthogonal to $\spn{X_1, Y_1, \dots, X_{i-1}, Y_{i-1}}$.  Set $\be_i = X_i - \bu_i$ and $\Bf_i = Y_i - \bv_i$. 
\end{enumerate}

The important observation is that for any $i$, we have
\[ \spn{\bu_j, \bv_j : j<i } = \spn{\bu_j, \bv_j, \be_j, \Bf_j: j<i} = \spn{X_j, Y_j:j<i}.\]
This is because by definition, $\be_i, \Bf_i \in \spn{X_j, Y_j : j<i}$ for all $i$.  Thus we have that $\bu_i$ and $\bv_i$ satisfy the first condition in Definition~\ref{def:orthogonality}.  It just remains to analyze the lengths.  Now we can use Claim~\ref{claim:proj-length} to conclude that 
\[ \norm{\be_i}_2^2, \norm{\Bf_i}_2^2 < \frac{8k\eta^2}{\theta^2}  = \frac{128 \cdot k\rho^2 \log (k/\delta)}{\theta^2}. \]
Once again, we use the bound on $\rho$ to conclude that this quantity is at most $16\rho$.  This  completes the proof of  Lemma~\ref{lem:jl-orthogonality}, with $\gamma = 4\sqrt{\rho}$.  \qed

\subsection{Proof of Lemma \ref{lem:orthog-to-thm}}\label{app:lem:orthog-to-thm}
We start with a simple claim about the angle between $\bu_i$ and $\bv_i$. 
\begin{lemma}\label{claim:angle-bound}
For all $i$, we have $\angle (\bu_i, \bv_i) \in (\theta - \pi\gamma, \theta + \pi\gamma)$.
\end{lemma}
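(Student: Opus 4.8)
The plan is to view $\bu_i, \bv_i$ as small perturbations of the unit vectors $X_i, Y_i$ and to show that the angle between them changes by at most $\pi\gamma$. The key structural fact I would use is that the angle between directions is a genuine metric on the unit sphere (the geodesic distance), and hence obeys the triangle inequality. Granting this, the statement reduces to bounding the two ``perturbation angles'' $\angle(X_i, \bu_i)$ and $\angle(Y_i, \bv_i)$ separately.

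First I would bound $\angle(X_i, \bu_i)$. Writing $X_i = \bu_i + \be_i$ with $\norm{\be_i} < \gamma$ and $\norm{X_i} = 1$, the perpendicular distance from $X_i$ to the line $\spn{\bu_i}$ is at most $\norm{X_i - \bu_i} = \norm{\be_i}$, since the orthogonal projection onto that line is the closest point on it. As this distance equals $\norm{X_i}\sin\angle(X_i,\bu_i) = \sin\angle(X_i,\bu_i)$, we get $\sin\angle(X_i,\bu_i) \le \norm{\be_i} < \gamma$. The angle is acute because $\iprod{X_i, \bu_i} = 1 - \iprod{X_i, \be_i} \ge 1 - \norm{\be_i} > 0$ (using $\gamma < 1$), so $\angle(X_i,\bu_i) \le \arcsin(\norm{\be_i}) \le \frac{\pi}{2}\norm{\be_i} < \frac{\pi}{2}\gamma$, invoking the elementary inequality $\arcsin(x) \le \frac{\pi}{2}x$ on $[0,1]$ (valid since $\arcsin$ is convex there and the two sides agree at $x=0$ and $x=1$). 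The identical argument applied to $Y_i = \bv_i + \Bf_i$ gives $\angle(Y_i,\bv_i) < \frac{\pi}{2}\gamma$.

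Then I would assemble the bound via the triangle inequality for angles applied to the chain $\bu_i, X_i, Y_i, \bv_i$:
\[ \left| \angle(\bu_i, \bv_i) - \angle(X_i, Y_i) \right| \le \angle(\bu_i, X_i) + \angle(Y_i, \bv_i) < \tfrac{\pi}{2}\gamma + \tfrac{\pi}{2}\gamma = \pi\gamma. \]
Since $\angle(X_i, Y_i) = \theta$ by hypothesis, this yields $\angle(\bu_i, \bv_i) \in (\theta - \pi\gamma, \theta + \pi\gamma)$, as required.

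I do not anticipate a real obstacle here: the lemma is essentially a continuity statement for the angle function under $O(\gamma)$-perturbations. The only point needing a little care is confirming that the small value of $\sin\angle(X_i, \bu_i)$ forces the angle to be near $0$ rather than near $\pi$; this is exactly what the sign computation $\iprod{X_i, \bu_i} > 0$ handles, and it is where the (harmless) assumption $\gamma < 1$ enters.
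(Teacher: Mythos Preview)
Your proof is correct and follows essentially the same route as the paper: bound each of $\angle(X_i,\bu_i)$ and $\angle(Y_i,\bv_i)$ by $\sin^{-1}(\gamma)\le(\pi/2)\gamma$, then invoke the triangle inequality for geodesic distance on the sphere. You simply supply more detail (the perpendicular-distance argument and the acuteness check via $\iprod{X_i,\bu_i}>0$) than the paper's two-line proof, which takes these for granted.
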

\begin{proof}
The angle between $X_i$ and $\bu_i$ is at most $\sin^{-1} (\gamma) < (\pi/2)\gamma$. So also, the angle between $Y_i$ and $\bv_i$ is at most $(\pi/2)\gamma$.  Thus the angle between $\bu_i, \bv_i$ is in the interval $(\theta - \pi\gamma, \theta+\pi\gamma)$ (by triangle inequality for the geodesic distance).
\end{proof}

Let $\eta > 0$ be a parameter we will fix later (it will be a constant times $\gamma \sqrt{\log (k/\delta)}$). For all $i$, we define the following events:
\begin{align}
E_i &: ~~ \min \{ \iprod{ \br, \bu_i }, \iprod{\br, \bv_i} \} < \eta \\
F_i &: ~~  \neg E_i \text{ and } \sign{ \iprod{\br, \bu_i} } \ne \sign{ \iprod{\br, \bv_i} }
\end{align}

The following claim now follows easily.
\begin{lemma}
For any $i$, we have
\begin{align}
\Pr[ E_i ] &\le 2\eta \label{eq:prob-ei},\\
\Pr[ F_i] &\in \left( \frac{\theta}{\pi} - \pi \gamma - 2\eta, \frac{\theta}{\pi} \right) \label{eq:prob-fi}
\end{align}
\end{lemma}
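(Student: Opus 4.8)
The plan is to prove the two displayed bounds separately, using Gaussian anti-concentration for~\eqref{eq:prob-ei} and the SimHash sign-disagreement identity together with Claim~\ref{claim:angle-bound} for~\eqref{eq:prob-fi}. Throughout I would keep in mind that $\bu_i, \bv_i$ come from the $(\gamma,k)$-decomposition $X_i = \bu_i + \be_i$, $Y_i = \bv_i + \Bf_i$ with $\norm{\be_i}, \norm{\Bf_i} < \gamma$; since $X_i, Y_i$ are unit and the decomposition is orthogonal, $\norm{\bu_i}^2 = 1 - \norm{\be_i}^2 \ge 1 - \gamma^2$, so $\bu_i, \bv_i$ are essentially unit and in particular nonzero (so $\angle(\bu_i,\bv_i)$ makes sense). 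I read $E_i$ as the event that one of the two projections $\iprod{\br,\bu_i}, \iprod{\br,\bv_i}$ is smaller than $\eta$ in magnitude, since that is what makes the bound $\le 2\eta$ meaningful.

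For~\eqref{eq:prob-ei}, the point is that $\br \sim \calN(0,1)^d$, so $\iprod{\br,\bu_i}$ is a centered Gaussian of variance $\norm{\bu_i}^2 \le 1$, whence $\iprod{\br,\bu_i}/\norm{\bu_i}$ is standard normal. Gaussian anti-concentration gives $\Pr[\,|\iprod{\br,\bu_i}| < \eta\,] = \Pr[\,|Z| < \eta/\norm{\bu_i}\,] \le \sqrt{2/\pi}\cdot \eta/\norm{\bu_i} \le \eta$, where the last step uses $\norm{\bu_i} \ge \sqrt{1-\gamma^2}$ and $\sqrt{2/\pi} \approx 0.8$ (valid for the small $\gamma$ at hand). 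The identical estimate holds for $\bv_i$, and since $E_i$ is the union of the two events $\{|\iprod{\br,\bu_i}|<\eta\}$ and $\{|\iprod{\br,\bv_i}|<\eta\}$, a union bound yields $\Pr[E_i] \le 2\eta$.

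For~\eqref{eq:prob-fi}, I would start from the SimHash identity: for Gaussian $\br$ and any nonzero $\bu_i, \bv_i$, $\Pr[\sign\iprod{\br,\bu_i} \ne \sign\iprod{\br,\bv_i}] = \angle(\bu_i,\bv_i)/\pi$ (the scale-invariant fact from Section~\ref{sec:intro}, in its disagreement form). Since $F_i$ is exactly this sign-disagreement event intersected with $\neg E_i$, the upper bound is immediate, $\Pr[F_i] \le \angle(\bu_i,\bv_i)/\pi$, and the lower bound follows from $\Pr[F_i] \ge \angle(\bu_i,\bv_i)/\pi - \Pr[E_i] \ge \angle(\bu_i,\bv_i)/\pi - 2\eta$ by~\eqref{eq:prob-ei}. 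Invoking Claim~\ref{claim:angle-bound}, which gives $\angle(\bu_i,\bv_i) \in (\theta - \pi\gamma,\ \theta+\pi\gamma)$, and dividing by $\pi$ turns both endpoints into $\theta/\pi \pm \gamma$, producing an interval $\big(\tfrac{\theta}{\pi} - \gamma - 2\eta,\ \tfrac{\theta}{\pi} + \gamma\big)$. Its lower endpoint is contained in the stated one (since $\pi\gamma > \gamma$).

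The anti-concentration estimate is routine. The main obstacle is the upper bound on $\Pr[F_i]$: the natural argument controls it by the disagreement probability of $\bu_i,\bv_i$, whose angle can exceed $\theta$ by up to $\pi\gamma$, so the clean value I obtain is $\theta/\pi + \gamma$ rather than exactly $\theta/\pi$; the stated $\theta/\pi$ drops this lower-order $+\gamma$ slack, which is harmless because it is absorbed into the $O(\gamma\log(k/\delta))$ error budget of Lemma~\ref{lem:orthog-to-thm}. If a strictly one-sided bound were needed, the alternative would be to relate $F_i$ directly to the disagreement event of $X_i, Y_i$ (whose probability is exactly $\theta/\pi$ since $\angle(X_i,Y_i)=\theta$), but this requires the extra step of bounding $|\iprod{\br,\be_i}|, |\iprod{\br,\Bf_i}|$ on the event $\neg E_i$, so I would pursue it only if the $\gamma$ term caused trouble downstream.
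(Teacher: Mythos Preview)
Your approach is essentially the same as the paper's: Gaussian small-ball probability plus a union bound for~\eqref{eq:prob-ei}, and the SimHash sign-disagreement identity combined with Claim~\ref{claim:angle-bound} and~\eqref{eq:prob-ei} for~\eqref{eq:prob-fi}. You are in fact more careful than the paper in two places: you track that $\norm{\bu_i}^2 = 1-\norm{\be_i}^2$ rather than exactly $1$, and you correctly observe that the natural argument yields an upper bound of $\theta/\pi + \gamma$ rather than the stated $\theta/\pi$; your remark that this extra $\gamma$ is absorbed downstream (since $\eta$ is chosen larger than $\pi\gamma$) is exactly right.
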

\begin{proof}
The first inequality follows from the small ball probability of a univariate Gaussian (since $\iprod{\br, \bu_i}$ is a Gaussian of unit variance), and the second follows from Claim~\ref{claim:angle-bound} and \eqref{eq:prob-ei}.
\end{proof}

We will set $\eta$ to be larger than $\pi \gamma$, so the RHS in~\eqref{eq:prob-fi} can be replaced with $(\nicefrac{\theta}{\pi} - 3\eta, \nicefrac{\theta}{\pi})$.  Furthermore, the events above for a given $i$ depend {\em only} on the projection of $\br$ to $\spn{\bu_i, \bv_i}$; thus they are independent for different $i$.  Let us abuse notation slightly and denote by $E_i$ also the indicator random variable for the event $E_i$ (so also $F_i$).  Then by standard Chernoff bounds, we have for any $\tau >0$,
\begin{align}
\Pr \left[ \sum_i E_i \ge 2k\eta + k\tau \right] < e^{-\frac{k\tau^2}{4\eta + \tau}}  \label{eq:prob-ei-sum}, \\
\Pr \left[ \sum_i F_i \not\in \left( \frac{k\theta}{\pi} - 3k\eta - k\tau, \frac{k\theta}{\pi} + k\tau \right) \right] < 2e^{-\frac{k\tau^2}{\theta + \tau} } \label{eq:prob-fi-sum}.
\end{align}
Finally let $H$ denote the event: 
\[ \max_i \{ |\iprod{\br, \be_i}|, |\iprod{\br, \Bf_i}| \} \ge \eta. \]
For any $i$, since $\norm{\be_i} < \gamma$, we have $\Pr[ |\iprod{\br, \be_i} | > t\gamma] \le e^{-t^2/2}$.  We can use the same bound with $\Bf_i$, and take a union bound over all $i$, to conclude that $\Pr[ H ] \le 2k \cdot e^{-\eta^2/2\gamma^2}$. 

Let us call a choice of $\br$ {\em good} if neither of the events in \eqref{eq:prob-ei-sum}-\eqref{eq:prob-fi-sum}  above occur, and additionally $H$ does not occur.  Clearly, the probability of an $\br$ being good is at least $1- \delta$, provided $\tau$ and $\eta$ are chosen such that the RHS of the tail bounds above are all made $\le \delta/4$.  

Before setting these values, we note  that for a good $\br$,
\[ \frac{1}{k} \sum_{i} \mathbf{1}\{ \sign{\iprod{\br, X_i}} \ne \sign{\iprod{\br, Y_i}} \} \in \left( \frac{\theta}{\pi} - 3\eta - \tau, \frac{\theta}{\pi} + 2\eta + 2\tau \right). \]
This is because whenever $F_i \wedge \neg H$ occurs, we have $\sign{\iprod{\br, X_i}} \ne \sign{\iprod{\br, Y_i}}$, and thus the LHS above is at least $\frac{\theta}{\pi} - 3\eta - \tau$. Also if we have $\neg H$, then the only way we can have $\sign{\iprod{\br, X_i}} \ne \sign{\iprod{\br, Y_i}}$ is if either $F_i$ occurs, or if $E_i$ occurs (in the latter case, it is not necessary that the signs are unequal).  Thus we can upper bound the LHS by $\frac{\theta}{\pi} +2\eta + 2\tau$.

Let us now set the values of $\eta$ and $\tau$.  From the above, we need to ensure:
\begin{equation}
\frac{k\tau^2}{4\eta + \tau} \ge \log (4/\delta), \quad  \frac{k\tau^2}{\theta+\tau} \ge \log(8/\delta), \quad \text{and} \quad \frac{\eta^2}{2\gamma^2} \ge \log (4k/\delta).
\end{equation}
Thus we set $\eta = 2\gamma \sqrt{ \log (4k/\delta)}$, and 
\[ \tau \ge \max \left\{ \frac{2\log (8/\delta)}{k},  \sqrt{\frac{2 \theta \log(8/\delta)}{k}} , \sqrt{ \frac{8\eta \log(4/\delta)}{k} }  \right\}. \]
For the above inequality to hold, it suffices to set
\[ \tau \ge \frac{8\log(1/\delta)}{\sqrt{k}}. \] 
This gives the desired bound on the deviation in the angle.


\end{document}